\newcommand\shortitle{Area Law for the Entanglement Entropy, Nonrandom Ergodic Field}
\numberwithin{equation}{section}
\def\R{\mathbb{R}}
\def\Z{\mathbb{Z}}
\def\bm{\begin{matrix}}
\def\em{\end{matrix}}
\theoremstyle{plain}
\newtheorem{theorem}{Theorem}
\theoremstyle{plain}
\newtheorem{proposition}{Proposition}[section]
\newtheorem{defin}[proposition]{Definition}
\newtheorem{definition}[proposition]{Definition}
\newtheorem{lemma}[proposition]{Lemma}
\newtheorem{rmk}[proposition]{Remark}
\newtheorem*{claim*}{Claim}
\newtheorem{criterion}{Criterion}
\newtheoremstyle{named}{}{}{\itshape}
{}{\bfseries}{.}{.5em}{\thmnote{#3}#1}
\theoremstyle{named}
\begin{document}

\title{Area Law for the entanglement entropy \\
of free fermions \\
in nonrandom ergodic field}
\author{Leonid Pastur}
\address{Department of mathematics, King's College London, Strand, London
WC2R 2LS}
\email{leonid.pastur@kcl.ac.uk}
\address{Theoretical Department, B.Verkin Institute for Low Temperature
Physics and Engineering, 47 Nauky Avenue, Kharkiv, 61103, Ukraine}
\email{pastur@ilt.kharkov.ua }
\author{Mira Shamis}
\address{School of Mathematical Sciences, Queen Mary University of London,
Mile End Road, London E1 4NS, England}
\email{m.shamis@qmul.ac.uk}
\address{School of Mathematical Sciences, Holon Institute of Technology, 52 Golomb St.,
P.O.B. 305, Holon 5810201, Israel}
\email{miras@hit.ac.il}
\date{\today}

\begin{abstract}

This paper deals with the asymptotic behaviour of a widely used correlation characteristic in large quantum systems.
The correlations are known as quantum entanglement, the characteristic is called entanglement entropy, and the system is an ideal gas of spinless lattice fermions. The system is determined by its one-body Hamiltonian. It is shown in EPS [18] that if the Hamiltonian is an ergodic finite difference operator with an exponentially decaying spectral projection, then the asymptotic form of the entanglement entropy is the so-called area law. However, the only class of one-body Hamiltonians for which this spectral condition was verified is that consisting of discrete Schrödinger operators with random potential. In this paper, we prove the validity of the area law for several classes of Schrödinger operators whose potentials are ergodic but not random. We begin with quasiperiodic and limit-periodic operators and then move on to the interesting and highly non-trivial case of potentials generated by subshifts of finite type. These arose in the theory of dynamical systems when studying non-random chaotic phenomena. The corresponding asymptotic study requires quite an involved spectral analysis. Consequently, the majority of the paper is devoted to the proof and application of a variety of spectral properties of the operators in question, in particular we prove uniform localisation of the ejgenfunctions for the Maryland model and the exponential decay of the eihgenfunction correlator for a variety of models . We believe that these properties are of considerable independent interest.

\end{abstract}

\maketitle



\section{Introduction}

Quantum entanglement is one of the basic properties of quantum systems.
Introduced by Einstein, Rosen, and Podolsky in 1935 to prove the
incompleteness of quantum mechanics and immediately identified by
Schr\"odinger as an important form of quantum correlations, quantum
entanglement is now the subject of active research in both fundamental and
applied fields, see e.g. review works \cite{Al-Co:21, Ca-Co:11, Ca-Hu:09, Du:22,
La:15, Sc:19, Wi:18} and references therein.

One of the widely used quantitative characteristics of quantum entanglement
between two subsystems of a quantum system is the entanglement entropy. Much
its relevant studies, which are also of importance for quantum statistical
mechanics, quantum gravity, and quantum computing deal with the analysis of
the asymptotic behavior of the entanglement entropy of a sufficiently large
(mesoscopic) subsystem confined to a finite block $\Lambda $ of a
macroscopically large quantum system occupying the whole $\mathbb{Z}^{d}$ or 
$\mathbb{R}^{d}$.

As a result of numerous theoretical, experimental, and numerical works of
recent decades, it was found on various level of rigor that the following
two basic asymptotic forms of the entanglement entropy $S_{\Lambda }$ of a
block $\Lambda $ of linear size $L$ are valid for a wide class of quantum
systems. 

\smallskip - Area Law:
\begin{equation}
S_{\Lambda }=C^{\prime }L^{d-1}(1+o(1)),\;L\rightarrow \infty,  \label{eq:al}
\end{equation}
if the system ground state is not critical (no quantum phase transition)
or/and if there is a spectral gap between the ground state and the rest of
the spectrum;

\smallskip - Enhanced Area Law: 
\begin{equation}
S_{\Lambda}=C^{\prime \prime }L^{d-1}\log L(1+o(1)),\;L\rightarrow \infty,
\label{eq:eal}
\end{equation}
if the system ground state is critical (a quantum phase transition is the
case).

For the sake of completeness, we will also mention one more asymptotic form
of entropy, which is now known as the

\smallskip - Volume law 
\begin{equation*} 
S_{\Lambda }=C^{\prime \prime\prime }L^{d}(1+o(1)),\;L\rightarrow \infty,
\end{equation*}
which dates back to the origin of quantum statistical mechanics and is the
case if the system is either in a mixed state, say, the Gibbs state of
non-zero temperature, or in a pure but sufficiently highly excited state.

The rigorous proof of these asymptotic formulae, especially $\eqref{eq:al}$
and $\eqref{eq:eal}$, proved to be quite nontrivial in the general case of
multidimensional interacting quantum systems. This is why considerable
attention has been paid to a simple yet nontrivial system of free fermions.
The system, known since the late $1920$s as a fairly adequate model of
metals, has a number of properties that make it quite interesting in modern
physics and related fields. The system is completely characterized by its
one-body Hamiltonian $H$, a selfadjoint operator in $\ell^2(\mathbb{Z}^{d})$
or $L^2 (\mathbb{R}^{d})$.

We will confine ourselves to the lattice case, i.e., to $\mathbb{Z}^{d}$ as
the space where the system lives. 
Then the one-body Hamiltonian is a selfadjoint operator in $\ell ^{2}(%
\mathbb{Z}^{d})$ 
\begin{equation}
H=\{H(\mathbf{m},\mathbf{n})\}_{\mathbf{m},\mathbf{n}\in \mathbb{Z}^{d}},\;%
\overline{H(\mathbf{m},\mathbf{n})}=H(\mathbf{n},\mathbf{m}),  \label{eq:H}
\end{equation}%
and the one of the most important cases is where $H$ is a discrete Schr\"{o}%
dinger operator 
\begin{equation}
\begin{split}
(H\psi )(\mathbf{n})& =(\Delta \psi )(\mathbf{n})+(V\psi )(\mathbf{n}) \\
& =\sum_{|\mathbf{n}-\mathbf{m}|=1}\psi (\mathbf{m})+V(\mathbf{n})\psi (%
\mathbf{n}),\;\mathbf{n, m}\in \mathbb{Z}^{d},
\end{split}
\label{eq:schrodinger}
\end{equation}%
which is an archetype model of the field.

Denote by $\sigma (H)$ the spectrum of $H$ and by $\varepsilon _{-}$
and $\varepsilon _{+}$ the extreme endpoints of $\sigma (H)$, so that 
\begin{equation}
\sigma (H)\subset \lbrack \varepsilon _{-},\varepsilon _{+}]=:I(H).
\label{eq:kap}
\end{equation}

Let $\mathcal{E}_{H}(d\lambda )$ be the resolution of identity of $H$.
Introduce the Fermi projection 
\begin{equation}
P(\varepsilon _{F})=\mathcal{E}_{H}((\varepsilon _{-},\varepsilon _{F}]) =
\chi_{(\varepsilon_{-}, \varepsilon_F]}(H),  \label{eq:FP}
\end{equation}%
where $\varepsilon _{F}$ is a parameter of the system, called the Fermi
energy (the ground state energy of free fermions) {and $\chi_{(\cdot, \cdot]}(\cdot)$ is the indicator function}.

Then the entanglement entropy of the finite block $\Lambda \subset \mathbb{Z}%
^{d}$ of free fermions is \cite{Ei-Co:11,Pe-Ei:09} 
\begin{equation}
S_{\Lambda }(\varepsilon _{F})=\mathrm{Tr}_{\Lambda }h(P_{\Lambda
}(\varepsilon _{F})),  \label{eq:ee}
\end{equation}%
where $\mathrm{Tr}_{\Lambda }$ denotes the (restricted) trace in $\ell
^{2}(\Lambda )$, 
\begin{equation}
P_{\Lambda }(\varepsilon _{F})=\chi _{\Lambda }P(\varepsilon _{F})\chi
_{\Lambda }\equiv P(\varepsilon _{F})|_{\Lambda }  \label{eq:pla}
\end{equation}%
is the restriction of $P(\varepsilon _{F})$ of $\eqref{eq:FP}$ to $%
\ell ^{2}(\Lambda )$, 
\begin{equation}
\chi _{\Lambda }:\ell ^{2}(\mathbb{Z}^{d})\rightarrow \ell ^{2}(\Lambda )
\label{eq:chi}
\end{equation}%
is the coordinate projection, and 
\begin{equation}
h(x)=-x\log x-(1-x)\log (1-x),\;x\in \lbrack 0,1]  \label{eq:h}
\end{equation}%
is the Shannon binary entropy.

Note that if $\varepsilon_F\notin I(H)$ of $\eqref{eq:kap}$, then $%
S_\Lambda(\varepsilon_F) = 0$. Thus, from now forward, we are assuming that 
\begin{equation}  \label{eq:eI}
\varepsilon_F\in I(H),
\end{equation}
i.e., $\varepsilon_F$ belongs either to the spectrum of $H$ or to its
internal gap.

To expect the regular asymptotic behavior of the entanglement entropy, one
has to assume a certain ``homogeneity" of the one-body Hamiltonian. This is
why we will consider the class of the so-called ergodic operators that
possess this important property. The class is defined as follows.

Let $(\Omega ,\mathcal{F},\mathbf{P})$ be a probability space equipped with
a measure-preserving $d$-dimensional group of ergodic transformations $\{T^{%
\mathbf{n}}:\Omega \rightarrow \Omega \}_{\mathbf{n}\in \mathbb{Z}^{d}}$.
Then a $d$-dimensional ergodic operator acting in $\ell ^{2}(\mathbb{Z}^{d})$
is an operator-valued random variable assuming values in selfadjoint
operators and such that (see \eqref{eq:H}) 
\begin{equation}
H_{\omega }=\{H(\omega ,\mathbf{m},\mathbf{n})\}_{\mathbf{m},\mathbf{n}\in 
\mathbb{Z}^{d}},\;H(T^{\mathbf{k}}\omega ,\mathbf{m},\mathbf{n})=H(\omega ,%
\mathbf{m}+\mathbf{k},\mathbf{n}+\mathbf{k}),  \label{eq:herg}
\end{equation}%
see e.g. \cite{AW,Pa-Fi:92} for spectral theory of this class of operators.

The simplest but quite important subclass of ergodic operators in $\ell ^{2}(%
\mathbb{Z}^{d})$ are discrete convolution operators, where $\Omega
=\{\emptyset \}$ and 
\begin{equation}
H=\{H(\mathbf{m}-\mathbf{n})\}_{\mathbf{m},\mathbf{n}\in \mathbb{Z}%
^{d}},\sum_{\mathbf{m}{\in \mathbb{Z}^{d}}}|H(\mathbf{m})|<\infty .
\label{eq:con}
\end{equation}%
It was shown in a series of works (see \cite{Le-Co:14, Le-Co:15, Le-Co:17,
Pf-So:18, So:17} and references therein), that for a wide class of
translation invariant pseudodifferential operators (including \eqref{eq:con}%
) the corresponding entanglement entropy obeys the 
{Enhanced Area Law $\eqref{eq:eal}$ if the Fermi energy $\varepsilon _{F}$
of $\eqref{eq:FP}-\eqref{eq:pla}$ belongs to the (absolutely continuous)
spectrum $\sigma (H)=\sigma _{ac}(H)$ of $H$. 
}

This should be compared with the \textquotedblleft opposite" case of random
ergodic operators, where $\Omega =\mathbb{R}^{\mathbb{Z}}$, (e.g. the
discrete Schr\"{o}dinger operators with an independent identically
distributed (i.i.d.) potential \cite{AW,Pa-Fi:92}), where the Area Law \eqref{eq:al} for
the expectation $\mathbf{E}\{S_{\Lambda}(\varepsilon_F)\}$ and even for the
overwhelming majority of realization holds for $\varepsilon _{F}$ belonging
to a certain part of the pure point spectrum $\sigma _{pp}(H) \subset
\sigma(H)$, which is due to the so-called \textquotedblleft strong" Anderson
localisation \cite{EPS,Pa-Sl:18}. Recall that the spectrum of an ergodic
operator and its all components are deterministic, i.e., are independent of $%
\omega$ with probability 1.

An interesting observation that follows from the above is that the
asymptotic behavior of the entanglement entropy is closely related to the
spectral type of the corresponding one-body operator. Namely, for
convolution operators (translation invariant case), where the spectrum is
purely absolutely continuous and the generalized eigenfunctions are plane
waves, we have the Enhanced Area Law $\eqref{eq:eal}$, while for random
operators (disordered case), where the spectrum is pure point and
eigenfunctions are exponentially decaying, we have the Area Law $\eqref{eq:al}$. Here it is appropriate to recall a result of \cite{Br-Ho:15}%
, according to which the exponential decay of correlation functions in
one-dimensional quantum systems implies the Area Law for the entanglement
entropy of their states {
under certain additional conditions. However, this observation cannot be true as formulated, since, 
for example, there exist random Schr\"odinger operators with a purely point spectrum that exhibit the Enhanced Area Law \eqref{eq:eal} for isolated 
values of the Fermi energy \cite{Mu-Co:20}. This is because the asymptotic form of the entanglement entropy is determined by 
the behaviour at $\varepsilon_F$ (and, possibly, in $\Lambda$-infinitesimal neighbourhoods of $\varepsilon_F$) of the entries $\{P({\mathbf{m,n}})\}_{\mathbf{m,n} \in \mathbb{Z^d}}$ of the Fermi projection 
as $|\mathbf{m-n}| \to \infty$, 
while the spectral type is determined by the spatial behaviour of the matrix
on $\Lambda$-independent intervals (see, e.g. \eqref{eq:FP1-con}, \eqref{eq:exp-decay}).  A similar ``sensitivity" is also known for some transport characteristics of 
corresponding disordered systems with the same single-particle Hamiltonian 
(moments of the position operator, d.c. conductivity, etc.), see \cite{Go-Me:78} and \cite{Ji-Sc:07} and references therein.

Nevertheless, the} observation allows us to expect that the Area Law holds for the classes
of ergodic operators other than random ones, provided that they also exhibit sufficiently
``strong" Anderson localisation. In particular, these are finite difference
operators with quasi-periodic potentials, defined on an orbit of an
irrational winding on the torus, or with potentials that are functions
defined on an orbit of a more complex finite-dimensional dynamical systems,
including the doubling map and the famous Arnold's cat map \cite{Br-St:02,
Adler}.

The goal of this paper is to prove the Area Law for these classes of ergodic
operators. The paper is organised as follows. In Section $\ref{sec:res_comm}$
we present our results on the validity of the Area Law for
free lattice fermions whose one-body Hamiltonian is a finite-difference
operator in $\ell ^{2}(\mathbb{Z}^{d})$ with dynamically generated
coefficients. Specifically, these are self-adjoint ergodic operators that are
the sums of a 
convolution operator \eqref{eq:con} (the discrete Laplacian in the case of
Schr\"{o}dinger operator \eqref{eq:schrodinger}) and a dynamically generated
potential, 
see \eqref{eq:lro} --
\eqref{eq:W} and \eqref{eq:ergodic-potential-1} below. Theorems~\ref{th:ap-d}
and ~\ref{th:ap-1d} treat multidimensional and one-dimensional
quasiperiodic and limit-periodic cases.
Theorem~\ref{subshit-ee} treats the case, where the one-body Hamiltonian is  the one-dimensional Schr\"{o}dinger operator with potentials generated by a subshift of finite type, a
complex dynamical system that is much more complicated than the irrational
shift on $\mathbb{T} = \R/\Z$, which generates quasi-periodic potentials.


%

All of these operators have a purely point component of spectrum. However, our proof, based on the results of \cite{EPS}, makes significant use of some form of exponential bounds on the spectral (Fermi) projection of the operator in question, i.e., a considerably stronger property. In certain cases this property is partially established. 
There are, however, several important cases where the existing spectral theory has to be considerably developed. The first is the case of the archetype Maryland model, both multi-dimensional and 
one-dimensional, where we  have to prove the property of uniform localisation of eigenfunctions, presented  in Theorem~\ref{th:ule-d-maryland}. The second important case is the Schr\"{o}dinger operator whose potentials  generated by the subshift of finite type. Here we have to prove the exponential decay of the eigenfunction correlator, presented in Theorem~\ref{th:subshift-ee}. This includes a useful criterion for the exponential decay of the eigenfunction correlator for ergodic bounded one-dimensional discrete Schr\"{o}dinger operators in terms of
control over the ``bad" sets of spectral parameters of the operator restriction to large but distant boxes, see Lemma~\ref{lem:bad-res}. We believe that these spectral results are of independent interest.

\section{Results and Comments}

\label{sec:res_comm} In this section we formulate our main results and make
certain comments on their meaning and proof.

All results correspond to the one-body operator (cf. (\ref{eq:H}))%
\begin{align}
& H_{\omega }=W+V_{\omega }=\{H(\omega ,\mathbf{m},\mathbf{n})\}_{\mathbf{m},%
\mathbf{n}\in \mathbb{Z}^{d}},  \notag \\
& \;H(\omega ,\mathbf{m},\mathbf{n})=W(\mathbf{m-n})+V(\omega ,\mathbf{n}%
)\delta (\mathbf{m-n}),  \label{eq:lro}
\end{align}%
where 
\begin{align}
& W(-\mathbf{n})=\overline{W(\mathbf{n})},\;|W(\mathbf{n})|\leq We^{-\rho |%
\mathbf{n}|},  \notag \\
& \;W<\infty ,\;\rho >0,\;|\mathbf{n}|=|n_{1}|+\cdots +|n_{d}|,  \label{eq:W}
\end{align}%
and its particular case -- Schr\"{o}dinger operator \eqref{eq:schrodinger},
where $W$ is the $d$-dimensional discrete Laplacian
\begin{equation}\label{eq:WL}
W(\mathbf{n})=\delta _{\mathbf{n}+\mathbf{e}_{j}}+\delta _{\mathbf{n}-%
\mathbf{e}_{j}},\;\mathbf{n}\in \mathbb{Z}^{d},\;j=1,\dots ,d,  
\end{equation}
and $\{\mathbf{e}_{j}\}_{j=1}^{d}$ is the canonical basis in $\mathbb{Z}^{d}$.

Since $H_{\omega }$ is ergodic (see \eqref{eq:H}), its potential $V_{\omega
}=\{V(\mathbf{n},\omega \}_{\mathbf{n}\in \mathbb{Z}^{d}}$ in \eqref{eq:lro} 
is an ergodic field in $\mathbb{Z}^{d}$. Namely, it is given by a measurable
(sample) function $v:\Omega \rightarrow \mathbb{R}$ on a probability space $%
(\Omega ,\mathcal{F},\mathbf{P})$ equipped with a measure preserving and
ergodic group of automorphisms $\{T^{\mathbf{n}}:\,\Omega \rightarrow \Omega
\}_{\mathbf{n}\in \mathbb{Z}^{d}}$, so that 
\begin{equation}
V_{\omega }=\{V(\omega ,\mathbf{n})\}_{\mathbf{n}\in \mathbb{Z}%
^{d}},\,V(\omega ,\mathbf{n})=v(T^{\mathbf{n}}\omega ),\,\mathbf{n}\in 
\mathbb{Z}^{d},\,\omega \in \Omega .  \label{eq:ergodic-potential-1}
\end{equation}%
We will also assume in what follows that the block in \eqref{eq:ee} -- %
\eqref{eq:chi} is a cube 
\begin{equation*}
\Lambda =[-M,M]^{d}\subset \mathbb{Z}^{d},\;2M+1=L. 
\end{equation*}

%
%


\subsection{Results}

\label{ssec:results}

We present here the results on the validity of the Area Law \eqref{eq:al}
for several interesting classes of ergodic operators. Recall that the only
class of these operators for which 
the Area Law has been rigorously established is of the Schr\"odinger
operators with independent identically distributed random potentials whose
probability law possesses a certain amount of smoothness \cite{EPS}.

\smallskip

We will begin with results valid in any dimension $d\geq 1$. {We denote by $\mathbb T^d = \R^d/ \Z^d$ the $d$-dimensional torus for any $d\geq 1$, and by $\langle \cdot, \cdot\rangle$ the scalar product in $\Z^d$.} 

\smallskip
\begin{theorem}
\label{th:ap-d} The expectation $\mathbf{E} \{S_\Lambda (\varepsilon_F)\}$
of the entanglement entropy \eqref{eq:ee} -- \eqref{eq:h} of free lattice
fermions obeys the Area Law \eqref{eq:al} for the following ergodic one-body
Hamiltonians acting in $\ell^2(\mathbb{Z}^d), \; d \ge 1 $:

\begin{itemize}
\item[(i)]\label{item-d-maryland} The multidimensional Maryland model: the operator \eqref{eq:lro}
-- \eqref{eq:W} with the potential 
\begin{equation}
V(\omega ,\mathbf{n})=g\,\tan \pi(\omega +\langle \mathbf{n},\mathbf{\alpha }
\rangle ),\,\mathbf{n}\in \mathbb{Z}^{d},\;\omega \in \mathbb{T} =\Omega
,\;\,g\neq 0,  \label{eq:maryland-d}
\end{equation}%
where $\mathbf{\alpha }\in \mathbb{T}^{d}$ satisfies the multidimensional
Diophantine condition 
\begin{equation}
||\langle \mathbf{n},\mathbf{\alpha }\rangle ||_{d}\geq c|\mathbf{n}|^{-\tau
},\;c>0,\tau >0,\;\mathbf{n}\in \mathbb{Z}^{d}\setminus \{0\},
\label{eq:Dob}
\end{equation}%
with $\Vert \langle \mathbf{n},\mathbf{\alpha }\rangle \Vert _{d}=\mathrm{%
dist}(\langle \mathbf{n},\mathbf{\alpha }\rangle ,\mathbb{Z})$ and the Fermi
energy $\varepsilon_F\in\R$
$($cf. Theorem \ref{th:ap-1d} $($i$)$ about the $1$-d case below$)$.

\medskip

\item[(ii)] The Schr\"{o}dinger operators $\eqref{eq:schrodinger}$ with the potential 
\begin{equation}
V(\omega ,\mathbf{n})=g\,v(\omega +\langle \mathbf{n},\alpha \rangle ),\,
\mathbf{n}\in \mathbb{Z}^{d},\;\omega \in \mathbb{T},  \label{eq:lpp-d}
\end{equation}
where $g>0$ is sufficiently large, $v:\mathbb{R}\rightarrow \mathbb{R}$ is $%
1 $-periodic function with the $\xi $-H\"{o}lder monotone property 
\begin{equation}
v(y)-v(x)\geq (y-x)^{\xi },\,\xi \geq 1,\,0\leq x\leq y<1,
\label{eq:monotone-d1}
\end{equation}
$\alpha \in \mathbb{T}^{d}$ satisfies a weak Diophantine condition $($cf. \eqref{eq:Dob} and \eqref{eq:DC_d}$)$
\begin{equation}
\Omega _{\rho ,\mu }=\left\{ \alpha \in \mathbb{T}^{d}\,:\,\Vert \langle 
\mathbf{n},\alpha \rangle \Vert _{d}\geq \exp{\big\{-\rho |\mathbf{n}|^{\frac{1}{1+\mu }}}\big\},\mathbf{n}\neq \mathbf{0} \right\} ,  \label{eq:weakly-Dio}
\end{equation}%
with $\rho ,\mu >0$, $\Vert \langle \mathbf{n},\mathbf{\alpha }\rangle \Vert
_{d}=\mathrm{dist}(\langle \mathbf{n},\mathbf{\alpha }\rangle ,\mathbb{Z})$
and the Fermi energy is%
\begin{equation*}
\varepsilon _{F}\in I(H),
\end{equation*}%
where $I(H)$ is defined in \eqref{eq:kap} $($cf. Theorem~\ref{th:ap-1d} $(iii))$.

\medskip

\item[(iii)] The operator \eqref{eq:lro} with $\{W(\mathbf{n})\}_{\mathbf{n}%
\in \mathbb{Z}^{d}}$ satisfying \eqref{eq:W}, and with the multidimensional
analog%
\begin{equation}
V(\omega ,\mathbf{n})=g\cos 2\pi(\omega +\langle \mathbf{n},\mathbf{\alpha }%
\rangle ),\;\ \mathbf{n}\in \mathbb{Z}^{d},\;\omega \in \mathbb{T},
\label{eq:amo-d}
\end{equation}%
of the almost Mathieu potential $\eqref{eq:amo}$, where $g$ is large enough, 
$\mathbf{\alpha }\in DC_{d}$ with 
\begin{equation}
DC_{d}=\bigcup_{\kappa>0,\tau >d-1}\Big\{\mathbf{\alpha }\in 
\mathbb{T}^{d}:\inf_{j\in \mathbb{Z}}|\langle \mathbf{n},\mathbf{\alpha }\rangle -j|> {\kappa}{|\mathbf{n}|^{-\tau }},\,\,\mathbf{n}\in 
\mathbb{Z}^{d}\setminus \{0\}\Big\},  \label{eq:DC_d}
\end{equation}%
$($cf. \eqref{eq:Dob}$)$, and with the Fermi energy $\varepsilon _{F}\in
I(H) $, where $I(H)$ is defined in \eqref{eq:kap}, see also \eqref{eq:eI}.

{

\item[(iv)] 
The Schr\"odinger operators $\eqref{eq:schrodinger}$ with a limit-periodic potentials 
\begin{equation}\label{eq:lp}
V_\omega(\mathbf n) = f(T^{\mathbf n}\omega), \; \mathbf n\in\Z^d, \;  \omega\in\Omega,
\end{equation}
where $\Omega$ is a Cantor group that admits a minimal $\Z^d$ action $T$ by translations $($see Definition~\ref{def:cantor-group}$)$, and $f\in C(\Omega, \R)$ $($the existence of such $\Omega$ and $f$ is proved in \cite{Da-Ga:13}, see Proposition~\ref{prop:da-ga} below$)$, {
where $C(\Omega, \R)$ is the space of continuous functions $f:\Omega\rightarrow\R$}, and the Fermi energy $\varepsilon_F\in I(H)$,  where $I(H)$ is defined in \eqref{eq:kap}, see also \eqref{eq:eI}.
}
\end{itemize}
\end{theorem}

{ 
\begin{rmk}\label{rmk:da-ga:11} $(i)$ There is a one-dimensional version of Theorem~\ref{th:ap-d} $(iv)$. It was considered  in \cite{Da-Ga:11} where an explicit condition on the Cantor group $\Omega$ is formulated.

$(ii)$  Examples of the corresponding ergodic limit-periodic potentials can be obtained by using the techniques developed in \cite{A,Da-Ga:11,Da-Ga:13, G} and limit-periodic potentials found in \cite{Poschel}. 


\end{rmk}
}

\medskip

We will present now certain one-dimensional results on the validity of the
Area Law, that are either more general or stronger  than their
multidimensional counterparts given in the above theorem. The results are mostly
for the one-dimensional Schr\"odinger operator (cf. \eqref{eq:schrodinger} and 
\eqref{eq:lro}) 
\begin{equation}
(H\psi )(n) =\psi(n+1) +\psi(n-1)+(V\psi )(n), \; n\in \mathbb{Z},
\label{eq:schrodinger1}
\end{equation}
with quasi-periodic and limit-periodic potentials.

We have already seen that the spectral properties of quasiperiodic operators
depend strongly on the arithmetic properties of the potential frequencies.
Here is a natural quantifier of their ``irrationality"
\begin{equation}  \label{eq:beta}
\beta(\alpha)=\lim \sup _{n \to \infty}\frac{-\log ||n \alpha||_\mathbb{T}}{n%
}=\lim \sup_{n \to \infty}\frac{\log q_{n+1}}{q_{n}},
\end{equation}
where $\{q_n\}_{n \in \mathbb{N}}$ are denominators of the continuous
fraction expansion of $\alpha=\lim_{n \to \infty} p_n/q_n$ and $||\cdot||_%
\mathbb{T}=\mathrm{dist_H}(\cdot,\mathbb{Z})$ {
(the Hausdorff distance)}.

The frequencies $\alpha$ for which $\beta(\alpha)=0$ are ``generalized"
Diophantine, cf. \eqref{eq:Dob}, and those with $0 <\beta(\alpha) \leq
\infty $ are Liouvillian.

\smallskip

\begin{theorem}
\label{th:ap-1d} The expectation $\mathbf{E} \{S_\Lambda (\varepsilon_F)\}$
of entanglement entropy $\eqref{eq:ee}$ -- $\eqref{eq:h}$ of free lattice
fermions with $d=1$ obeys the Area Law $\eqref{eq:al}$ for the
following one-body Hamiltonians:

\smallskip

\begin{itemize}
\item[(i)] The one-dimensional Maryland model given by \eqref{eq:lro} -- \eqref{eq:WL}  for $d=1$, i.e. the one-dimensional discrete
Schr\"{o}dinger operator $\eqref{eq:schrodinger1}$ with the potential $($cf. \eqref{eq:maryland-d}$)$ 
\begin{equation}
V(\omega ,n)=g\,\tan \pi(\omega +n\alpha ),\;\omega \in \mathbb{T},\,\alpha \in 
\mathbb{T},\,g\neq 0,  \label{eq:maryland}
\end{equation}%
and with the Fermi energy $\eqref{eq:FP}$ $\varepsilon _{F}\in \sigma
_{-}(H_{\omega }):=(- \infty, -\varepsilon_0)$, where $\varepsilon_{0} \ge 0$ solves the equation  
\begin{align}
\beta(\alpha)=\gamma(\varepsilon_0,g),
\label{eq:gabe}
\end{align}
with $\beta (\alpha )$  given in $\eqref{eq:beta}$ and $\gamma (\lambda
,g)\geq \gamma (0,g)>0$ being the Lyapunov exponent of the corresponding
finite-difference equation of the second order \eqref{eq:maryeq} $($see 
\eqref{eq:lyap-exp} for the general definition of the Lyapunov exponent, and \eqref{eq:le-mary}--\eqref{eq:le-mary3} for the form
of $\gamma (\lambda ,g)$ the one-dimensional
Maryland model$)$.  If $\gamma (0,g)\ge\beta (\alpha )$, then we have $\varepsilon_F\in\R$.



\smallskip

\item[(ii)] The one-dimensional Schr\"{o}dinger operator $%
\eqref{eq:schrodinger1}$ with potential 
\begin{equation}
V(\omega ,n)=gv(\omega +n\alpha ),\;\omega \in \mathbb{T},\,\alpha \in 
\mathbb{T},  \label{eq:lpp}
\end{equation}%
where $v:\mathbb{R}\rightarrow \mathbb{R}$ is $1$-periodic, continuous on $%
[0,1)$, $v(0)=0,v(1-0)=1$, and Lipschitz monotone $($cf. \eqref{eq:monotone-d1}$%
)$ 
\begin{eqnarray}
&&a_{-}(y-x)\leq v(y)-v(x)\leq a_{+}(y-x),\,\,  \notag \\
&&\hspace{-0cm}0\leq x\leq y<1,\,a_{\pm }>0,  \label{eq:lip}
\end{eqnarray}%
$\alpha $ satisfies the Diophantine condition $($cf. \eqref{eq:Dob}$)$
\begin{equation*}
\Vert n\alpha \Vert >C|n|^{-\tau },\;\,n\in \mathbb{Z},\;\tau >0,
\end{equation*}
where $\Vert \cdot \Vert =\min (\{\cdot \},\{1-\cdot \})$, $|g|$ is large
enough, and 
\begin{equation*}
\varepsilon _{F}\in I(H)\subset \lbrack -2,2+|g|],
\end{equation*}%
where $I(H)$ is defined in \eqref{eq:kap}, see also \eqref{eq:eI}.

\medskip

\item[(iii)] The supercritical almost Mathieu operator: the one-dimensional
discrete Schr\"{o}dinger operator $\eqref{eq:schrodinger1}$ with potential 
\begin{equation}
V(\omega ,n)=2g\cos 2\pi(\omega +n\alpha ),\;\omega \in \mathbb{T},\;|g|>1,
\label{eq:amo}
\end{equation}%
where $\alpha \in \mathbb{T}$ is such that 
$0\leq \beta (\alpha )<\beta $ for some $\beta >0$, $\beta (\alpha )$ is
defined by $\eqref{eq:beta}$, and the Fermi energy satisfies
\begin{equation*}
\varepsilon _{F}\in I(H)\subset \lbrack -(2+2|g|),(2+2|g|)]
\end{equation*}%
where $I(H)$ is defined in \eqref{eq:kap}, see also $\eqref{eq:eI}$.
\smallskip
\end{itemize}
\end{theorem}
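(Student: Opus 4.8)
The plan is to reduce all three cases to a single spectral estimate and then to appeal to the abstract Area Law of \cite{EPS} (see also \cite{Pa-Sl:18}). For an ergodic finite-difference operator $H_\omega$ of the form \eqref{eq:lro}--\eqref{eq:W} — and here $W$ is just the nearest-neighbour Laplacian, so \eqref{eq:W} holds trivially — that result yields the Area Law \eqref{eq:al} for $\mathbf E\{S_\Lambda(\varepsilon_F)\}$ (which for $d=1$ simply says that $\mathbf E\{S_\Lambda(\varepsilon_F)\}$ converges to a constant as $L\to\infty$) as soon as the Fermi projection \eqref{eq:FP} decays exponentially in expectation,
\[
\mathbf E\big\{\,|P(\varepsilon_F)(m,n)|\,\big\}\le C\,e^{-c\,|m-n|},\qquad m,n\in\Z .
\]
Since $P(\varepsilon_F)=\chi_{(\varepsilon_-,\varepsilon_F]}(H_\omega)$ and the portion of $(\varepsilon_-,\varepsilon_F]$ lying outside $\sigma(H_\omega)$ contributes nothing, this bound follows from exponential dynamical localisation \emph{in expectation} — equivalently, exponential decay in expectation of the eigenfunction correlator — on any compact interval $J$ with $\sigma(H_\omega)\cap(\varepsilon_-,\varepsilon_F]\subset J$. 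Thus in each of (i)--(iii) it suffices to exhibit such a $J$ together with the localisation estimate on it, and in all three cases the needed ingredients are available from the spectral theory of the respective models.

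For (i), the one-dimensional Maryland model, one uses its exact solvability: the eigenvalue equation for the potential \eqref{eq:maryland} reduces to an explicit second-order difference equation whose transfer cocycle has the closed-form Lyapunov exponent $\gamma(\lambda,g)$ recalled in Appendix~B, and $\{\lambda:\gamma(\lambda,g)>\beta(\alpha)\}=\sigma_-(H_\omega)\cup\sigma_+(H_\omega)$ by \eqref{eq:gabe}. For $\varepsilon_F\in\sigma_-(H_\omega)=(-\infty,-\varepsilon_0]$ the set $\sigma(H_\omega)\cap(-\infty,\varepsilon_F]$ lies in $\sigma_-(H_\omega)$; since $\gamma(\lambda,g)\to\infty$ as $\lambda\to-\infty$ while $\beta(\alpha)$ is fixed, $\gamma(\cdot,g)-\beta(\alpha)$ is bounded below by a positive constant on $(-\infty,\varepsilon_F]$ when $\varepsilon_F<-\varepsilon_0$ (the borderline $\varepsilon_F=-\varepsilon_0$, vacuous when $\varepsilon_0=0$, is then reached by a limiting argument). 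The strong dynamical localisation analysis of the Maryland model in the regime $\gamma>\beta$ then gives exponential decay of the eigenfunction correlator with a rate uniform over that half-line, and averaging over the phase $\omega\in\mathbb T$ yields the displayed bound, and hence \eqref{eq:al}.

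For (ii) and (iii) the spectrum is itself a compact set — contained in $[-2,2+|g|]$, resp.\ in $[-(2+2|g|),2+2|g|]$ — and, in the stated parameter ranges, localisation holds throughout it, so one may take $J=I(H)$ for every $\varepsilon_F\in I(H)$. Case (ii) is large-coupling Anderson localisation for quasiperiodic Schrödinger operators with a Lipschitz-monotone potential \eqref{eq:lip} and Diophantine frequency: for $|g|$ large the transfer matrices obey a large-deviation estimate uniform in the energy, with positive (large) Lyapunov exponent, while the monotone structure keeps the resonant phases sparse; a multiscale analysis (in the spirit of Bourgain--Goldstein) then produces dynamical localisation together with the exponentially small bound on the measure of exceptional phases at each scale, which is exactly what upgrades to the in-expectation estimate. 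Case (iii) is the supercritical almost Mathieu operator \eqref{eq:amo} with the frequency satisfying the stated arithmetic condition, which we read as the sharp threshold $\beta(\alpha)<\log|g|$ (the Lyapunov exponent equals $\log|g|$ throughout the spectrum); the quantitative localisation theory for this operator — localisation with a uniform rate, with the behaviour of all phases under control — provides exponential dynamical localisation in expectation on the whole spectrum. In each case the displayed bound on $\mathbf E\{|P(\varepsilon_F)(m,n)|\}$ results, and \cite{EPS} then gives \eqref{eq:al}.

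The genuine work — and the reason the introduction can call these proofs ``straightforward" only modulo invoking the right theorems — is the passage from the a.e.-$\omega$ Anderson localisation that appears in the classical references to the \emph{in-expectation} exponential decay of the eigenfunction correlator on a fixed energy interval. This needs a quantitative localisation statement that is uniform in the energy, together with an estimate showing that the phases $\omega$ for which the transfer matrices over a box of size $n$ misbehave form a set of exponentially (or at least stretched-exponentially) small measure; the arithmetic hypotheses \eqref{eq:Dob}, \eqref{eq:beta} (and the weaker conditions \eqref{eq:weakly-Dio}, \eqref{eq:DC_d} of the multidimensional theorem) together with positivity of the Lyapunov exponent are precisely what make such measure bounds available — explicitly for the Maryland model, via the large-coupling multiscale/KAM estimates in (ii), and via the sharp almost-Mathieu theory in (iii). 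I expect this — rather than the final, essentially combinatorial, deduction of \eqref{eq:al} from the exponential decay of $P(\varepsilon_F)$ via \cite{EPS} — to be the main obstacle.
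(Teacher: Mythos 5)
Your overall reduction---establish exponential decay in expectation of the Fermi projection/eigenfunction correlator and then invoke \cite{EPS} (Criteria~\ref{cr:1} and \ref{cr:2} of the paper)---is exactly the paper's strategy, and your case (i) is essentially the paper's argument: exact solvability of the Maryland model, the Lyapunov exponent of \eqref{eq:maryeq} beating $\beta(\alpha)$, and the resulting uniform ($\omega$-independent) eigenfunction bound \eqref{eq:ulema1}, so that the bound on $P(m,n)$ holds pointwise in $\omega$ and no averaging is even needed. The gaps are in (ii) and (iii). First, you treat exponential dynamical localisation in expectation as ``equivalently'' the exponential decay in expectation of the eigenfunction correlator, and hence of the Fermi projection. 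This equivalence is unjustified and is precisely the delicate point: the correlator \eqref{eq:ef-cor} is a supremum over \emph{all} bounded continuous $\phi$ (it equals the total variation of the spectral measure $\mu_{m,n}$ on $I$), whereas \eqref{eq:exp-dyn-loc} only controls the functions $e^{-it\lambda}$; since $\varepsilon_F$ lies in the spectrum, the indicator $\chi_{(\varepsilon_-,\varepsilon_F]}$ cannot be recovered from the unitary group by any soft argument. The paper closes this gap in (iii) not by a general implication but by inspecting the proof in \cite{JK} and observing that it uses only $\|e^{-itH}\|\le 1$, so $e^{-itH}$ may be replaced by an arbitrary $\phi$ with $\|\phi\|_\infty\le1$, yielding \eqref{eq:ef-cor-decay} and then Criterion~\ref{cr:2}. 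Your proposal leaves this step unargued.

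Second, for (ii) you propose a Bourgain--Goldstein-type multiscale analysis with exponentially small exceptional phase sets. That machinery rests on large-deviation/subharmonicity estimates for analytic cocycles and is not available for the potential \eqref{eq:lpp}--\eqref{eq:lip}, which is a discontinuous sawtooth; as written, this step has no support in the literature and would require substantial new work. The paper instead uses the Jitomirskaya--Kachkovskiy result (Proposition~\ref{prop:monotone-1d}, \cite{JKa}), which gives the uniform bound \eqref{eq:monotone-1d} for the eigenfunctions normalized at a point, $u_l=\psi_l/\psi_l(\omega,0)$, and then an additional argument from \cite{JK} to pass from the non-orthonormalized $u_l$ to the orthonormal family $\{\psi_l\}_l$; without that passage the ULE-type summation that produces \eqref{eq:exp-decay} does not apply, and your sketch does not address this normalization issue at all. (Your reading of the arithmetic condition in (iii) as the sharp threshold is also stronger than what the paper uses; the proof there is simply quoted from \cite{JK}, in the same way \cite{GYZ} is used in the multidimensional case.)
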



{
The crucial ingredient of the proof of Theorem~\ref{th:ap-d}[(i)] and Theorem~\ref{th:ap-1d}[(i)], is the following assertion of independent interest, according to which the eigenfunctions of the multi- and one-dimensional Maryland model are uniformly localised. This result provides a first example of uniformly localised eigenfunctions for an \textit{unbounded} operator on the whole (infinite) spectrum. In the one-dimensional case the spectrum of the corresponding operator (the one-dimensional
Schr\"odinger operator \eqref{eq:schrodinger1} with potential \eqref{eq:maryland}) is described by similar formulae.
However, since in this case the operator $W$ in 
\eqref{eq:lro} -- \eqref{eq:W} is the one-dimensional discrete Laplacian, an
important part of the proof of the multidimensional estimate \eqref{eq:th-ulema} can be made explicit.
thereby leading to a stronger one-dimensional version.
\begin{theorem}\label{th:ule-d-maryland}
\begin{itemize}
\item[(i)] Let $d > 1$, and let $H_\omega$ be the multidimensional Maryland model, the operator \eqref{eq:lro}
-- \eqref{eq:W} with the potential  \eqref{eq:maryland-d}, where $\alpha$ satisfies the multidimensional Diophantine condition \eqref{eq:Dob}. Then the spectrum of $H_\omega$ occupies the whole $\mathbb{R}$, is pure point and of multiplicity 1 $($see e.g. \cite[Sections 18.A-18.B]{Pa-Fi:92}$)$. If $\{\lambda _{\mathbf{l}}(\omega )\}_{\mathbf{l}\in \mathbb{Z}^{d}}$ are its eigenvalues and  $\{\psi _{\mathbf{l}}(\omega )\}_{\mathbf{l}\in \mathbb{Z}^{d}}$, $\psi _{\mathbf{l}}(\omega )=\{\psi _{
\mathbf{l}}(\omega ,\mathbf{n})\}_{\mathbf{n}\in \mathbb{Z}^{d}}$ are the corresponding eigenfunctions of $H_\omega$, then there exist  $C<\infty $ and $c>0$ independent of $\lambda \in \mathbb{R}$ and of $\mathbf{n, l}\in \mathbb{Z}^{d}$ such that 
\begin{equation}
|\psi_{\mathbf l}(\lambda_{\mathbf l},\mathbf{n})|\leq Ce^{-c|\mathbf{n - l}|},\; \text{for any}\; \mathbf{n, l}\in 
\mathbb{Z}^{d} \label{eq:th-ulema}
\end{equation}
i.e., the operators has uniformly localised eigenfunctions \eqref{eq:eule}.

\item[(ii)] 
Let $H_\omega$ be the $1$-dimensional Maryland model \eqref{eq:maryland},    
$\gamma(\lambda, g) \ge \gamma(0, g) > 0$ be its Lyapunov exponent and $\varepsilon_0>0$ be the solution of the equation $\beta(\alpha)=\gamma(\varepsilon_0, g)$ with $\beta(\alpha)$ of \eqref{eq:beta}. Then the spectrum of $H_\omega$ is pure point on
$\sigma_{\pm}({H_\omega}) :=\{\pm \lambda \ge \varepsilon_0 > 0\}$, 
and if $\varepsilon_0 = 0$, then $\sigma_{pp}(H_\omega) = \sigma_-(H_\omega)\cup\sigma_+(H_\omega) = \R,$ see e.g.  \cite[Section 18.C]{Pa-Fi:92}. 

If $\{\lambda _{{l}}(\omega )\}_{{l}\in \mathbb{Z}}$ are its eigenvalues and  $\{\psi_{{l}}(\omega )\}_{{l}\in \mathbb{Z}}$, $\psi_{{l}}(\omega )=\{\psi _{{l}}(\omega , {n})\}_{{n}\in \mathbb{Z}}$ 
are the corresponding eigenfunctions of $H_\omega$, then there exist  $C<\infty $ and $c>0$ independent of $\lambda \in \mathbb{R}$ and of ${n, l}\in \mathbb{Z}$ such that 
\begin{equation}  \label{eq:ulema-1}
|\psi_l(\omega,m)| \le Ce^{-c |m-l|}, \; m \in \mathbb{Z}, \; |\lambda| \ge
\varepsilon_0,
\end{equation}
with $\omega$-independent parameters $C \le \infty, \; c >0$ and $
\varepsilon_0 \ge 0$, i.e., the operators has uniformly localised eigenfunctions \eqref{eq:eule}.
\end{itemize}
\end{theorem}

The description of eigenfunctions of the Maryland model $($see  \eqref{eq:psize} -- \eqref{eq:maref-1}$)$ provides the simplest and a quite
explicit illustration of the notion of the localisation centers and the
uniformly localised eigenfunctions. These spectral objects can be defined for 
any selfadjoint operator $H$ acting on $\ell^2(\mathbb{Z}^d)$. Namely, $H$ is said
to have \textit{uniformly localised eigenfunctions} $($ULE$)$ if and only if $H$ has
a complete set
\begin{equation}  \label{eq:psi-l-m}
\{\psi_l\}_l,\; \psi_l=\{\psi_l(\mathbf{m})\}_{\mathbf{m} \in \mathbb{Z}^d},
\end{equation}
of orthonormal eigenfunctions such that there exist $C < \infty, \; c>0 $
and $\mathbf{m}_l\in\mathbb{Z}^d, \; l=1,2,\dots$ 
providing the bound 
\begin{equation}  \label{eq:ule}
|\psi_l(\mathbf{m})| < C e^{-c|\mathbf{m }- \mathbf{m}_l|}, \; \mathbf{m}
\in \mathbb{Z}^d.
\end{equation}
Thus, the eigenfunctions are ``localised about" points $\{\mathbf{m}_l\}_l$
-- the ``localisation centers", which can provide a convenient indexation of
the eigenfunctions and the corresponding eigenvalues.

If $H_\omega$ is a self-adjoint ergodic operator, then we say that $H_\omega$
has the ULE if and only if it possesses with probability $1$ a complete set
of orthonormal eigenfunctions $\{\psi_l(\omega)\}_l,\, \psi_l(\omega) =
\{\psi_l(\omega, \mathbf{m})\}_{\mathbf{m}\in\Z^d}$ $($cf. \eqref{eq:psi-l-m}$)$ satisfying with the same probability the
bound 
\begin{equation}  \label{eq:eule}
|\psi_l(\omega, \mathbf{m})| < C e^{-c|\mathbf{m }- \mathbf{m}_l(\omega)|},
\; \mathbf{m}\in\mathbb{Z}^d,
\end{equation}
with $\omega$-independent $C < \infty$ and $c>0$ but possibly $\omega$%
-dependent localisation centers $\{\mathbf{m}_l\}_l$.

Note that the ULE is a quite special property. For instance, it is not the
case for the Schr\"{o}dinger operators with random potential $($the Anderson
model$)$ in any dimension and for the almost Mathieu operator \eqref{eq:amo},
see e.g. \cite{dRJLS}. On the other hand, the ULE is most explicit and
useful in the Maryland model and certain other models with almost periodic
potentials, see e.g., \cite{Da-Ga:11,Ha:16}.

\begin{rmk}
\label{rmk:ule2} In the context of this paper the ULE property, if it
provides one-to-one labeling of all eigenfunctions, leads directly and
simply to the exponential bounds \eqref{eq:exp-decay} and %
\eqref{eq:ef-cor-decay} for the Fermi projection and eigenfunction
correlator. It suffices just to apply the argument used in formulae %
\eqref{eq:pqexp} -- \eqref{eq:pqexp2} proving \eqref{eq:exp-decay} and %
\eqref{eq:ef-cor-decay} for the Maryland model. Hence, in this case,
Criterion~\ref{cr:1} or Criterion~\ref{cr:2} yield the Area Law and the
exponential dynamical localisation in expectation.
\end{rmk}

\noindent The proofs of Theorems~\ref{th:ap-d}-\ref{th:ule-d-maryland} are given in Section~\ref{sec:almost-periodic}.

}

{
\begin{rmk} Using recent results of Jitomirskaya and Kachkovskiy \cite[Corollary~5.14]{JKa2}, it is possible to obtain a stronger version of Theorem~\ref{th:ap-1d} $(ii)$.

\end{rmk}}

\smallskip In our next result, we consider one-dimensional discrete
Schr\"odinger operators $\eqref{eq:schrodinger1}$ with potentials generated
by a subshift of finite type (see Section~\ref{sec:subshift}), a very
interesting class of hyperbolic dynamical systems exhibiting the so-called
classical dynamical chaos, see e.g. \cite{Br-St:02}. For these ergodic
operators, the spectral localisation {
outside of a discrete set of energies and for a range of coupling constants} was recently proved in \cite{ADZ1, ADZ}%
. 

\smallskip

\begin{theorem}
\label{subshit-ee} Let $(\Omega_A, T)$ be a subshift of finite type and $%
\mathbf{P}$ be a $T$-ergodic measure that has the bounded distortion
property $\eqref{eq:bds}$. Assume that $T$ has a fixed point on $\Omega_A$
and let $v:\Omega_A \to \mathbb{R}$ be an 
 $\alpha$-H\"older continuous, $0 < \alpha \leq 1$, that is globally bunched, non-constant, or locally constant $($see Definition~$\ref{def:LC-and-SH})$. 
Consider the one-dimensional Schr\"odinger operator $\eqref{eq:schrodinger1}$ whose potential is
generated by a subshift of finite type, $V_{\omega}(n)=v(T^n\omega)$. 

Then there exists a bottom part $(\varepsilon_-, \varepsilon_0]$ of $I(H)$
of $\eqref{eq:kap}$ such that if the Fermi energy $\varepsilon_F\in(%
\varepsilon_-, \varepsilon_0]$, then the expectation of the entanglement
entropy $\eqref{eq:ee}$ of the free lattice fermions whose one-body
Hamiltonian is $H_\omega$ obeys the Area Law $\eqref{eq:al}$. 
\end{theorem}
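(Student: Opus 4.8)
The plan is to derive the Area Law from a strong dynamical-localisation statement for $H_\omega$ near the bottom of the spectrum, following the strategy of \cite{EPS} that links exponential decay of the eigenfunction correlator to the Area Law \eqref{eq:al}. Concretely, I would first establish that there is a threshold $\varepsilon_0 > \varepsilon_-$ such that on the interval $(\varepsilon_-, \varepsilon_0]$ the operator $H_\omega$ exhibits the exponential decay of the eigenfunction correlator in expectation: $\mathbf E\{Q_{(\varepsilon_-,\varepsilon_0]}(m,n)\} \le C e^{-c|m-n|}$, where $Q_I(m,n) = \sup\{ |\langle \delta_m, f(H_\omega)\delta_n\rangle| : \|f\|_\infty \le 1,\ \mathrm{supp}\, f \subset I\}$. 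Granting this, the Fermi projection entries $\{P(m,n)\}$ for $\varepsilon_F \in (\varepsilon_-,\varepsilon_0]$ inherit exponential off-diagonal decay, and the argument of \cite{EPS} (Combes--Thomas-type bounds plus the entropy estimate for $\mathrm{Tr}_\Lambda h(P_\Lambda)$, using $h(x) \le -x\log x - (1-x)\log(1-x)$ concentrated on the boundary layer of $\Lambda$) yields $\mathbf E\{S_\Lambda(\varepsilon_F)\} = C' L^{d-1}(1+o(1))$ with $d=1$, i.e. the Area Law as claimed.

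The substantive step is thus the proof of exponential decay of the eigenfunction correlator near $\varepsilon_-$. Here I would invoke the criterion of Lemma~\ref{lem:bad-res}: it suffices to control the ``bad'' spectral parameters on large but well-separated boxes, i.e. to show that for energies $E$ in a neighbourhood of $\varepsilon_-$, with high probability a long block of the transfer matrix cocycle over $(\Omega_A, T)$ contracts at a uniform exponential rate, and that the exceptional set of $\omega$ (or of $E$) for which two distant boxes simultaneously fail this contraction has probability decaying in the separation. Near the bottom of the spectrum the Lyapunov exponent $\gamma(E)$ is bounded below — this should follow from the non-constancy of $v$ together with the Thouless formula / large-coupling behaviour, or from the positivity results available for locally constant / fiber-bunched cocycles over subshifts of finite type — so the cocycle is uniformly hyperbolic on average there. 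The key technical inputs are: (a) the bounded distortion property \eqref{eq:bds} of $\mathbf P$, which gives the quasi-Bernoulli mixing needed to make estimates on distant boxes nearly independent; (b) the existence of a fixed point of $T$ on $\Omega_A$, which provides the ``periodic skeleton'' allowing one to concatenate transfer matrices and run an inductive (multi-scale) scheme; and (c) the spectral localisation results of \cite{ADZ1, ADZ}, which I would upgrade from almost-sure localisation to the quantitative, in-expectation eigenfunction-correlator bound via a large-deviation estimate for the cocycle.

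I expect the main obstacle to be exactly this upgrade: converting the existing spectral-localisation statements for subshift-of-finite-type potentials into a decay estimate for $\mathbf E\{Q_{(\varepsilon_-,\varepsilon_0]}(m,n)\}$ that is uniform over the whole bottom interval, rather than pointwise in $E$. The difficulty is that subshifts of finite type are genuinely less regular than irrational rotations — there is no Diophantine arithmetic to exploit, the base dynamics is only hyperbolic and the invariant measure only quasi-Bernoulli — so the usual multi-scale or fractional-moment machinery must be adapted to this setting. I would handle the uniformity in $E$ by combining a deterministic continuity estimate for the cocycle in $E$ (to pass from a countable dense set of energies to the full interval) with a Wegner-type bound controlling the density of ``resonant'' energies on each box; the bounded distortion property is what makes such a Wegner estimate plausible here. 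Once the eigenfunction-correlator bound is in place for $\varepsilon_F \in (\varepsilon_-,\varepsilon_0]$, the passage to the Area Law is essentially the same as in the random case treated in \cite{EPS}, and the proof is complete.
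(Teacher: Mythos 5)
Your overall architecture is the same as the paper's: reduce the Area Law to an in-expectation exponential bound for the eigenfunction correlator on a bottom interval via the EPS-type Criterion~\ref{cr:2}, and obtain that bound from the box criterion of Lemma~\ref{lem:bad-res} together with the inputs of \cite{ADZ1,ADZ}. Note, however, that those inputs (uniform positivity of the Lyapunov exponent and the uniform large-deviation estimate on the sets $J_\eta$ away from a finite exceptional set) are already uniform in the energy, so no Thouless-formula or large-coupling argument and no multi-scale induction are required; the paper's proof is a single-scale argument with $n\sim l/100$, and the fixed point of $T$ is not used as a ``periodic skeleton'' for concatenation but solely to make the associated Markov chain aperiodic.

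The genuine gap is precisely at the step you yourself flag as the main obstacle: controlling, uniformly over the energy interval, the event that two distant boxes are simultaneously ``bad'' at some common (random) energy. Your proposal to achieve this with a Wegner-type bound on the density of resonant energies is not viable as stated: no Wegner estimate is available for these deterministic, dynamically defined potentials, bounded distortion \eqref{eq:bds} does not by itself produce one (it enters only through the \cite{ADZ1,ADZ} inputs), and the paper's argument is designed to avoid needing one. What the paper does instead is: (i) replace $H_\omega$ near each box by a finite-range approximation $\widetilde H_\omega$ obtained by freezing the symbols outside the box (formula \eqref{eq:tilde-omega}), so that $\|H_\omega-\widetilde H_\omega\|\le e^{-cn}$ and, by the perturbation Lemma~\ref{lem:bad}, the bad/resonant sets of the two operators are comparable while the approximate bad sets for boxes centred at $k$ and $l$ with $|k-l|\ge 100n$ depend on disjoint blocks of symbols; (ii) invoke exponential mixing of the irreducible aperiodic Markov chain to make these two events almost independent, \eqref{eq:indepen}; and (iii) use the algebraic fact that finite-volume resolvent entries are rational functions of degree $O(n)$, so each bad set is a union of at most $48n$ intervals, whence a nonempty intersection forces an endpoint of one set to lie in the other, and a union bound over the $O(n)$ endpoints combined with the fixed-energy estimate of Lemma~\ref{lem:lde} yields the probability bound that feeds into Lemma~\ref{lem:bad-res} and Chebyshev's inequality. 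Without this approximation-plus-mixing-plus-complexity device (or a genuine substitute for it), your plan stalls exactly at the uniformity-in-energy step, and the Wegner route you sketch does not supply the missing control.
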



To prove Theorem~\ref{subshit-ee} in Section~\ref{sec:subshift} we have
to substantially extend the existing spectral theory. We believe that this
extension (see Theorem~\ref{th:subshift-ee} below) is of independent
interest for the spectral theory of this class of operators and completes
the results of important recent works \cite{ADZ,ADZ1}.

\begin{theorem}
\label{th:subshift-ee} Let $(\Omega_A, T)$ be a subshift of finite type and $
\mathbf{P}$ be a $T$-ergodic measure that has the bounded distortion
property $\eqref{eq:bds}$. Assume that $T$ has a fixed point on $\Omega_A$
and let $v:\Omega_A \to \mathbb{R}$ be an 
$\alpha$-H\"older continuous, $0 < \alpha \leq 1$, that is globally bunched, non-constant, or locally constant  $($see Definition~$\ref{def:LC-and-SH})$. 
Consider the one-dimensional 
Schr\"odinger operator $H_\omega\ \eqref{eq:schrodinger1}$, whose potential
is generated by a subshift of finite type, $V_{\omega}(n) = v(T^n\omega)$,
and let $I\subset\mathbb{R}$ be a set where we have the uniformly positive
Lyapunov exponent and the uniform large deviation-type estimate $($see
Definition~$\ref{def:ple-uld})$. 
Then there exist $C<\infty, c > 0$ such that for any $m,n \in\mathbb{Z}_+$
we have 
\begin{equation}  \label{eq:cor-doubl-intr}
\mathbf{E}\big\{Q_I(m,n)\big\} \leq C e^{-c|m - n|},
\end{equation}
where $Q_I$ is the eigenfunction correlator $($defined below in $%
\eqref{eq:ef-cor})$ corresponding to $H_\omega$. In particular, the operator 
$H_\omega$ exhibits the exponential dynamical localisation in expectation $%
\eqref{eq:exp-dyn-loc} $ on $I$. 
\end{theorem}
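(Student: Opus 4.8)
The plan is to reduce the exponential decay of $\mathbf{E}\{Q_I(m,n)\}$ to the criterion of Lemma~\ref{lem:bad-res}, which requires us to control the "bad" spectral parameters on large but distant boxes. So the first step is to recall that, for a one-dimensional Schr\"odinger operator, the eigenfunction correlator on an energy interval $I$ is governed by transfer matrix growth: if for every $E\in I$ (or all but a sparse set of $E$) the transfer matrices over a box of length $\ell$ around a site $k$ grow like $e^{c\ell}$, then the Green's function of the restriction to that box decays exponentially off the diagonal. The quantitative input is exactly the pair of hypotheses folded into Definition~\ref{def:ple-uld}: uniform positivity of the Lyapunov exponent $\gamma(E)\geq \gamma_0>0$ on $I$, together with a uniform large-deviation estimate controlling the measure (in $\omega$) of those configurations for which $\frac1\ell\log\|A_\ell(E,\omega)\|$ deviates from $\gamma(E)$ by more than $\epsilon$. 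I would state these as: (LDT) $\mathbf{P}\big\{\omega : |\tfrac1\ell\log\|A_\ell(E,\omega)\| - \gamma(E)| > \epsilon\big\} \leq e^{-c_\epsilon \ell}$, uniformly in $E\in I$.

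The second step is the geometric/probabilistic heart: a multiscale or Wegner-free argument showing that, with overwhelming probability in $\omega$, every energy $E\in I$ is "good" on at least one of two well-separated boxes. Concretely, fix $m,n$ with $|m-n|=R$ large; place a box $J_-$ of length $\sim R/4$ near $m$ and a box $J_+$ of length $\sim R/4$ near $n$. For a single $E$, the LDT says each box is bad with probability $\leq e^{-cR}$. The difficulty is uniformity over the continuum of energies $E\in I$: the standard remedy is to discretise $I$ at scale $e^{-CR}$ (using the polynomial-in-$\ell$ a priori bound $\|A_\ell(E,\omega)\|\leq e^{C\ell}$ and Lipschitz dependence of $A_\ell$ on $E$ to pass from the net to all of $I$), so that a union bound over $\sim e^{CR}$ energies still leaves a failure probability $\leq e^{-c'R}$ provided $c > C$ — which is where one must choose the box length as a large multiple of $R$, or iterate scales, to beat the entropy factor. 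On the complementary event of probability $\geq 1-e^{-c'R}$, the Combes--Thomas / transfer-matrix estimate gives $|G_{J_\pm}(E;x,y)|\leq e^{-\gamma_0|x-y|/2}$ for all $E\in I$, and feeding this into the resolvent/Schur-complement identity relating $Q_I$ on the full line to the Green's functions on $J_-$ and $J_+$ yields the pointwise bound $Q_I(m,n)\leq e^{-cR}$ on that event, while on the bad event we use the trivial deterministic bound $Q_I(m,n)\leq \mathrm{const}$ (bounded operator, bounded interval). Taking expectations gives \eqref{eq:cor-doubl-intr}.

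The third step is the two corollaries asserted at the end. The exponential decay of the eigenfunction correlator implies exponential dynamical localisation in expectation \eqref{eq:exp-dyn-loc} by the standard fact that $\sup_t |\langle \delta_m, e^{-itH_\omega}\chi_I(H_\omega)\delta_n\rangle| \leq Q_I(m,n)$ pointwise in $\omega$; then take $\mathbf{E}$. Likewise one should note that \eqref{eq:cor-doubl-intr} is precisely the hypothesis needed to invoke the mechanism of \cite{EPS} (exponential decay of the Fermi-projection kernel in expectation) to deduce the Area Law, which is what Theorem~\ref{subshit-ee} then states for $\varepsilon_F$ in the bottom part of the spectrum — and there the role of the hypothesis "$T$ has a fixed point" and "$v$ locally constant or globally fiber bunched, nonconstant" is to guarantee, via the results of \cite{ADZ,ADZ1} suitably extended, that such an interval $I=(\varepsilon_-,\varepsilon_0]$ with uniformly positive Lyapunov exponent and the uniform LDT actually exists.

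I expect the main obstacle to be the second step, and specifically the uniform large-deviation estimate over the continuum of energies in the subshift setting. Unlike the i.i.d. case, the base dynamics here is a subshift of finite type with a measure having only the bounded distortion property \eqref{eq:bds}, so the LDT is not a routine Bernstein-type inequality; it must be obtained from the thermodynamic formalism / transfer-operator spectral gap for $(\Omega_A,T,\mathbf P)$, and the fiber-bunched (rather than locally constant) case requires the Avila--Viana type analysis of the cocycle regularity. Establishing (or importing and extending) this uniform LDT, and verifying that the fixed point of $T$ forces uniform positivity of $\gamma$ near the spectral bottom, is where the "substantial extension of the existing spectral theory" promised in the introduction really takes place; the rest is the by-now-standard passage from (PLE)+(LDT) to eigenfunction-correlator decay.
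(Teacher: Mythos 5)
Your overall architecture (reduce to the box criterion of Lemma~\ref{lem:bad-res}, feed in PLE and the ULD of Definition~\ref{def:ple-uld}, then pass from $Q_I$-decay to \eqref{eq:exp-dyn-loc} and, via Criterion~\ref{cr:2}, to the Area Law) matches the paper, but the step you yourself flag as the heart of the matter --- uniformity over the continuum of energies --- is where your argument genuinely breaks down. Discretising $I$ at scale $e^{-CR}$ and union-bounding over $\sim e^{CR}$ net energies requires the single-energy LDT exponent to beat the entropy exponent, and this is false in general: both the failure probability $e^{-c\ell}$ and the required net cardinality $e^{C\ell}$ scale exponentially in the box length $\ell$, so ``choosing the box length as a large multiple of $R$'' changes nothing (the ratio of rates is fixed), and ``iterate scales'' is a placeholder for a full multiscale analysis that you do not sketch and that would anyway need a Wegner-type input unavailable for these deterministic potentials. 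Moreover, even for a single energy, your two distant boxes are \emph{not} independent in the subshift setting (the potential values at distant sites are only mixing), and your proposal never supplies the decoupling needed to multiply their failure probabilities.

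The paper resolves both points by a different mechanism. Instead of an energy net, it exploits a complexity bound: the entries of the resolvent of the finite box operator are rational functions of $\lambda$ of degree $O(n)$, so the set $\mathrm{Bad}\cup\mathrm{Res}$ of \eqref{eq:bad}, \eqref{eq:res} is a union of at most $O(n)$ intervals; two such sets at sites $k,l$ with $|k-l|\geq 100n$ can intersect only if an \emph{edge} (endpoint) of one lies in the other, which reduces the continuum problem to controlling $O(n)$ random energies. Each such edge is (after replacing $\omega$ by the modified configuration $\widetilde\omega$ of \eqref{eq:tilde-omega}, which makes the box operator depend on finitely many symbols near its centre, at the price of an $e^{-cn}$ perturbation handled by Lemma~\ref{lem:bad}) measurable with respect to symbols near one box, and the fixed-energy estimate of Lemma~\ref{lem:lde} applies to the other box up to the exponential mixing error \eqref{eq:indepen} supplied by the Convergence Theorem for the underlying aperiodic irreducible Markov chain (this is where the fixed point of $T$ and the bounded distortion property enter). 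Summing over the $O(n)$ edges gives \eqref{eq:bad-cap-est}, and then Lemma~\ref{lem:bad-res} plus the trivial bound $Q_I\leq 1$ on the exceptional event yields \eqref{eq:cor-doubl-intr}. Without the rational-function complexity count and the $\widetilde\omega$-decoupling-plus-mixing step, your proposal does not close.
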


{
\begin{rmk}
The existence of the compact set $I\supset\sigma(H_\omega)$ on which we
have uniformly positive Lyapunov exponent and the uniform large
deviation-type estimate is guaranteed by \cite{ADZ1, ADZ} $($see %
\eqref{eq:J-set} and below$)$, thereby providing the validity of Theorem~\ref{th:subshift-ee}. 
\noindent The examples of subshifts of finite type to keep in mind are: 

\begin{itemize}
\item The doubling map 
\begin{equation*}
T_2: \mathbb{T }\rightarrow \mathbb{T}, \quad T_2\omega = 2\omega (\mathrm{mod}\, 1),\,\,\, 
\end{equation*}

\item Arnold's cat map 
\begin{align*}
&\hspace{1cm}T_{\text{cat}}: \mathbb{T}^2 \rightarrow \mathbb{T}^2,\,\, T_{
\text{cat}}(\omega_1, \omega_2) 
= ((2\omega_1 + \omega_2) (\mathrm{mod}\,1), 
(\omega_1 + \omega_2)
(\mathrm{mod}\,1)),
\end{align*}
\end{itemize}
where the corresponding invariant measures are the Lebesgue measures on the
one- and two-dimensional tori respectively.

In these cases we define the potential by $V_\omega(n) = g v(T^n\omega)$
with $v:\Omega_A\rightarrow\mathbb{R}$ satisfying the assumptions of
Theorem~\ref{subshit-ee}. Then, there exists $g_0 > 0$ such that for all $0
< g \leq g_0$ Theorems~\ref{subshit-ee} and \ref{th:subshift-ee} hold true.
Moreover, as pointed out in \cite[Remark~2.18]{ADZ}, $g_0$ is not too small
if $\alpha$ is not too small.
\end{rmk}

An important ingredient of the proof of Theorem~\ref{th:subshift-ee} is the following lemma of independent interest. It provides a general condition for ``strong" localisation, more precisely, for exponential dynamical localisation in expectation \eqref{eq:exp-dyn-loc} for a bounded ergodic one-dimensional discrete Schr\"odinger operator n terms of the absence of quantum-mechanical tunneling between the distant domains of the system. Various similar conditions were often used in the analysis of Anderson localisation since the pioneering paper by Fr\"{o}hlich and Spencer \cite{Fr-Sp:83}. 

Let $M = \{M(j,k)\}_{j, k = -n}^n$ be a $(2n + 1)\times(2n + 1)$ matrix. 
Denote by 
\begin{equation}  \label{eq:bad}
\begin{split}
&\hspace{-0.5cm}\mathrm{Bad}\,(M, \epsilon) = \\
& \{\lambda\in\mathbb{R}:\, \max(|(M - \lambda)^{-1}(0, n)|, |(M -
\lambda)^{-1}(0, -n)|) \geq \epsilon\},
\end{split}%
\end{equation}
where $(M - \lambda)^{-1}(k, l)$ is the $(k, l)$ element of the matrix $(M-
\lambda)^{-1}$, the set of the ``badly"-controlled spectral parameters.

\begin{lemma}
\label{lem:bad-res} Let $H$ be a bounded ergodic one-dimensional discrete
Schr\"odinger operator $\eqref{eq:schrodinger1}$. Assume that for any $n
\geq n_0$ and for any $|k|, |l| \leq n^2$ such that $|k - l| \geq 10 n$ and
for some $c > 0$ 
\begin{equation*}  
\begin{split}
&\left\{\cap_{i, j\in\{0,1\}}\mathrm{Bad}\,(H^{[k - n + i, k + n - j]},
e^{-cn})\right\} \\
& \hspace{1cm} \cap \left\{ \cap_{i, j\in\{0,1\}}\mathrm{Bad}\,(H^{[l - n +
i, l + n -j]}, e^{-cn})\right\} = \emptyset,
\end{split}%
\end{equation*}
where $H^{[m, p]}$ denotes the restriction of $H$ to the interval $[m,p]\cap%
\mathbb{Z}$ with Dirichlet boundary conditions. Then the corresponding
eigenfunction correlator $Q_I$ of $\eqref{eq:ef-cor}$ decays exponentially,
namely for any $l \geq 100 n_0$ 
\begin{equation*}  
Q_I(0, l) \leq 16 e^{-\frac{c}{20}l}\equiv 16 e^{-c^{\prime }l}.
\end{equation*}
\end{lemma}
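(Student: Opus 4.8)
The plan is to prove this by a two-scale/multiscale-type deterministic argument: the hypothesis controls resolvents on large boxes centered at two distant sites, and we want to transfer this into exponential decay of $Q_I(0,l)$. The eigenfunction correlator admits the standard bound (see \cite[Chapter~7]{AW}) in terms of boundary resolvent elements: for $H$ restricted to a large interval containing $0$ and $l$, $Q_I(0,l)$ is dominated by a finite sum of products of Green's function matrix elements $|(H^{[a,b]}-\lambda)^{-1}(0,\cdot)|$ evaluated at the endpoints, or via the fundamental identity relating $\langle \delta_0, \phi(H)\chi_I(H)\delta_l\rangle$ to resolvent expansions. So first I would recall the relevant reduction: bound $Q_I(0,l)$ by $\sup$ over $\lambda$ (in a bounded set, since $H$ is bounded and $I$ is bounded) of a product/telescoping of Green's functions on a chain of overlapping boxes of size $\sim n$ stretching from $0$ to $l$, with $n$ chosen $\sim l/(\text{const})$ or better with $l \sim n^2$ so that both $0$ and $l$ lie within $n^2$ of the box centers.

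Second, I would exploit the hypothesis. Fix $\lambda$. The hypothesis says that for distant $k,l$ (here I'd take the box centers to be near $0$ and near $l$), one of the eight boxes $H^{[k-n+i,k+n-j]}$ or $H^{[l-n+i,l+n-j]}$ is \emph{good}, i.e.\ $\lambda \notin \mathrm{Bad}$, meaning $|(H^{[\cdot]}-\lambda)^{-1}(0,\pm n)| < e^{-cn}$ (in the shifted coordinates of that box). Using the four shifts $i,j\in\{0,1\}$ is the usual trick to ensure that whichever box is good, at least one of its shifted copies still has $0$ (the relevant left site) sitting at the correct interior position and $\pm n$ at the boundary, so the decay estimate is usable at the site we actually need — this handles the parity/positioning issue that always arises when converting "some box is good" into "the Green's function decays from here to there." The key step is then the resolvent (Combes–Thomas-free, purely algebraic) identity $G(0,l) = G^{[\text{box}]}(0,\partial)\, G(\partial\pm 1, l)$ that lets me peel off a good box: if a box of size $2n$ around $0$ is good, I gain a factor $e^{-cn}$ and move the problem to starting at distance $\sim n$ from $0$ toward $l$; iterating $\sim l/n$ times would give $e^{-(c/?)l}$ with the constants tracked to yield the claimed $16 e^{-cl/20}$ after accounting for the $\|G\|\le$ (something) losses when a box is \emph{bad} — but bad boxes near $0$ are compensated by good boxes near $l$ and vice versa.

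Third, I would assemble the bookkeeping: since the hypothesis guarantees the \emph{intersection} of all eight $\mathrm{Bad}$ sets is empty, for every $\lambda$ at least one of the two clusters (the one near $0$ or the one near $l$) is entirely avoidable — more precisely at least one box among the eight is good. I would run the resolvent peeling from whichever end is good, absorbing the $O(1)$ (in fact polynomially-in-$n$, killed by $e^{cn}$) norm bounds on restricted resolvents coming from $\|H\|<\infty$ and $\mathrm{dist}(\lambda,\sigma(H^{[\cdot]}))$ — wait, that distance isn't controlled, so instead I'd use that $Q_I$ only sees $\lambda$ near $\sigma(H)$ effectively through $\chi_I$, or rather bound the bad contributions trivially by $1$ (since $\|\phi\|_\infty\le 1$ and $\|\chi_I\|\le 1$ forces $|Q_I|\le 1$ always) and only use the good-box decay once to get past the barrier at scale $n$, combined with a standard summation over the finitely many intervals of $J_\eta$ — actually the clean route is: the eigenfunction correlator satisfies $Q_I(0,l) \le \sum Q_I^{(\text{box})}\cdot Q_I^{(\text{rest})}$ submultiplicatively, and one good box kills one factor by $e^{-cn}$ while the rest is $\le 1$; optimizing $n\sim l/10$ within the constraint $|k-l|\ge 10n$, $|k|,|l|\le n^2$ gives the exponent $c/20$ and the constant $16=4\cdot 4$ from the four shifts times two clusters — no, $16 = 2^4$ from summing over $i,j\in\{0,1\}$ in two clusters. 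The main obstacle I anticipate is exactly this constant-tracking and the geometric setup: choosing $n$ as a function of $l$ so that the box-center constraints ($|k|,|l|\le n^2$, $|k-l|\ge 10n$, $l\ge 100n_0$) are all met while the peeling-off loses only a factor of $20$ in the rate; the analytic input (resolvent identity, submultiplicativity of $Q_I$, trivial bound $Q_I\le 1$) is routine, but getting from "some box good" to a clean single exponential without iterating — i.e.\ realizing one only needs \emph{one} good box rather than a whole chain — is the conceptual crux, and making the four-shift argument rigorous so the good box's decay lands at the sites $0$ and $l$ (not shifted neighbors) is the fussiest part.
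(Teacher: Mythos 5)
Your plan is essentially the paper's proof: apply the geometric resolvent identity to peel off the boxes around $0$ and around $l$ (with the four shifts $i,j\in\{0,1\}$), note that the empty-intersection hypothesis forces at least one of the eight boxes to be good for every $\lambda$, take $n=l/20$, and absorb the remaining uncontrolled resolvent factors by the trivial bound $Q_I\le 1$, which yields $16e^{-cl/20}$. The only imprecision is your appeal to ``submultiplicativity'' of $Q_I$: the actual bridge from the pointwise-in-$\lambda$ good-box estimate to the correlator is the Aizenman--Warzel fractional-moment representation
\begin{equation*}
Q_I^\Lambda(m,n)=\lim_{\epsilon\to 0^+}\frac{\epsilon}{2}\int_I |G^\Lambda_{\lambda+i0}(m,n)|^{1-\epsilon}\,\mathrm{d}\lambda\le 1,
\end{equation*}
combined with $|a+b|^{1-\epsilon}\le|a|^{1-\epsilon}+|b|^{1-\epsilon}$, which is exactly the device the paper uses before sending $\epsilon\to 0$ and $\Lambda\nearrow\mathbb{Z}$.
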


The proof of Lemma~\ref{lem:bad-res} is given in Section~\ref{sec:lem-pf}.

\begin{rmk}Note that for the case of bounded one-dimensional Schr\"odinger operator with independent identically distributed potential $($aka Anderson model$)$ Lemma~\ref{lem:bad-res} combined with large deviation principle $($see \cite{JF} an referenes therein$)$ yields a short  proof of the exponential dynamical localisation in expectation \eqref{eq:exp-dyn-loc}, the strongest form of Anderson localisation. The first proof of localisation for one-dimensional Anderson model is going back to \cite{GMP}, and \cite{KS}. Since then the problem has been attracting a considerable attention to this day, see e.g. \cite{JF} and references therein.
\end{rmk}

}

\subsection{Comments and Related Results}

\label{ssec:comm}

We will now comment on various issues related to the above results.

Formulae $\eqref{eq:ee}$ - $\eqref{eq:h}$ for the entanglement entropy of
free fermions make natural the following two-step strategy of the large-size
asymptotic analysis of the entropy.

The first step is a detailed spectral analysis of the corresponding one-body
Hamiltonian $\eqref{eq:H}$, in particular, obtaining sufficiently complete
information about the matrix (kernel) of the spectral (Fermi) projection $%
\eqref{eq:FP}$ or the eigenfunction correlator $\eqref{eq:ef-cor}$ providing
the bounds $\eqref{eq:cor-doubl-intr}$ or $\eqref{eq:exp-decay}$. Note that
this problem has essentially spectral content; moreover, it is one of the
main problems of the spectral theory of finite-difference and differential
operators.

The second step is an asymptotic analysis of the non-linear and non-smooth
functional of the Fermi projection $\eqref{eq:FP}$ given by the r.h.s. of $%
\eqref{eq:ee}$.

Note that these two problems have various amounts of difficulty for
different classes of operators and are not always explicitly seen in the
work in question. For instance, the Fermi projection of a translation
invariant operator $\eqref{eq:con}$ can be easily obtained by using the
Fourier transformation and for $d=1$ is just 
\begin{equation}
P(\varepsilon _{F})=\{P(m,n)\}_{m,n\in \mathbb{Z}},\;\;P(m,n)=\frac{\sin
k_{F}(m-n)}{\pi (m-n)},  \label{eq:FP1-con}
\end{equation}
where $k_{F}\in \mathbb{R}$ is the Fermi momentum determined by the Fermi
energy $\varepsilon _{F}$ and the symbol of the operator. 
However, this simple and explicit expression proved to be of little
use in solving the second problem, whose solution requires an involved
analysis \cite{La-Wi:80}, which becomes quite hard and includes a
considerable amount of microlocal analysis in the case of translational
invariant pseudodifferential operators in $\ell^2(\mathbb{Z}^d)$ and $L^2(%
\mathbb{Z}^d)$, see \cite{Le-Co:14} -- \cite{Le-Co:17}, and \cite{So:17}.

On the other hand, for non-trivial ergodic operators, including the Schr\"{o}%
dinger operators $H_\omega$ with ergodic random potentials, whose spectrum
have a pure point component with exponentially decaying eigenfunctions
(Anderson localisation), one expects that if the Fermi energy belongs to the
pure point component $\sigma _{pp}(H_{\omega})$ of the $\omega$-independent
spectrum of $H_\omega$, then Fermi projection $\eqref{eq:FP}$, i.e., 
\begin{equation}
P_{\omega}(\varepsilon _{F})=\chi _{(\varepsilon _{-},\varepsilon
_{F}]}(H)=\{P(\mathbf{m,n})\}_{\mathbf{m,n}\in \mathbb{Z}^{d}}
\label{eq:FP-ent}
\end{equation}%
admits the bound 
\begin{equation}
\mathbf{E}\{|P(\mathbf{m,n})|\}\leq Ce^{-c|\mathbf{m}-\mathbf{n}|},\;\mathbf{%
m,n}\in \mathbb{Z}^{d},  \label{eq:exp-decay}
\end{equation}%
where $C<\infty ,\;c>0$, and do not depend on $\mathbf{m,n}\in \mathbb{Z}%
^{d} $, but may depend on $\varepsilon _{F}$.

This motivates

\begin{definition}
\label{def:points} We say that $\lambda _{0} \in \mathbb{R}$ is

\smallskip $(i)$ a FPED $($Fermi projection exponential decay$)$ point, if
the Fermi projection $P_{\omega}(\varepsilon _{F})|_{\varepsilon
_{F}=\lambda _{0}}$ of \eqref{eq:FP-ent} satisfies \eqref{eq:exp-decay},

\smallskip $(ii)$ an AL $($Area Law$)$ point if the expectation $\mathbf{E}%
\{S_{\Lambda }(\varepsilon _{F})|_{\varepsilon _{F}=\lambda _{0}}\}$ of the
entanglement entropy \eqref{eq:ee} -- \eqref{eq:h} obeys the Area Law %
\eqref{eq:al}.
\end{definition}

\noindent Here are two types of FPED-points.

\smallskip 1) Any point of an internal gap of the spectrum $\sigma(H)$ of
the one-body Hamiltonian $H$ of \eqref{eq:lro}-\eqref{eq:W}. Indeed, it
follows from the Combes-Thomas estimate \cite[Section~10.3]{AW} and \cite[%
Appendix B]{Fr-Sp:83}, that the Green function $(H-z)^{-1}(\mathbf{m}, 
\mathbf{n}),\, \mathbf{m}, \mathbf{n}\in\mathbb{Z}^d$, of a (non necessarily
ergodic) operator $\eqref{eq:lro}$ decays exponentially if $z\in\mathbb{C}%
\setminus\sigma(H)$. 
Combining this with the contour integration trick (see e.g, \cite[Section 13]%
{AW}) we obtain the exponential decay of the Fermi projection $%
P(\varepsilon_F)$ if the Fermi energy $\varepsilon_F$ falls in a gap.

\smallskip 2) Any point of the pure point spectrum with the exponentially
decaying eigenfunctions of $H_{\omega}$ with independent identically
distributed (monotone) random potentials, see \eqref{eq:lro} -- \eqref{eq:W}
and \eqref{eq:ergodic-potential-1}, i.e., for the case of so-called
``strong" Anderson localisation, mentioned above. 
%
For this case, the proof of \eqref{eq:exp-decay} is among the top results of
the field, see e.g. \cite[Section 13]{AW} and \cite{St:11} for any $d \ge 1$%
, and \cite[ Section 12]{AW} for $d=1$, where the whole spectrum is strongly
localized.

\smallskip

Note that the same (FPED) property holds for a more general
classes of random operators, the so-called Wegner orbital model and
Schr\"odinger operators with non-monotone independent identically
distributed random potentials, see \cite{wegner-orbital, ESS}. Thus,
Criterion~\ref{cr:1} below yields the validity of the Area Law for the
expectation of the entanglement entropy of these models.

\smallskip

\smallskip The bound \eqref{eq:exp-decay} provides the solution of the first
problem. Moreover, it proves to be the main ingredient in the solution of
the second problem, hence, in obtaining the Area Law for the Schr\"odinger
operators with random potentials, see \cite{EPS}.

Denote $\sigma_{FPDE}(H)$ and $\sigma_{AL}(H)$ 
the sets of the FPDE- and AL-points of our one-body Hamiltonian. Then we can
formulate the main result of  \cite{EPS} (the solution of the second problem for the random
Schr\"odinger operator) as the proof of  the inclusion 
\begin{equation*}  
\sigma_{FPED}(H)\subseteq \sigma_{AL}(H).
\end{equation*}
However, as the analysis of the proof of the Area Law in \cite[Results 2--3]%
{EPS} shows, it is applicable to any ergodic finite-difference operator for
which the estimate $\eqref{eq:exp-decay}$ is valid.

The goal of this work is to use the same approach, i.e., the analysis of $%
\sigma_{FPED}$, to prove the Area Law for certain dynamically generated
(including some quasi-periodic) potentials in \eqref{eq:lro} -- \eqref{eq:W}.

\medskip
It is convenient to formulate the above as

\begin{criterion}
\label{cr:1} Let $S_{\Lambda }(\varepsilon _{F})$ be the entanglement
entropy \eqref{eq:ee} -- \eqref{eq:h} of free lattice fermions whose
one-body Hamiltonian $H_{\omega }$ is a self-adjoint ergodic operator $%
\eqref{eq:herg}$ in $\ell ^{2}(\mathbb{Z}^{d})$, and let $P_{\omega }$ be
its Fermi projection $\eqref{eq:FP}$. Assume that the Fermi energy $%
\varepsilon _{F}$ belongs to $\sigma_{FPED}$(H), i.e., 
\eqref{eq:exp-decay} holds.

Then, for $\Lambda = \Lambda_M = [-M, M]^d\subset\mathbb{Z}^d,\, L = 2M + 1$%
, the expectation of the corresponding entanglement entropy $%
S_\Lambda(\varepsilon_F)$ of \eqref{eq:ee} obeys the Area Law: 
\begin{equation*} 
0 < \lim_{L\to\infty}L^{-(d - 1)}\mathbf{E}\,\{S_{\Lambda_M}(\varepsilon_F)%
\} := s_d(\varepsilon_F) < \infty,
\end{equation*}
see \cite{EPS} for an explicit formula for $s_d(\varepsilon_F)$.
\end{criterion}

Our strategy is essentially determined by Criterion~\ref{cr:1}, i.e., it
consists in proving the exponential bound $\eqref{eq:exp-decay}$ for several
classes of non-random ergodic operators. 

It turns out that in a number of cases under consideration it is convenient
to use a spectral characteristic somewhat more general but simpler than the
Fermi projection and known as the\textit{\ eigenfunction correlator}, introduced by Aizenman \cite{A94}, see also
\cite[Section~1.4, and Chapter~7]{AW}, and Definition $\eqref{def:ef-cor}$
below. The characteristic proves to be important in a number of problems of
spectral analysis of ergodic operators and its applications, see e.g. \cite%
{AW,Pa-Fi:92,St:11} and references therein.

\begin{defin}
\label{def:ef-cor} Let $H$ be a self-adjoint operator on $\ell^2(\mathbb{Z}%
^d)$. Then the \textrm{\ eigenfunction correlator} for a Borel set $I\subset%
\mathbb{R}$ and any $\mathbf{m}, \mathbf{n}\in\mathbb{Z}^d$ is 
\begin{equation}  \label{eq:ef-cor}
Q_I(\mathbf{m}, \mathbf{n}) := \sup_\phi\{|\langle \delta_\mathbf{m},
\phi(H)\chi_I (H) \delta_\mathbf{n}\rangle| :\, \phi\in C(\mathbb{R}), \; \;
\|\phi\|_\infty \leq 1\},
\end{equation}
where $C(\mathbb{R})$ is the space of continuous real-valued functions, $%
\{\delta_{\mathbf{n}}\}_{\mathbf{n}\in\mathbb{Z}^d}$ is the canonical basis
of $\ell^2(\mathbb{Z}^d)$, and $\chi_I$ is the indicator of $I$.
\end{defin}

If the spectrum of $H$ in $I$ is pure point, 
and $\{\lambda_l\}_l$ and $\{\psi_l\}_l$ are the eigenvalues (counted with
their multiplicity) and normalized eigenfunctions respectively, then 
\begin{equation}  \label{eq:ef-cor1}
Q_I(\mathbf{m}, \mathbf{n})=\sum_{\lambda_l \in I} |\psi_l (\mathbf{m})| \,|
\psi_l (\mathbf{n})|.
\end{equation}
This explains the name of $Q_I$.

Next, we have from spectral theorem, $\eqref{eq:FP-ent}$, and $%
\eqref{eq:ef-cor1}$ with $I=[\varepsilon_-, \varepsilon_F)$ 

\begin{equation*} 
|P(\mathbf{m,n})|\le Q_{[\epsilon_-,\varepsilon_F)}(\mathbf{m,n}).
\end{equation*}
We also have 
\begin{equation*} 
\sup_{t \in \mathbb{R} }\{|\langle \delta_\mathbf{m}, e^{-it H}\chi_I (H)
\delta_\mathbf{n}\rangle|\le Q_{I}(\mathbf{m,n}),
\end{equation*}
where the l.h.s. is another spectral characteristics, important in
applications, in particular, in determining the \textit{exponential
dynamical localisation in expectation} (see $\eqref{eq:exp-dyn-loc}$),
one of the strongest forms of the Anderson localisation.

The above bounds and Criterion~\ref{cr:1} imply the following workable
criterion for the validity of the Area Law and the exponential dynamical
localisation in expectation.%

\begin{criterion}
\label{cr:2} Let $H_\omega$ be a self-adjoint ergodic operator $%
\eqref{eq:herg}$ in $\ell^2 (\mathbb{Z}^d)$ and $Q_{I}(\mathbf{m,n})$ be its
eigenfunction correlator $\eqref{eq:ef-cor}$ with a Borel set $I \in \mathbb{%
R}$. Then the bound 
\begin{equation}  \label{eq:ef-cor-decay}
\mathbf{E}\{|Q_{I}(\mathbf{m,n})|\} \le C e^{-c|\mathbf{m}-\mathbf{n}|}, \; 
\mathbf{m,n} \in \mathbb{Z}^d,\;
\end{equation}
where $C<\infty$ and $c>0$ do not depend on $\mathbf{m,n}$ $($but may depend
on $I)$, implies:

\smallskip

 $(i)$ Exponential dynamical localisation in expectation of $%
H_\omega$ on $I$: 
\begin{equation}  \label{eq:exp-dyn-loc}
\mathbf{E}\Big\{\sup_{t\in\mathbb{R}}|\langle \delta_\mathbf{m},
e^{-itH_\omega}\chi_I (H_\omega)\delta_\mathbf{n}|\Big\} \leq C e^{-c|%
\mathbf{m}-\mathbf{n} |}.
\end{equation}

$(ii)$ Area Law $\eqref{eq:al}$ for the expectation $\mathbf{E}%
\{S_\Lambda(\varepsilon_F)\}$ of the entanglement entropy $\eqref{eq:ee}$ -- %
\eqref{eq:h} of free lattice fermions having $H_\omega$ as their one-body
Hamiltonian if $I=(\varepsilon_-, \varepsilon_0]$ is the bottom of the
spectrum of $H_\omega$ and the Fermi energy $\varepsilon_F < \varepsilon_0$.
\end{criterion}
We will consider first the finite-difference operators \eqref{eq:lro} 
with quasi-periodic potentials for which exponential bounds for the Fermi
projection \eqref{eq:FP-ent} and/or the eigenfunction correlator $%
\eqref{eq:ef-cor}$ are essentially known, although sometimes in the form
that has to be made suitable for our purposes. These results are presented
largely for the sake of completeness, and also to demonstrate yet another
application of the spectral theory of ergodic operators, this time to the
quantum information science.

To explain our main new result, note first that quasi-periodic functions
follow the orbits of the irrational winding (total shifts) of the torus $%
\mathbb{T}^d$, a quite simple dynamical system. A considerably more complex
and rich class of dynamical systems with the same phase space $\mathbb{T}^d$
consists of the so-called subshifts of finite type \cite{Adler,
Br-St:02,Da:07}. The spectral analysis of these operators has been recently
developed in \cite{ADZ,ADZ1}. A number of important and highly non-trivial
facts, including the existence of Anderson (spectral) localisation was
established in these papers. However, the crucial for our purposes
exponential decay $\eqref{eq:ef-cor-decay}$ of the eigenfunction correlator
and/or Fermi projection $\eqref{eq:exp-decay}$ is lacking.

We will prove this property, hence by Criterion~\ref{cr:2} obtain the Area
Law, as well as the exponential dynamical localisation in expectation,
which we believe is of independent interest.

{\ The exponential bounds \eqref{eq:exp-decay}, and \eqref{eq:exp-dyn-loc}
are well known in the spectral theory of random ergodic operators, first of
all the Schr\"odinger operators whose potential is a collection of
independent identically distributed random variables \cite{AW,ESS, St:11}. There
are several proofs of the bounds in this case, a quite streamline and
efficient one is based on the analysis of the fractional moments of the
resolvent of the operator. }

The proofs of $\eqref{eq:exp-decay}$ and $\eqref{eq:exp-dyn-loc}$ for the
quasi-periodic operators are based on the positivity of the Lyapunov
exponent (see, e.g. $\eqref{eq:lyap-exp}$) of the corresponding
finite-difference equation and on the related exponential decay of the
eigenfunctions of $H_\omega$. We will use this approach for Theorems~\ref%
{th:ap-d} and \ref{th:ap-1d} in Section~\ref{sec:almost-periodic}. 

The proof of Theorems~\ref{subshit-ee} and \ref{th:subshift-ee} for the
Schr\"odinger operators whose potential is generated by a subshift of finite
type is using as an input the uniform positivity of the Lyapunov exponent
and the uniform large deviation-type estimate from \cite{ADZ1, ADZ}. In addition, we
use special techniques of the theory of dynamical systems involving the
Markov partitions and related Markov chains \cite{Adler, Br-St:02}.
%
%
\section{Finite-difference operators with quasi-periodic potential}

\label{sec:almost-periodic}

In this section, we present the proofs of Theorems~\ref{th:ap-d} and \ref{th:ap-1d} on the Area Law for several classes of non-random ergodic
operators. We begin with the proof of Theorem~\ref{th:ule-d-maryland}, which is important as itself  and crucial in the proofs of Theorems~\ref{th:ap-d} and \ref{th:ap-1d}.  

To the best of our knowledge, the only class of ergodic operators
for which 
the Area Law is rigorously established (using Criterion~\ref{cr:1}) is that
consisting of the Schr\"odinger operators with independent identically
distributed random potentials (aka Anderson model) \cite{EPS}.

Here we show that the class of such operators is considerably larger. In
particular, it includes several families of quasi-periodic (and limit-periodic), i.e.,
deterministic, rather than random, operators. These are operators given by $
\eqref{eq:schrodinger}$, whose ergodic potential $
\eqref{eq:ergodic-potential-1} $ is generated by an irrational shift 
\begin{equation*} 
\, T(\omega) = \omega + \alpha \; \text{mod} \; 2 \pi, \; \omega\in\mathbb{T}
, \alpha\in\mathbb{R}\setminus\mathbb{Q},\, n\in\mathbb{Z},
\end{equation*}
that is, $\Omega = \mathbb{T}$ equipped with the Borel $\sigma$-algebra and $
\mathbf{P}$ being the normalised Lebesgue measure.

The two archetypal examples of such one-dimensional Schr\"odinger operators
are the Maryland model, whose potential is $\eqref{eq:maryland}$ and the
almost Mathieu operator with potential $\eqref{eq:amo}$.

{
We start with the proof of Theorem~\ref{th:ule-d-maryland} which we consider to be of independent interest.

\subsection{Uniformly localised eigenfunctions for the $d$-dimensional, $d\geq 1$, Maryland model.}

\begin{proof}[Proof of Theorem~\ref{th:ule-d-maryland} $(i)$]

\noindent According to \cite[Sections 18.A-18.B]{Pa-Fi:92}, the spectrum of the $d$-dimensional Maryland model 
occupies the whole $\mathbb{R}$, is pure point,
of multiplicity $1$, and is described as follows. There exist a 1-periodic and
monotone on the period function $\lambda _{0}:\mathbb{T\rightarrow R}$ \ and
\begin{equation}
\psi _{0}:\mathbb{R}\times \mathbb{Z}^{d}\rightarrow \mathbb{C},\;\sum_{
\mathbf{n}\in \mathbb{Z}^{d}}|\psi _{0}(\lambda ,\mathbf{n}
)|^{2}=1,\;\forall \lambda \in \mathbb{R},  \label{eq:psize}
\end{equation}
such that the eigenvalues $\{\lambda _{\mathbf{l}}(\omega )\}_{\mathbf{l}\in 
\mathbb{Z}^{d}}$ are 
\begin{equation}
\lambda _{\mathbf{l}}(\omega )=\lambda _{0}(\omega +\langle \mathbf{\alpha ,}
{\mathbf{l\rangle }}),\;\lambda _{{\mathbf{l}}_{1}}(\omega )\neq \lambda _{{
\mathbf{l}}_{2}}(\omega )\Longleftrightarrow {\mathbf{l}}_{1}\neq {\mathbf{l}
}_{2},  \label{eq:lal-1}
\end{equation}
and the corresponding eigenfunctions $\{\psi _{\mathbf{l}}(\omega )\}_{
\mathbf{l}\in \mathbb{Z}^{d}}$, $\psi _{\mathbf{l}}(\omega )=\{\psi _{
\mathbf{l}}(\omega ,\mathbf{n})\}_{\mathbf{n}\in \mathbb{Z}^{d}}$ are
\begin{equation}
\psi _{{\mathbf{l}}}(\omega ,{\mathbf{n}})=\psi _{0}(\lambda _{{\mathbf{l}}
}(\omega ),{\mathbf{n}}-{\mathbf{l}}).  \label{eq:maref-1}
\end{equation}

Note first that the function $\lambda _{0}$
of \eqref{eq:lal-1} is the functional inverse of the Integrated Density of
States $N(\lambda )$ of $H_\omega$ (see \cite[Sections 4.B -- 4.C]{Pa-Fi:92} and 
\cite[Sections 3.3 -- 3.4]{AW} for its definition and properties). We have for the Maryland model 
\begin{equation}
N(\lambda )=\int_{-\infty }^{\lambda }n(\lambda)d\lambda ,  \label{eq:ncm}
\end{equation}%
i.e., 
\begin{equation}
\lambda _{0}\circ N=1,  \label{eq:laN}
\end{equation}%
and
\begin{equation}
n(\lambda )=\frac{g}{(2\pi i)^{d}\pi }\int_{\mathbb{T}^{d}}|w(\mathbf{\eta }%
)-\lambda +ig|^{-2}\prod_{j=1}^{d}\frac{d\eta _{j}}{\eta _{j}},
\label{eq:dos}
\end{equation}%
is the Density of States of $H_\omega$ and $w:\mathbb{T}^{d}\rightarrow \mathbb{R}$
is given by the $d$-dimensional Fourier series with coefficients of %
\eqref{eq:W} 
\begin{align}
& \,w(\mathbf{\eta })=\sum_{\mathbf{n}\in \mathbb{Z}^{d}}W(\mathbf{n})\,\eta
_{1}^{n_{1}}\cdots \eta _{d}^{n_{d}},  \notag \\
\mathbf{\eta }& =(\eta _{1},\dots ,\eta _{d})\in \mathbb{T}^{d},\,\,\mathbf{n%
}=(n_{1},\dots ,n_{d})\in \mathbb{Z}^{d}.  \label{eq:lw}
\end{align}%
Next, the function $\psi _{0}$ of \eqref{eq:psize} is (see \cite[Section 18.B]%
{Pa-Fi:92}) 
\begin{equation}
\psi _{0}(\lambda ,{\mathbf{n}})=\frac{1}{(2\pi i)^{d}(\pi n(\lambda
)/g)^{1/2}}\int_{\mathbb{T}^{d}}\frac{e^{t(\lambda ,\mathbf{\eta })}}{(w(%
\mathbf{\eta })-\lambda -ig)}\prod_{j=1}^{d}\eta ^{n_{j}-1}d\eta _{j},
\label{eq:psiz}
\end{equation}%
where 
\begin{equation}
t(\lambda ,\mathbf{\eta })=\sum_{\mathbf{m}\in \mathbb{Z}^{d}} t_{\mathbf{m}%
}(\lambda )\prod_{j=1}^{d}\eta _{j}^{m_{j}},\; t_{\mathbf{m}} (\lambda)= 
\frac{l_{\mathbf{m}}(\lambda )}{(1-e^{ i\langle \mathbf{\alpha ,m}\rangle })}
\label{eq:t}
\end{equation}%
with 
\begin{equation}
l_{\mathbf{m}}(\lambda )=(2\pi i)^{-d}\int_{\mathbb{T}^{d}}\log c(\lambda ,%
\mathbf{\eta })\prod_{j=1}^{d}\eta ^{m_{j}-1}d\eta _{j},  \label{eq:dc}
\end{equation}%
and 
\begin{equation}
c(\lambda ,\mathbf{\eta })=-\frac{w(\mathbf{\eta })-\lambda +ig}{w(\mathbf{%
\eta } )-\lambda -ig}.  \label{eq:c}
\end{equation}%
Thus, the $c(\lambda ,\mathbf{\eta })$ of \eqref{eq:c} is the input of the
above analytic procedure given by formulae from \eqref{eq:dc} to %
\eqref{eq:psiz} that leads to the eigenfunctions \eqref{eq:maref-1}. 

Our goal is to prove the bound \eqref{eq:th-ulema}. This combined with
Remark~\ref{rmk:ule2} and Criterion~\ref{cr:2} will allow us to
establish the validity of the Area Law for the Maryland model.

We are going to use the following simple fact.

\begin{lemma}
\label{lemma:four} Let $f:\mathbb{T}^{d}\rightarrow \mathbb{C}$ be a
periodic function and $\{F_{\mathbf{n}}\}_{\mathbf{n}\in \mathbb{Z}^{d}}$ be
its Fourier coefficients. We have

\smallskip $($i$)$ If $|F_{\mathbf{n}}|\leq Fe^{-\rho _{1}|\mathbf{n}%
|},\;F<\infty ,\;\rho _{1}>0,\;\mathbf{n}\in \mathbb{Z}^{d},$ then $f$
admits the analytic continuation into 
\begin{equation*}
\mathbb{T}_{\rho _{1}}^{d}=\prod\limits_{j=1}^{d}\mathbb{T}_{\rho _{1}},\;%
\mathbb{T}_{\rho _{1}}=\{z\in \mathbb{C}:e^{-\rho _{1}}\leq |z|\leq e^{\rho
_{1}}\}, 
\end{equation*}
where for any $0 < \rho_2 < \rho_1$ and $a = \rho_1 - \rho_2 > 0$ it has the
bound
\begin{equation}
|f(\mathbf{z})|\leq F\left(\frac{e^{a}+1}{e^{a}-1}\right)^{d},\; \text{for any} \quad \mathbf{z}%
\in \mathbb{T}_{\rho _{2}}^{d}.  \label{eq:oc1}
\end{equation}

$($ii$)$ If $f$ admits the analytic continuation to $\mathbb{T}_{\rho
_{1}}^{d},\;\rho _{1}>0,$ then%
\begin{eqnarray}
|F_{\mathbf{n}}| &\leq &Fe^{-\rho _{1}|\mathbf{n}|},\;F<\infty ,\;\mathbf{n}%
\in \mathbb{Z}^{d},  \notag \\
F &=&\max_{|r|\leq \rho _{1},\;|\mathbf{\eta }|=1}|f(e^{r}\mathbf{\eta })|.
\label{eq:oc2}
\end{eqnarray}
\end{lemma}

\begin{proof}
Assertion (i) follows from a direct calculation, while assertion (ii) is
given by an appropriate deformation of the ``contour" $\mathbb{T}^{d}$ to
the poly-annulus $\mathbb{T}_{\rho _{2}}^{d}$, for any $\rho_2 < \rho_1$, in
the formula for the Fourier coefficients $F_{\mathbf{n}}$ of $f$.
\end{proof}

Now, applying Lemma~\ref{lemma:four} (i) to $w$ of \eqref{eq:lw} whose
Fourier coefficients satisfy \eqref{eq:W}, we find that $w$ admits the
analytic continuation to $\mathbb{T}_{\rho _{1}}^{d},\;0<\rho _{1}<\rho $,
where $\rho > 0$ is given in \eqref{eq:W}. Moreover, since the ``initial" $w$
of \eqref{eq:lw} is real-valued ($w:\mathbb{T}^{d}\rightarrow \mathbb{R}$),
we have for its analytic continuation%
\begin{equation}
\lim_{r\rightarrow 0}\mathrm{Im} \ w(e^{r}\mathbf{\eta })=0,\;|\mathbf{\eta }%
|=1.  \label{eq:imw}
\end{equation}%
We conclude that there exists $0<\rho _{2}<\rho _{1}$ such that for any $|r|
\leq \rho_2$ 
\begin{equation}
|w(e^{r}\mathbf{\eta })-\lambda \pm i g| \ge |g\pm \mathrm{Im} \, w(e^{r}%
\mathbf{\eta })|\ge |g|-|\mathrm{Im} \, w(e^{r}\mathbf{\eta })| \ge g/2,
\label{eq:2g}
\end{equation}%
where the last inequality follows from $\eqref{eq:imw}$. We have then from %
\eqref{eq:c} that $|\log c(\lambda ,\mathbf{\eta })|\leq |\log |c(\lambda ,%
\mathbf{\eta })|\ |+2\pi ,$ and for any $|r| \leq \rho_2$ 
\begin{eqnarray*}
|\log |c(\lambda ,e^{r}\mathbf{\eta })| | &=&\big|\log |1+2ig(w(e^{r}\mathbf{%
\eta })-\lambda -ig)^{-1}|\big| \\
&\leq &2|g|\ |w(e^{r}\mathbf{\eta })-\lambda -ig|^{-1}\leq 4,
\end{eqnarray*}%
where in obtaining the last bound of the r.h.s. above we used \eqref{eq:2g}.
Combining the two last bounds, we get for any $|r| \leq \rho_2$ 
\begin{equation*}
|\log c(\lambda ,e^{r}\mathbf{\eta })|\leq C,
\end{equation*}%
where $C$ is a constant. Now combining \eqref{eq:dc} and Lemma~\ref{lemma:four}
(ii) we obtain 
\begin{equation*}  
|l_{\mathbf{m}}(\lambda )|\leq L e^{-\rho _{2}|\mathbf{m}|},
\end{equation*}
where $L < \infty$ and $\rho _{2} > 0$ do not depend on $\lambda $ and $\mathbf{m}$.

\noindent
This and the Diophantine condition \eqref{eq:Dob} implies for the Fourier
coefficients $t_{\mathbf{m}}(\lambda) $ of  \eqref{eq:t} 
\begin{equation}  \label{eq:tmd1}
|t_{\mathbf{m}}(\lambda )|\leq Le^{-\rho _{2}^{\prime }|\mathbf{m}|},\; 0<\rho^{\prime }_2 <\rho_2.
\end{equation}

Then, Lemma \ref{lemma:four} (ii) imply that $t(\lambda ,\cdot )$ of %
\eqref{eq:t} admits the analytic continuation into $\mathbb{T}_{\rho
_{3}}^{d}$ with some $0<\rho_{3}<\rho^{\prime}_{2}$ and is bounded there
for any $|r|\leq\rho_3$, i.e., 
\begin{equation}
|t(\lambda ,e^{r}\mathbf{\eta})|\leq T<\infty, \; |r|\leq\rho_3, \label{eq:oct}
\end{equation}%
where $T$ and $\rho_3$ are $\lambda $-independent according to \eqref{eq:oc1}.

Next, \eqref{eq:2g}, and the analyticity of $w$ in $\mathbb{T}_{\rho_{1}}^{d}$ imply that 
\begin{equation*}
h(\lambda ,\mathbf{z})=\frac{e^{t(\lambda ,\mathbf{z})}}{(\pi n(\lambda
)/g)^{1/2}(w(\mathbf{z})-\lambda -ig)}  
\end{equation*}
is analytic in $\mathbf{z}$ in $\mathbb{T}_{\rho _{3}}^{d}$ for any $\lambda
\in \mathbb{R}$. Since the r.h.s. of \eqref{eq:psiz} is the Fourier
coefficient of $h(\lambda ,\mathbf{\cdot })$, \eqref{eq:oc2} yields the
following bound for the r.h.s. of \eqref{eq:psiz}%
\begin{equation}
\Psi e^{-\rho _{3}|\mathbf{n}|},\;\mathbf{n}\in \mathbb{Z}^{d},\;\Psi
=\max_{\lambda \in \mathbb{R},\;|r|\leq \rho _{3},\;|\mathbf{\eta }%
|=1}|h(\lambda ,e^{r}\mathbf{\eta })|.  \label{eq:ocf}
\end{equation}%
It follows from \eqref{eq:dos} that

\begin{enumerate}
\item $(n(\lambda ))^{-1/2}\geq (\pi g)^{1/2},$ and $(n(\lambda
))^{1/2}=|\lambda |^{-1}(1+o(1)),\;|\lambda |\rightarrow \infty $, by %
\eqref{eq:dos};

\smallskip

\item \ $|w(\mathbf{z})-\lambda -ig|,\;|\mathbf z|\in \lbrack e^{-\rho _{3}},e^{\rho
_{3}}]$ is bounded from below by \eqref{eq:2g} and is $|\lambda
|(1+o(1)),\;|\lambda |\rightarrow \infty $;

\smallskip

\item \ $e^{t(\lambda ,\mathbf{z})}\leq e^{T}$ by \eqref{eq:oct}.
\end{enumerate}

\noindent This and \eqref{eq:ocf} imply that $\Psi $ is independent of $%
\lambda \in \mathbb{R}$ and $\mathbf{n}\in \mathbb{Z}^{d}$.


Thus we obtain
\begin{equation}
|\psi _{0}(\lambda ,\mathbf{n})|\leq Ce^{-c|\mathbf{n}|},\;\mathbf{n}\in 
\mathbb{Z}^{d},  \label{eq:ulema-1}
\end{equation}
 with $\Psi $ of \eqref{eq:ocf} as $C$ and $\rho
_{3}$ of \eqref{eq:oct} as $c$.
Combining \eqref{eq:maref-1} and 
\eqref{eq:ulema-1}, we obtain 
\begin{equation}
|\psi _{{\mathbf{l}}}(\omega ,{\mathbf{n}})|\leq Ce^{-c|\mathbf{n-l}|},
\label{eq:uule}
\end{equation}%
where $C<\infty $ and $c>0$ are independent of $\omega $. 
\end{proof}


\begin{proof}[Proof of Theorem~\ref{th:ule-d-maryland} $(ii)$]
According to  \cite[Section 18.C]{Pa-Fi:92}, the pure point spectrum consists of two semi-infinite components \eqref{eq:gabe}

Since in this case operator $W$ in \eqref{eq:lro} is the one-dimensional
discrete Laplacian, we set $W_0=0, \; W_{\pm 1}=1$ in \eqref{eq:W}. Hence, $w$ of \eqref{eq:lw} is 
\begin{equation}  \label{eq:w1}
w(\eta)=\eta+\eta^{-1}.
\end{equation}
Thus, $w$ is analytic everywhere except zero and infinity, i.e., in this
case the analog of $\rho_1$ of the proof above is infinity.


Next, the integral in the r.h.s. of formula \eqref{eq:dc} for $d=1$ and $w$
of \eqref{eq:w1} can be calculated yielding (cf. \eqref{eq:dc}) 
\begin{equation*}
l_m=\frac{2i}{|m|}e ^{-\gamma|m|}\sin|m| \varphi.
\end{equation*}
Here $e^{-\gamma -i\varphi}=:\eta_0$ is the root of equation $\eta
+\eta^{-1}=\lambda+ig$, such that $|\eta_0|<1$ and $\gamma:=\gamma(
\lambda,g) $ is the Lyapunov exponent of the corresponding finite-difference
equation 
\begin{equation}  \label{eq:maryeq}
u_{n+1}+u_{n-1} + g \tan(\alpha n + \omega)u_n=\lambda u_n.
\end{equation}
We have 
\begin{equation}\label{eq:le-mary}
\mathrm{sinh} \gamma(\lambda,g)=s>0,
\end{equation}
where $s > 0$ is such that 
\begin{equation}\label{eq:le-mary2}
s^4+(4-\lambda^2 -g^2)s^2 -g^2=0,
\end{equation}
implying that $\gamma(\lambda,g)$ is an even and convex function of $\lambda
\in \mathbb{R}$ and 
\begin{equation}\label{eq:le-mary3}
\gamma(\lambda,g) \ge \gamma(0,g)=\mathrm{arc \, sinh} (|g|/2)>0,\; g \neq 0.
\end{equation}
Denote $\lambda_0 \ge 0$ the root of equation $\gamma(\lambda,g)=\beta(%
\alpha)$, with $\beta(\alpha)$ defined in \eqref{eq:beta} and fix $%
\varepsilon_0 > \lambda_0$. Then, setting 
\begin{equation*}
\rho^{\prime }:=\gamma(\varepsilon_0,g)-\beta(\alpha)>0,
\end{equation*}
we obtain from \eqref{eq:t} and \eqref{eq:dc} for the Fourier coefficients $
t_m(\lambda)$ of function $t(\mathbf{\eta},\cdot)$ of \eqref{eq:t} for $d=1$
\begin{equation*}
|t_m(\lambda)| \le Ce^{-\rho^{\prime }|m|}, \; |\lambda| \ge \varepsilon_0.
\end{equation*}
This bound is an analog of \eqref{eq:tmd1}. Hence, repeating the argument of
the previous proof  starting from \eqref{eq:t} we get
\begin{equation} \label{eq:ulema2}
|\psi_l(\omega,m)| \le Ce^{-c |m-l|}, \; m \in \mathbb{Z}, \; |\lambda| \ge
\varepsilon_0,
\end{equation}
with $\omega$-independent parameters $C < \infty, \; c >0$ and $
\varepsilon_0 \ge 0$.
\end{proof}
} 

\subsection{Proof of Theorem~\ref{th:ap-d}} 


\medskip

\begin{proof}[Proof of Theorem~\protect\ref{th:ap-d}~$(i)$]
Spectral
theorem for the Fermi projection
\begin{equation}
P(\omega ,\mathbf{m,n})=\sum_{\lambda _{\mathbf l}(\omega )\leq \varepsilon
_{F}}\psi _{{\mathbf{l}}}(\omega ,\mathbf{m})\overline{\psi _{{\mathbf{l}}%
}(\omega ,{\mathbf{n}})}  \label{eq:pf_spec}
\end{equation}
combined with \eqref{eq:uule} imply 
\begin{align}
& \left\vert P(\omega ,\mathbf{m,n})\right\vert \leq Q_{[\varepsilon
_{-},\varepsilon _{F})}(\omega ,\mathbf{m,n})=\sum_{\mathbf{l}:\lambda
_{\mathbf l}(\omega )\leq \varepsilon _{F}}|\psi _{{\mathbf{l}}}(\omega ,\mathbf{m}%
)\psi _{{\mathbf{l}}}(\omega ,{\mathbf{n}})|  \notag \\
& \hspace{5cm}\leq C^{2}\sum_{\mathbf{l}:\lambda _{\mathbf l}(\omega )\leq
\varepsilon _{F}}e^{-c|\mathbf{m-l}|-c|\mathbf{n-l}|}.  \label{eq:pqexp}
\end{align}%
Now the triangle inequality $|\mathbf{m-l}|+|\mathbf{n-l}|\geq |\mathbf{m-n}|$ yields the following upper bound for the r.h.s. \eqref{eq:pqexp}
\begin{equation}
C^{2}e^{-c|\mathbf{m-n}|/2}\sum_{\mathbf{l}\in \mathbb{Z}^{d}}e^{-c|\mathbf{
m-l}|/2}=\widetilde{C}e^{-\tilde{c}|\mathbf{m-n}|},  \label{eq:pqexp1}
\end{equation}
where
\begin{equation}
\widetilde{C}=C^{2}\sum_{\mathbf{m}\in \mathbb{Z}^{d}}e^{-c|\mathbf{m}%
|/2}<\infty .  \label{eq:pqexp2}
\end{equation}%
Combining \eqref{eq:pf_spec} -- \eqref{eq:pqexp2}, we obtain an analog of 
\eqref{eq:exp-decay} and \eqref{eq:ef-cor-decay} with some $\omega$-independent $c=\tilde{c},\,C=\tilde{C}$ for $P_{\omega }(\varepsilon _{F})$
and $Q_{I}$, but not for their expectations. Since, however, $c=\tilde{c}$
and $C=\tilde{C}$ are $\omega $-independent, the same bounds hold for the
corresponding expectations in Criterion~\ref{cr:1} and Criterion~\ref{cr:2},
which, in turn, imply the validity of the Area Law for the Maryland model.
\end{proof}

\smallskip

Here is one more example where the localisation centers are also explicit
and non-random as in the Maryland model.

\begin{proof}[Proof of Theorem~\protect\ref{th:ap-d}~$(ii)$]
Kachkovskiy, Parnovski and Shterenbereg consider in \cite{KPS} a class of $d$%
-dimensional Schr\"{o}dinger operators $\eqref{eq:schrodinger}$ with potentials \eqref{eq:lpp-d} described in
the assertion, i.e., with $\xi $-H\"{o}lder $1$-periodic monotone potentials $%
\eqref{eq:monotone-d1}$, and weakly 
Diophantine frequencies $\eqref{eq:weakly-Dio}$. 

An archetypal example in this class of operators is the multidimensional
Maryland model from Theorem \ref{th:ap-d}(i). However, unlike the above
class of operators, the Maryland model is explicitly solvable for any
non-zero coupling constants $g$ and for the Diophantine frequencies %
\eqref{eq:Dob}. 

The main (perturbative) result of \cite{KPS} states:

\begin{proposition}
\label{prop:monotone-d}[\cite{KPS}, Theorem~1.1] Let $\xi \geq 1,\, \rho,\mu
> 0,\, 0 < \delta < 1$. There exists $g_0 = g_0(d, \rho, \mu, \xi, \delta) >
0$ such that for every $g > g_0,\, \alpha\in\Omega_{\rho, \mu}$, and $\xi$%
-H\"older $(1$-periodic$)$ monotone function $v: \mathbb{R}\rightarrow [-\infty, +\infty)$
one can find a $1$-periodic function $\lambda: \mathbb{R}\rightarrow
[-\infty, +\infty)$, strictly increasing on $[0, 1)$, and a $1$-periodic
measurable function $\psi: \R \rightarrow \ell^2(\mathbb{Z}^d)$ such
that 
\begin{equation}  \label{eq:mar0}
H_\omega\psi(\omega) = \lambda(\omega)\psi(\omega),\, \text{for all}\,\
\omega\in\mathbb{R},
\end{equation}
and 
\begin{align*} 
& \|\psi(\omega)\|_{\ell^2(\mathbb{Z}^d)} = 1,\, |\psi(\omega, \mathbf{0}) -
1| < g^{-(1 - \delta)}  \notag \\
& \hspace{0.5cm} |\psi(\omega, \mathbf{n})| \leq g^{-(1 - \delta)|\mathbf{n}
|}\,\quad \text{for}\,\quad |\mathbf{n}|\neq 0,
\end{align*}
where $\psi(\omega, \mathbf{n}) = \langle \psi(\omega), \delta_{\mathbf{n}%
}\rangle$ denotes the components of the vector-valued function $\psi$ and $%
\{\delta_{\mathbf{n}}\}_{\mathbf{n }\in \mathbb{Z}^d}$ denotes the canonical
basis of $\ell^2(\mathbb{Z}^d)$.
\end{proposition}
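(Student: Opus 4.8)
The plan is to construct the distinguished eigenpair $(\lambda(\omega),\psi(\omega))$ by a Feshbach--Schur reduction to the origin, turning the eigenvalue problem into a scalar self-consistency equation for $\lambda$, and to extract the exponential localisation from a hopping (Neumann) expansion of the resolvent of $H_\omega$ restricted to $\Z^d\setminus\{\mathbf 0\}$, in which the large coupling $g$ supplies an exponential gain that overpowers the sub-exponentially small denominators permitted by the weak Diophantine condition \eqref{eq:weakly-Dio}.

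First I would fix $\omega$ and write $H_\omega=gD_\omega+\Delta$, where $D_\omega$ is the diagonal operator with entries $v(\omega+\langle\mathbf n,\alpha\rangle)$. Normalising the sought eigenvector by $\psi(\mathbf 0)=1$ and letting $P$ be the orthogonal projection onto $\ell^2(\Z^d\setminus\{\mathbf 0\})$, the equation $H_\omega\psi=\lambda\psi$ splits into the off-origin part
\[
P\psi=-\bigl(PH_\omega P-\lambda\bigr)^{-1}P\Delta\delta_{\mathbf 0},
\]
together with the scalar relation read off at $\mathbf 0$,
\[
\lambda=g\,v(\omega)-\bigl\langle\Delta\delta_{\mathbf 0},\,(PH_\omega P-\lambda)^{-1}\Delta\delta_{\mathbf 0}\bigr\rangle .
\]
Since $\|\Delta\|=2d$, I would seek $\lambda$ close to $g\,v(\omega)$ and abbreviate $R(\lambda)=(PH_\omega P-\lambda)^{-1}$.

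The crux, and the step I expect to be the main obstacle, is the resolvent estimate
\[
\bigl|R(\lambda)(\mathbf m,\mathbf n)\bigr|\le C\,g^{-(1-\delta)|\mathbf m-\mathbf n|},\qquad \mathbf m,\mathbf n\in\Z^d\setminus\{\mathbf 0\}.
\]
I would derive it by expanding $R(\lambda)=\sum_{k\ge0}\bigl(-(gD_\omega-\lambda)^{-1}P\Delta P\bigr)^k(gD_\omega-\lambda)^{-1}$ as a sum over nearest-neighbour walks $\mathbf m=\mathbf x_0\to\cdots\to\mathbf x_k=\mathbf n$ avoiding $\mathbf 0$, each carrying the product of denominators $\prod_{j=0}^k\bigl(g\,v(\omega+\langle\mathbf x_j,\alpha\rangle)-\lambda\bigr)^{-1}$. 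For a generic site the denominator is $\gtrsim g$; the dangerous sites are those with $v(\omega+\langle\mathbf x,\alpha\rangle)\approx v(\omega)$, i.e. $\langle\mathbf x,\alpha\rangle\approx0\pmod 1$. Here the monotonicity \eqref{eq:monotone-d1} converts the arithmetic into a quantitative separation: by \eqref{eq:weakly-Dio}, $\|\langle\mathbf x,\alpha\rangle\|_d\ge\exp\{-\rho|\mathbf x|^{1/(1+\mu)}\}$, so such a denominator is still $\gtrsim g\,\exp\{-\rho\xi|\mathbf x|^{1/(1+\mu)}\}$, and — decisively — such a site lies at distance $\gtrsim|\mathbf x|$ from the origin, so any walk reaching it has length $k\gtrsim|\mathbf x|$. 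Reserving a fraction $\delta$ of the exponential budget, the generic factor $g^{-(1-\delta/2)k}$ supplies the decay while $g^{-(\delta/2)k}$ dominates $\exp\{\rho\xi|\mathbf x|^{1/(1+\mu)}\}$ for $g$ large, because $k\gtrsim|\mathbf x|\gg|\mathbf x|^{1/(1+\mu)}$. The delicate bookkeeping is to control walks meeting several dangerous sites and revisits, charging each return of at least two hopping steps against its small-denominator factor; this is exactly where the weak Diophantine condition (small denominators only far out) meets large $g$, and choosing $g_0=g_0(d,\rho,\mu,\xi,\delta)$ large enough makes the walk sum converge with the stated bound, uniformly for $\lambda$ within $O(1/g)$ of $g\,v(\omega)$.

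Granted the resolvent estimate, the remaining steps are routine. The scalar relation reads $\lambda=g\,v(\omega)+F(\lambda,\omega)$ with $|F|\le\sum_{|\mathbf e|=|\mathbf e'|=1}|R(\lambda)(\mathbf e,\mathbf e')|=O(1/g)$ and, using $\partial_\lambda R=R^2$ together with the off-diagonal decay, $|\partial_\lambda F|=O(1/g)$, so the contraction mapping theorem yields a unique solution $\lambda=\lambda(\omega)$ at distance $O(1/g)$ from $g\,v(\omega)$, measurable in $\omega$ since $R(\lambda)$ is. Strict monotonicity follows from $\lambda(\omega_2)-\lambda(\omega_1)\ge g\bigl(v(\omega_2)-v(\omega_1)\bigr)-|F(\lambda(\omega_2),\omega_2)-F(\lambda(\omega_1),\omega_1)|$, where the dominant term uses the lower bound \eqref{eq:monotone-d1} and the correction is a slowly varying $O(1/g)$ perturbation. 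One-periodicity of $\lambda$ and $\psi$ in $\omega$ is inherited from that of $v$. Finally, substituting $\lambda(\omega)$ gives $\psi(\mathbf n)=-\sum_{|\mathbf e|=1}R(\lambda(\omega))(\mathbf n,\mathbf e)$, whence $|\psi(\mathbf n)|\le C\,g^{-(1-\delta)|\mathbf n|}$; normalising $\widehat\psi=\psi/\|\psi\|$ with $\|\psi\|^2=1+O(g^{-2(1-\delta)})$ preserves the decay after a harmless enlargement of $g_0$ absorbing $C$, and yields $|\widehat\psi(\mathbf 0)-1|<g^{-(1-\delta)}$, completing the construction of the eigenpair asserted in \eqref{eq:mar0}.
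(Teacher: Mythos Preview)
The paper does not prove this proposition: it is quoted from \cite{KPS} (their Theorem~1.1) and used as a black box inside the proof of Theorem~\ref{th:ap-d}~(ii). There is therefore no argument in the present paper to compare your attempt against.

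Your Feshbach--Schur reduction with a hopping (Neumann) expansion is a reasonable direct strategy and correctly identifies the mechanism: the weak Diophantine condition \eqref{eq:weakly-Dio} forces sites with small denominator $g\,\|\langle\mathbf x,\alpha\rangle\|_d^{\xi}$ to lie at distance $|\mathbf x|\gtrsim(\log g)^{1+\mu}$ from the origin, and along any walk reaching such $\mathbf x$ the accumulated gain $g^{-(\delta/2)k}$ with $k\gtrsim|\mathbf x|$ beats the sub-exponential loss $\exp\{\rho\xi|\mathbf x|^{1/(1+\mu)}\}$. Judging from its title, \cite{KPS} proceeds instead by an iterative perturbative diagonalisation (a KAM-type scheme), conjugating away off-diagonal terms scale by scale rather than inverting $PH_\omega P-\lambda$ in one shot. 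The two routes are not equivalent in difficulty: your one-step expansion pushes all the small-denominator bookkeeping into the walk sum (revisits, clusters of near-resonant sites), which you flag but do not resolve, whereas the iterative scheme isolates one resonance scale at a time. More importantly, your argument for strict monotonicity of $\lambda(\omega)$ is too quick. The correction $F(\lambda,\omega)$ depends on $\omega$ through the potential at \emph{every} site, and since $v$ is $1$-periodic monotone (hence jumps at integers, and is allowed to take the value $-\infty$), the entries $v(\omega_2+\langle\mathbf n,\alpha\rangle)-v(\omega_1+\langle\mathbf n,\alpha\rangle)$ are neither uniformly small nor of a fixed sign; the claim that $F(\lambda(\omega_2),\omega_2)-F(\lambda(\omega_1),\omega_1)$ is dominated by $g(v(\omega_2)-v(\omega_1))\ge g(\omega_2-\omega_1)^\xi$ for \emph{all} $0\le\omega_1<\omega_2<1$ does not follow from $|F|=O(1/g)$ alone.
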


It follows then from the proposition that

\begin{itemize}
\item 
the spectrum of $H_\omega$ occupies the whole $\mathbb{R}$,
 is pure point and of multiplicity 1, its eigenvalues 
 $\{\lambda_{\mathbf{l}}(\omega)\}_{\mathbf{l}\in
 \mathbb{Z}^d}$ and the corresponding eigenfunctions   
 $\{\psi_{\mathbf{l}}(\omega)\}_{\mathbf{l}\in
 \mathbb{Z}^d}$,  $\psi_{\mathbf{l}}(\omega)=
 \{\psi_{\mathbf{l}}(\omega,\mathbf{n})\}_{\mathbf{n}\in
 \mathbb{Z}^d}$ are
 \begin{equation}\label{eq:csr}
 \lambda_{\mathbf{l}}(\omega)=\lambda(\omega+ \langle \mathbf{\alpha},\mathbf{l}
 \rangle), \; \psi_{\mathbf{l}}(\omega,\mathbf{n})=\psi(\omega+ \langle \mathbf{\alpha},\mathbf{l}
 \rangle,\mathbf{n}-\mathbf{l}),
 \end{equation}


\smallskip

\item the eigenfunctions admit the bound $|\psi_{\mathbf{l}}(\omega,\mathbf{n%
})| \le Ce^{-c|\mathbf{n}-\mathbf{l}|}$ with the $\omega$-independent $C<
\infty$ and $c > 0$, i.e., we have the ULE property $\eqref{eq:ule}$.
\end{itemize}

\smallskip Thus, the description of the spectrum for this class of operators
coincides, at least for large $g$, with that of \eqref{eq:psize} -- %
\eqref{eq:maref-1} for the Maryland model and admits the exponential (ULE)
bound \eqref{eq:ule} analogous to \eqref{eq:th-ulema}. Thus, repeating
literally the proof of Theorem \ref{th:ap-d} (i), we arrive to the same
conclusion on the validity of the Area Law (and the exponential dynamical
localisation in expectation) on the whole spectrum of the operators in
question. 
\end{proof}

\begin{rmk}
The relations \eqref{eq:mar0} and \eqref{eq:csr} are known as the covariant
spectral representation and can also be easily checked in the Maryland
model. For their general discussion see \cite{GJLS}.
\end{rmk}

\begin{proof}[Proof of Theorem~\protect\ref{th:ap-d}~$(iii)$]
Ge, You and Zhou consider in \cite{GYZ} the quasi-periodic operators %
\eqref{eq:lro} -- \eqref{eq:W} 
with the multidimensional analog \eqref{eq:amo-d} of the almost Mathieu
potential $\eqref{eq:amo}.$ 
They establish 

\begin{proposition}
\label{prop:gyz}[\cite{GYZ}, Theorem~1.2] For $\alpha\in DC_d$ of %
\eqref{eq:DC_d} there exists $g_0(\alpha, d) > 0$, such that if $g > g_0$,
then the operator \eqref{eq:lro} -- \eqref{eq:W} with potential %
\eqref{eq:amo} exhibits exponential dynamical localisation in expectation $%
\eqref{eq:exp-dyn-loc}$. 
\end{proposition}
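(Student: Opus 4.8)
The plan is to establish Proposition~\ref{prop:gyz} by \emph{Aubry duality}, which turns the supercritical long-range operator into a one-dimensional quasi-periodic Schr\"odinger operator in the perturbative (subcritical) regime, and then to run the ``almost reducibility $\Rightarrow$ localisation'' machinery. First I would set up the duality. Since $|W(\mathbf n)|\le We^{-\rho|\mathbf n|}$, the symbol $\widehat W(\theta)=\sum_{\mathbf k\in\Z^d}W(\mathbf k)e^{2\pi i\langle\mathbf k,\theta\rangle}$ is real-analytic on a strip and real on $\T^d$. Conjugating \eqref{eq:lro}--\eqref{eq:W} (with the multidimensional cosine potential \eqref{eq:amo-d}) by the Fourier transform $\ell^2(\Z^d)\to L^2(\T^d)$ and restricting to an $\alpha$-orbit, one finds that the spectral problem for $H_\omega$ is dual to that of the one-dimensional operator $(\widehat H_\theta u)(m)=u(m+1)+u(m-1)+\tfrac{2}{g}\widehat W(\theta+m\alpha)u(m)$ on $\ell^2(\Z)$, with frequency vector $\alpha\in DC_d$ and an analytic potential $\tfrac{2}{g}\widehat W$ that is small for $g$ large. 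Equivalently, the relevant object is the analytic $\SL(2,\R)$ cocycle $(\alpha,A_E)$ with $A_E(\theta)=\begin{pmatrix}E-\tfrac{2}{g}\widehat W(\theta)&-1\\ 1&0\end{pmatrix}$ over $\theta\mapsto\theta+\alpha$ on $\T^d$, which for $g\ge g_0(\alpha,d)$ is $O(g^{-1})$-close to the constant cocycle with matrix $\begin{pmatrix}E&-1\\1&0\end{pmatrix}$.

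The heart of the argument is \textbf{quantitative almost reducibility of $(\alpha,A_E)$, uniformly in the energy $E$ over the spectrum}. For $\alpha\in DC_d$ and the perturbation $\tfrac{2}{g}\widehat W$ sufficiently small one runs a multi-frequency KAM iteration conjugating $(\alpha,A_E)$, by analytic transformations of controlled norm and analyticity width, to cocycles arbitrarily close to constant; the Diophantine condition $DC_d$ controls the small divisors at each step. The non-resonant steps produce quasi-periodic Bloch solutions of $\widehat H_\theta$, and, dualising back, one obtains for a.e.\ $\omega$ a complete orthonormal system $\{\psi_l(\omega)\}_l$ of eigenfunctions of $H_\omega$ obeying $|\psi_l(\omega,\mathbf m)|\le Ce^{-c|\mathbf m-\mathbf m_l(\omega)|}$ with $\omega$-independent $C<\infty$, $c>0$ (the ULE bound \eqref{eq:eule}); the resonant steps, which can only occur for $\theta$ in a set of exponentially small measure at each scale, determine the localisation centres $\mathbf m_l(\omega)\in\Z^d$ and provide, at each scale $R$, a quantitative bound on the $\omega$-measure of the event that a window of size $R$ around a fixed site contains a centre with an anomalously slow decay rate.

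Given this, the passage to \eqref{eq:exp-dyn-loc} is the summation argument already used for the Maryland model. For $\omega$ in the good set, $\sup_{t}|\langle\delta_{\mathbf m},e^{-itH_\omega}\chi_I(H_\omega)\delta_{\mathbf n}\rangle|\le\sum_l|\psi_l(\omega,\mathbf m)||\psi_l(\omega,\mathbf n)|$, and the triangle inequality together with the ULE bound and a count of centres in dyadic annuli around $\mathbf m$ and $\mathbf n$ gives $\widetilde C e^{-\tilde c|\mathbf m-\mathbf n|}$ exactly as in \eqref{eq:pqexp}--\eqref{eq:pqexp2}. Integrating over $\omega$ and splitting into the good set and its complement---on which one uses only $\|\chi_I(H_\omega)\|\le1$ and a crude transfer-matrix (Shnol-type) bound, multiplied by the exponentially small measure of the bad set at the scale $|\mathbf m-\mathbf n|$---yields \eqref{eq:exp-dyn-loc}. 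Replacing $e^{-itH_\omega}$ by an arbitrary $\phi\in C(\R)$ with $\|\phi\|_\infty\le1$ in the same computation produces the eigenfunction correlator bound \eqref{eq:ef-cor-decay}, which is the input required by Criterion~\ref{cr:2} to conclude the Area Law.

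The main obstacle is the second step. In one frequency one could invoke Avila's global theory of analytic $\SL(2,\R)$ cocycles, but for $\alpha\in\T^d$ with $d\ge2$ no such global theory is available, so a genuine KAM scheme is unavoidable; keeping the small-divisor losses under control requires $DC_d$, and---crucially for a single rate $c$ and constant $C$ to work in \eqref{eq:exp-dyn-loc} and \eqref{eq:ef-cor-decay}---every estimate in the iteration must be uniform in the energy $E$ on the whole spectrum, and the measures of the resonant phase sets must be summable over scales. This uniform multi-frequency quantitative almost reducibility is precisely what is supplied by \cite{GYZ}; taking it as the input, the duality bookkeeping and the summation above complete the proof.
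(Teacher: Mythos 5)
This proposition is not proved in the paper at all: it is imported verbatim as \cite{GYZ}, Theorem~1.2, and the only argument the paper adds in its vicinity is the later observation (used for Theorem~\ref{th:ap-d}~(iii)) that the proof in \cite{GYZ} uses only $\|e^{-itH}\|\le 1$, so it extends from $e^{-itH}$ to a general $\phi(H)$ and hence to the eigenfunction correlator. Your proposal instead sketches the internal mechanism of \cite{GYZ} (Aubry duality to a one-dimensional cocycle over the $\T^d$ rotation with small analytic potential $\tfrac{2}{g}\widehat W$, then multi-frequency quantitative almost reducibility), but at the end you yourself defer exactly that quantitative input to \cite{GYZ}. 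As a proof of the proposition this is circular: the step you take as given is, in substance, the proposition being cited, so your argument does not go beyond the paper's citation.

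More seriously, the intermediate claim on which your summation step rests is false as stated: dualising almost reducibility does \emph{not} produce a complete orthonormal eigenbasis satisfying the ULE bound \eqref{eq:eule} with $\omega$-independent $C,c$ for cosine-type potentials. ULE fails already for the almost Mathieu operator \eqref{eq:amo} (see the paper's Remark~\ref{rmk:ule1} and \cite{dRJLS}), and the same resonance mechanism is present for the multidimensional analog \eqref{eq:amo-d}: for phases $\omega$ close to resonances the eigenfunctions have distant localisation centres and only semi-uniform (SULE-type) decay, with constants that cannot be chosen uniformly in $\omega$. This is precisely why the paper handles Maryland and the monotone-potential models via ULE and the elementary summation \eqref{eq:pqexp}--\eqref{eq:pqexp2}, but for the cosine case must instead invoke the expectation results of \cite{JK} and \cite{GYZ}, whose whole point is to combine non-uniform decay with measure estimates on the resonant phase sets and integrate in $\omega$. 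Your own final good/bad-$\omega$ splitting tacitly concedes this (if ULE held almost surely with uniform constants, no exceptional set would be needed), so the middle step as written would fail; repairing it means dropping the ULE claim and carrying out the measure-theoretic bookkeeping of \cite{GYZ} in full, i.e., reproducing the very theorem the paper quotes.
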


However, as follows from the analysis of the proof, the only property of the
evolution operator $e^{-itH}$ that is used in \cite{GYZ} is the bound $%
||e^{-itH}|| \leq 1$ that follows from the elementary bound $|e^{-itx}| \le
1 $ of the exponential function. Thus, replacing this function by a bounded
continuous $\phi$ of Definition \ref{def:ef-cor}, we obtain the bound $%
\eqref{eq:ef-cor-decay}$ instead of $\eqref{eq:exp-dyn-loc}$, and then
Criterion~\ref{cr:2} implies the assertion (iii) of Theorem.
\end{proof}

{
\begin{proof}[Proof of Theorem~\protect\ref{th:ap-d}~$(iv)$] We begin with two definitions (see \cite{A, G, Da-Ga:11, Da-Ga:13} for details).
\begin{defin}\label{def:cantor-group} $(i)$ $\Omega$ is a \textit{Cantor group} if it is an infinite, totally disconnected, compact Abelian topological group with no isolated points. In that case $\Omega$ has a unique translation invariant probability measure, Haar measure. One fixes a metric on $\Omega$ that is compatible with Haar measure.

$(ii)$ Consider a Cantor group $\Omega$ and a $\Z^d$ action by translations, $\{T^{\mathbf n}\}_{\mathbf{n}\in\Z^d}$. Namely, there are $\omega_1, \dots,\omega_d\in\Omega$ such that for every $\omega\in\Omega$
\[
T^{\mathbf n}\omega = \omega + \sum_{j = 1}^d n_j\omega_j,\quad \mathbf n = (n_1, \dots, n_d)\in\Z^d,
\]
where $+$ denotes the group operation. The action is called \textit{minimal} if all orbits are dense, namely if $\overline{\{T^{\mathbf n}\omega: \mathbf n\in\Z^d\}} = \Omega$ for every $\omega\in\Omega$. 
\end{defin}

Damanik and Gan consider in \cite{Da-Ga:13} a class of $d$-dimensional Schr\"odinger operators $\eqref{eq:schrodinger}$ with potential defined by $V_\omega(\mathbf n) = f(T^{\mathbf n}\omega)$ as in \eqref{eq:lp}, where $\omega\in\Omega$ is a Cantor group that admits a minimal $\Z^d$ action $T$ by translations, and $f\in C(\Omega,\R)$. The main result of \cite{Da-Ga:13} states:
\begin{proposition}\label{prop:da-ga}[\cite{Da-Ga:13}, Theorem~1.3] There exist a Cantor group $\Omega$ that admits a minimal $\Z^d$ action $T$ by translation, and an $f\in C(\Omega,\R)$ such that for every $\omega\in\Omega$ the Schr\"odinger operator \eqref{eq:schrodinger} with potential $f(T^{\mathbf n}\omega)$ of \eqref{eq:lp} has $\mathrm{ULE}$ \eqref{eq:eule} with $\omega$-independent constants.
\end{proposition}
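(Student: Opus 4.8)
The plan is to construct the triple $(\Omega, T, f)$ by hand: realize $\Omega$ as a profinite group, the action $T$ as translation by a diagonally embedded copy of $\Z^d$, and $f$ as the output of a P\"oschel-type multiscale diagonalization whose conjugating transformation simultaneously exhibits the eigenfunctions and their uniform exponential localization.

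First I would fix integers $1 = q_0 \mid q_1 \mid q_2 \mid \cdots$ with $q_k \to \infty$ and set
\[
\Omega = \varprojlim_k \, (\Z/q_k\Z)^d,
\]
the inverse limit under the canonical projections. As a profinite group $\Omega$ is compact, totally disconnected and abelian, and since $q_k \to \infty$ it has no isolated points; hence it is a Cantor group in the sense of Definition~\ref{def:cantor-group}, its Haar measure being the inverse limit of the normalized counting measures. For the action I take $\omega_j$ to be the image of the basis vector $\mathbf e_j$ under the diagonal embedding $\iota : \Z^d \hookrightarrow \Omega$, $\iota(\mathbf n) = (\mathbf n \bmod q_k)_k$, and set $T^{\mathbf n}\omega = \omega + \sum_{j=1}^d n_j\omega_j = \omega + \iota(\mathbf n)$. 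Minimality follows because each projection $\Z^d \to (\Z/q_k\Z)^d$ is onto, so $\iota(\Z^d)$ meets every basic open set and is dense; translating, $\overline{\{T^{\mathbf n}\omega\}} = \omega + \overline{\iota(\Z^d)} = \Omega$ for every $\omega$.

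Next I would produce $f$ together with the data that localizes the spectrum. Following P\"oschel \cite{Poschel}, I run a KAM-type iteration whose $k$-th stage carries a $q_k$-periodic real potential $f_k$ — a function on $(\Z/q_k\Z)^d$ lifted to $\Omega$ — and a unitary $W_k = U_1\cdots U_k$ conjugating the Schr\"odinger operator \eqref{eq:schrodinger} with periodic potential $f_k(T^{\mathbf n}\omega)$ to a form that is diagonal up to an error of size $\varepsilon_k$. The step $k \mapsto k+1$ chooses $U_{k+1}$, a rotation supported near the scale-$q_{k+1}$ resonances, to cancel the leading error; this simultaneously defines the correction $f_{k+1} - f_k$, with $\|f_{k+1} - f_k\|_\infty \le \varepsilon_k$. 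Taking $\varepsilon_k$ super-exponentially small guarantees $f = \lim_k f_k \in C(\Omega,\R)$, so that $V_\omega(\mathbf n) = f(T^{\mathbf n}\omega)$ is genuinely limit-periodic, and also guarantees that the product $W_\infty = \prod_k U_k$ converges in operator norm and diagonalizes $H_\omega$.

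The columns of $W_\infty$ are then an orthonormal eigenbasis of $H_\omega$. Because each $U_k$ is exponentially close to the identity away from its resonance site, telescoping the factors gives exponentially decaying off-diagonal entries of $W_\infty$, i.e. the bound $|\psi_l(\omega,\mathbf m)| < C e^{-c|\mathbf m - \mathbf m_l(\omega)|}$ of \eqref{eq:eule}, with localization centers $\mathbf m_l(\omega)$ inherited from the sites around which the leading approximate eigenstates concentrate and rate $c$ fixed by the ratios $\varepsilon_k/(\text{gap}_k)$. Since $U_k$ and the constants $C,c$ are determined by the periodic data $f_k$ alone, they do not depend on $\omega$; covariance $H_{T^{\mathbf k}\omega} = U_{\mathbf k}H_\omega U_{\mathbf k}^{-1}$ and continuity in $\omega$ then propagate the uniform bound from the dense orbit $\iota(\Z^d)$ to all of $\Omega$. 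The main obstacle is precisely this multiscale bookkeeping: one must open infinitely many spectral gaps — so that the eigenvalues separate and the spectrum is a genuine (Cantor) limit-periodic set — while keeping the localization length uniformly bounded, which forces the quantitative balance $\varepsilon_k \ll (\text{gap}_k)^{\mathrm{const}}$ at every stage and requires checking that each gap, once opened, survives all later perturbations rather than closing. Controlling this trade-off in dimension $d>1$ is the delicate point handled by the construction of \cite{Da-Ga:13}.
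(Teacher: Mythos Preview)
The paper does not supply its own proof of this proposition: it is quoted verbatim as \cite[Theorem~1.3]{Da-Ga:13} and used as a black box in the proof of Theorem~\ref{th:ap-d}~(iv). So there is no in-paper argument to compare against; your sketch is instead an attempt to reconstruct the proof from the cited reference.

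That said, your outline is faithful to the strategy of \cite{Da-Ga:13} (and of \cite{Da-Ga:11,A,G,Poschel} on which it builds): realize $\Omega$ as a profinite group $\varprojlim (\Z/q_k\Z)^d$, take the $\Z^d$-action by translation along the diagonal embedding, and construct $f$ as a uniform limit of $q_k$-periodic potentials produced by a P\"oschel-type KAM diagonalization, with the columns of the limiting conjugation furnishing the ULE basis. The points you flag as delicate --- opening gaps that survive later stages, and the quantitative balance between $\varepsilon_k$ and the gap sizes that keeps the localization length uniform in $k$ and in $\omega$ --- are exactly the issues that \cite{Da-Ga:13} has to control. As written your sketch is a plan rather than a proof (the convergence of $\prod_k U_k$, the uniform exponential off-diagonal decay of $W_\infty$, and the $\omega$-independence of $C,c$ all require the detailed estimates from those references), but the architecture is correct and matches the source the paper invokes.
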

\smallskip We conclude that the description of the spectrum for this class of operators coincides with that of \eqref{eq:psize} -- \eqref{eq:maref-1} for the Maryland model and admits the exponential (ULE) bound \eqref{eq:ule} analogous to \eqref{eq:th-ulema}. Thus, repeating literally the proof of Theorem \ref{th:ap-d} (i), we arrive to the same conclusion on the validity of the Area Law (and the exponential dynamical localisation in expectation) on the whole spectrum of the operators in question.
\end{proof}
}

\medskip 

\subsection{Proof of Theorem~\ref{th:ap-1d}}


We will now pass to the proof of Theorem~\ref{th:ap-1d} dealing
with one-dimensional almost periodic operators. 

\begin{proof}[Proof of Theorem~\protect\ref{th:ap-1d}~$(i)$]

{
The estimate \eqref{eq:ulema2}} allows us to apply again formulae \eqref{eq:pf_spec} -- \eqref{eq:pqexp2} to
obtain \eqref{eq:exp-decay} and then Criterion~\ref{cr:1} implies the
assertion of Theorem~\ref{th:ap-1d} (i).
\end{proof}

\begin{proof}[Proof of Theorem~\protect\ref{th:ap-1d}~$(ii)$] The Lipschitz monotone potentials 
$\eqref{eq:lpp} -\eqref{eq:lip}$ are similar to the potential $%
\eqref{eq:maryland}$ of Maryland model, however, are bounded. 
That is, the periodic sample function $v$ in $\eqref{eq:lpp}$ is strictly
increasing on $[0, 1)$, but has the jump discontinuities at integer
points. So the potential graph has a sawtooth shape.

Jitomirskaya and Kachkovskiy consider in \cite{JKa} the one-dimensional
discrete Schr\"{o}dinger operators with this class of potentials and
establish the following:

\begin{proposition}
\label{prop:monotone-1d}[\cite{JKa}, Corollary~3.5] Suppose $\alpha $ is
Diophantine, namely, there exist $C<\infty ,\tau >0$ such that 
$\Vert n\alpha \Vert >C|n|^{-\tau },
\;n\in \mathbb{N}$, 
where $\Vert x\Vert =\min (\{x\},\{1-x\})$. Let $\epsilon (\alpha
)=\liminf_{k}{q_{k-1}}/{q_{k+1}}$, where $\{q_{k}\}$ are the denominators of
the continued fraction approximants of $\alpha\, ($note that $\epsilon
(\alpha )\leq 1/2$ for any $\alpha \in \mathbb{R}\setminus \mathbb{Q})$.
Suppose that $g>{2e}/{((1-\epsilon (\alpha ))a_{-})}$.

Then, there exist $C<\infty ,c>0$ such that for any orthonormal
eigenfunction $\psi _{l}(\omega )$ there exists $n_{l}(\omega )$ such that
we have for $u_{l}(\omega )=\psi _{l}(\omega )/\psi _{l}(\omega ,0)$ 
\begin{equation}
|u_{l}(\omega ,n)|<Ce^{-c|n-n_{l}(\omega )|},  \label{eq:monotone-1d}
\end{equation}%
where $C<\infty $ and $c>0$ are $\omega $-independent.
\end{proposition}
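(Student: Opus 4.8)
The plan is to prove uniform exponential localization in the large‑coupling regime by combining the strict monotonicity of $v$ with the arithmetic of $\alpha$, so that both the decay rate and the constants come out independent of $\omega$. Fix an eigenvalue $E=\lambda_l(\omega)$ lying in the bulk of the spectrum. Strict monotonicity of $v$ on $[0,1)$ lets me define the resonant phase $x_0=x_0(E)\in[0,1)$ by $g\,v(x_0)=E$ (energies near the spectral edges lie outside the range of $gv$ and are handled by a direct Combes--Thomas bound, where there is no resonance at all). I then designate the \emph{localization center} $n_l(\omega)$ to be the site $j$ minimizing the circle distance $\|\{\omega+j\alpha\}-x_0\|$; this is the unique site at which the eigenfunction is large. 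The goal is to show that $\psi_l$ decays exponentially away from $n_l(\omega)$ at a rate $c=\log\!\big(ga_-(1-\epsilon(\alpha))/(2e)\big)$, which is positive precisely under the hypothesis $g>2e/((1-\epsilon(\alpha))a_-)$.

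The engine of the argument is a lower bound on the transfer matrices. Writing $v_j=v(\{\omega+j\alpha\})$, the one‑step transfer matrix for $(H-E)\psi=0$ has norm comparable to $|E-gv_j|$ once $|E-gv_j|$ is large, so the norm of a product over a window $[m,n]$ is bounded below by $\prod_{j=m}^{n}|E-gv_j|$ up to bounded corrections. Monotonicity converts each factor into a phase estimate,
\[
|E-gv_j|=g\,|v(x_0)-v(\{\omega+j\alpha\})|\ \ge\ g\,a_-\,\|\{\omega+j\alpha\}-x_0\|,
\]
so the problem reduces to lower‑bounding the product $\prod_j\|\{\omega+j\alpha\}-x_0\|$ over the window. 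Here the key arithmetic input is a three‑distance/continued‑fraction estimate of the form
\[
\prod_{\substack{m\le j\le n\\ j\ne n_l}}\|\{\omega+j\alpha\}-x_0\|\ \ge\ \Big(\tfrac{1-\epsilon(\alpha)}{2e}\Big)^{\,n-m}
\]
(up to subexponential factors), valid uniformly in the target $x_0$ and the phase $\omega$, with the single closest approach $j=n_l$ excluded. The factor $e$ is the Stirling cost of the product of the $\sim n-m$ nearly equispaced orbit distances, while $\epsilon(\alpha)=\liminf q_{k-1}/q_{k+1}$ measures, through the continued‑fraction denominators, how far the spacing of $\{\omega+j\alpha\}$ can deviate from equidistribution. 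Combining the three displays, the transfer‑matrix product over a window of length $N$ excluding the center grows at least like $(ga_-(1-\epsilon(\alpha))/(2e))^{N}$; equivalently the Lyapunov exponent satisfies $\gamma(E)\ge c$, and, crucially, all bounds are $\omega$‑independent because the monotonicity estimate and the product estimate are.

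Granting this, the eigenfunction of a point eigenvalue is the solution decaying at both $\pm\infty$; the uniform positivity of the transfer‑matrix growth on each side of the single resonance $n_l(\omega)$ then forces $|\psi_l(\omega,n)|\le C e^{-c|n-n_l(\omega)|}$ with $\omega$‑independent $C,c$, which is exactly the uniform localization claimed. Passing to $u_l=\psi_l/\psi_l(\omega,0)$ only rescales by a constant and preserves the exponential decay‑from‑center profile, yielding the stated bound \eqref{eq:monotone-1d}. The main obstacle is the sharp arithmetic product estimate: bounding $\prod_j\|\{\omega+j\alpha\}-x_0\|$ with the precise constant $(1-\epsilon(\alpha))/(2e)$ requires a careful continued‑fraction analysis across the scales $q_k$ via the three‑distance theorem, and, more delicately, control of \emph{all} close approaches of the orbit to $x_0$ within a long window (not merely the single closest one): each close approach at scale $q_k$ contributes a small factor $\sim 1/q_{k+1}$ that must be shown to be dominated by the exponential gain of the remaining, well‑spread points. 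A secondary difficulty is matching the one‑sided decaying solutions at the unique center $n_l(\omega)$ and verifying that monotonicity indeed produces a single dominant resonance per eigenvalue, which is what upgrades ordinary positive‑Lyapunov‑exponent localization to the $\omega$‑uniform localization with explicit centers. The threshold $g>2e/((1-\epsilon(\alpha))a_-)$ is exactly the statement that the per‑site gain $ga_-$ of monotonicity overcomes the arithmetic cost $2e/(1-\epsilon(\alpha))$.
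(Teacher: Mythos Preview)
The paper does not prove this proposition at all: it is stated verbatim as a result quoted from Jitomirskaya--Kachkovskiy \cite{JKa} (their Corollary~3.5), and the surrounding text in the proof of Theorem~\ref{th:ap-1d}(ii) simply invokes it as a black box, then explains how to pass from \eqref{eq:monotone-1d} to the exponential decay of the eigenfunction correlator via the argument of \cite{JK}. So there is no ``paper's own proof'' to compare against; your proposal is an attempt to reconstruct the proof of the cited external result, which lies outside the scope of the present paper.

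That said, your sketch does capture the architecture of the argument in \cite{JKa}: the monotonicity bound $|E-gv_j|\ge g a_-\|\{\omega+j\alpha\}-x_0\|$ together with a continued-fraction product estimate of the shape $\prod_{j\ne n_l}\|\{\omega+j\alpha\}-x_0\|\gtrsim((1-\epsilon(\alpha))/(2e))^{N}$ is indeed the engine. However, one step is stated too loosely to stand as written: you assert that the norm of the transfer-matrix product over $[m,n]$ is ``bounded below by $\prod_j|E-gv_j|$ up to bounded corrections.'' For products of $\mathrm{SL}(2,\mathbb{R})$ matrices $\begin{pmatrix} d_j & -1\\ 1 & 0\end{pmatrix}$ this is false in general---the off-diagonal $\pm1$'s produce cancellations, and the $(1,1)$ entry is a signed sum over monomials in the $d_j$, not just $\prod d_j$. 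What \cite{JKa} actually controls is the characteristic determinant $\det(H^{[a,b]}-E)$ (equivalently the Green's function via Cramer's rule), and the large-coupling hypothesis is used precisely to show that the diagonal product dominates the lower-order corrections; this is where the threshold $g>2e/((1-\epsilon(\alpha))a_-)$ enters quantitatively. Your outline would need this determinant/Green's-function formulation, together with the explicit combinatorial bookkeeping of subleading terms, to close the gap.
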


This bound is similar to the one of \cite{JK}, but is valid for all $\omega$. 
However, unlike \eqref{eq:ulema2}, or, more generally, \eqref{eq:uule}, the
bound \eqref{eq:monotone-1d} cannot be used immediately to prove the
exponential decay of the Fermi projector and/or the eigenfunction
correlator, as was possible with the estimate \eqref{eq:th-ulema} and%
\eqref{eq:uule}, see formulae \eqref{eq:th-ulema}, \eqref{eq:pqexp2}. The
reason is that the functions $\{u_l\}_l$, although orthogonal, are not
orthonormalized. Therefore, an additional argument must be used to pass from 
$\{u_l\}_l$ to $\{\psi_l\}_l$ in the above bound. Such an elegant argument
was proposed in \cite{JK}. Thus, by using this argument and 
\eqref{eq:monotone-1d}, it is possible to prove the exponential dynamical
localisation in expectation as well as the exponential decay of the Fermi
projection, the eigenfunction correlator, and the assertion of Theorem~\ref{th:ap-1d} $(ii)$.
\end{proof}

\begin{proof}[Proof of Theorem~\protect\ref{th:ap-1d}~$(iii)$]
In the work \cite{JK} the exponential dynamical localisation in
expectation $\eqref{eq:exp-dyn-loc}$ was proved under the conditions of
Theorem \ref{th:ap-1d} $(iii)$. 
%
However, just like in \cite{GYZ} (see the proof of Theorem \ref{th:ap-d}
(iii)), the only property of the operator $e^{-itH}$ that was used in \cite%
{JK} was the bound $||e^{-itH}|| \leq 1$. Therefore, we repeat our argument
from the proof of Theorem~\ref{th:ap-d} (iii), allowing for the replacement
of $e^{-itH}$ by any $\phi(H) $ of Definition \ref{def:ef-cor}. We obtain a
more general bound $\eqref{eq:ef-cor-decay}$ instead of $%
\eqref{eq:exp-dyn-loc}$, and then Criterion~\ref{cr:2} implies the assertion
of Theorem~\ref{th:ap-1d} $(iii)$.
\end{proof}






\section{Proof of general criterion for ``strong" dynamical localisation}\label{sec:lem-pf}

{
Here we prove  Lemma~\ref{lem:bad-res}.
Let $G$ be the resolvent operator of $H$ and $G_{[m,p]}$ be the resolvent
operator of the restricted operator $H^{[m, p]}$. Let $n$ be such that $n =
l/20$ and assume that $k=0$. The second resolvent identity implies for each $%
i, j \in\{0,1\}$ 
\begin{equation*}
\begin{split}
G(0, l) =& G_{[-n + i, n - j]}(0, n - j)G(n - j + 1, l) \\
& +G_{[-n + i, n - j]}(0, -n + i)G(- n + i - 1, l),
\end{split}%
\end{equation*}
and 
\begin{equation*}
\begin{split}
G(0, l) = &G(0, l - n + i - 1)G_{[l - n + i, l + n - j]}(l - n + i, l) \\
& +G(0, l + n - j + 1)G_{[l - n + i, l + n - j]}(l + n - j, l).
\end{split}%
\end{equation*}
Thus, we obtain for $\lambda\notin \cap_{i, j \in\{0, 1\}}\mathrm{Bad}%
\,(H^{[- n + i, n - j]}, e^{-cn})$ for any $i, j \in\{0,1\}$ 
\begin{equation*}
|G(0, l)| \leq e^{-cn}\big(|G(n - j + 1, l)| + |G(- n + i - 1, l)|\big),
\end{equation*}
and for $\lambda\notin \cap_{i, j \in\{0, 1\}}\mathrm{Bad}\,(H^{[l - n + i,
l + n - j]}, e^{-cn})$ for any $i, j \in\{0,1\}$ 
\begin{equation*}
|G(0, l)| \leq e^{-cn}\big(|G(0, l - n + i - 1)| + |G(0, l + n - j + 1)|\big)%
.
\end{equation*}
Recall that \cite{AW} 
\begin{equation*}  
Q_I^\Lambda (m, n) = \lim_{\epsilon \rightarrow 0^+}\frac{\epsilon}{2}
\int_{I,\, I\subset \sigma(H)} |G_{\lambda + i0}^\Lambda (m, n)|^{1 -
\epsilon} \mathrm{d}\lambda \leq 1,
\end{equation*}
where $Q_I^\Lambda$ is the eigenfunction correlator of the operator $H$
restricted to a box $\Lambda\subset\mathbb{Z}$ and $G_{\lambda + i0}^\Lambda$
is its resolvent operator. Therefore, we get 
\begin{equation*}
\begin{split}
&\hspace{-0.5cm} 2\int_{I,\, I\subset \sigma(H)} \frac{\epsilon}{2}
|G_{\lambda + i0}(0, l)|^{1 - \epsilon} \mathrm{d}\lambda \\
& \leq 2\int_I e^{-(1 - \epsilon)cn} \frac{\epsilon}{2} \sum_{i, j \in
\{0,1\}} \{|G(n - j + 1, l)| + |G(- n + i - 1, l)| \\
& + |G(0, l - n + i - 1)| + |G(0, l + n - j + 1)|\}^{1 - \epsilon} \mathrm{d}%
\lambda.
\end{split}%
\end{equation*}
Since for any $\alpha \leq 1$ we have $|a + b|^\alpha \leq |a|^\alpha +
|b|^\alpha$, we have 
\begin{equation*}
\begin{split}
&2\int_{I,\, I \subset \sigma(H)} \frac{\epsilon}{2} |G_{\lambda + i0}(0,
l)|^{1 - \epsilon} \mathrm{d}\lambda \\
& \leq 2\int_I e^{-(1 - \epsilon)cn} \frac{\epsilon}{2} \sum_{i, j \in
\{0,1\}} \{|G(n - j + 1, l)|^{1 - \epsilon} + |G(- n + i - 1, l)|^{1 -
\epsilon} \\
&\hspace{2cm} + |G(0, l - n + i - 1)|^{1 - \epsilon} + |G(0, l + n - j +
1)|^{1 - \epsilon}\} \mathrm{d}\lambda.
\end{split}%
\end{equation*}
When we take $\epsilon\to 0$ and $\Lambda \nearrow \mathbb{Z}$ we obtain 
\begin{equation*}
\begin{split}
Q_I(0, l) &\leq 2 e^{-cn}\sum_{i, j \in \{0,1\}}\{Q_I(n - j + 1, l) + Q_I(-n
+ i -1, l) \\
& + Q_I(0, l - n + i - 1) + Q_I(0, l + n - j + 1)\} \\
& \leq 16 e^{-cn} \leq 16 e^{-\frac{c}{20} l} \equiv 16 e^{-c^{\prime }\, l},
\end{split}%
\end{equation*}
where in the second inequality we used the fact that $Q_I(n,m)\leq 1$ for
any $n, m\in\mathbb{Z}$ and the last one follows from the choice of $n$.
}

\section{Schr\"odinger operators with potentials generated by hyperbolic
transformations}

\label{sec:subshift}

In this section, we will present an extension of the results of works by {%
Avila, Damanik, and Zhang} in \cite{ADZ, ADZ1}, where the spectral
localisation was established. The proof of our main result is motivated by
and based on \cite[Theorem~2.10]{ADZ} and is an upgrade from spectral
localisation to the exponential decay of the corresponding eigenfunction
correlator (Theorem~\ref{th:subshift-ee}). This allows us to conclude that
the corresponding entanglement entropy $\eqref{eq:ee}$ obeys the Area Law at
the bottom of the spectrum (Theorem~\ref{subshit-ee}).

We will use two inputs from \cite{ADZ, ADZ1} -- the uniform positivity of
the Lyapunov exponent and a uniform large deviation type estimate.

\subsection{Necessary facts of the subshifts of finite type (SFT) and the SFT-driven Schr\"odinger operators}

\subsubsection{Shifts of finite type and corresponding Markov chains}


Let 
\begin{equation*}
\mathcal{A }= \{1, 2, \dots, k\},\quad k\geq 2,
\end{equation*}
be a finite alphabet equipped with the discrete topology. Consider the
product space $\mathcal{A}^\mathbb{Z}$ whose topology is generated by the
cylinder sets formed by fixing a finite set of coordinates 
\begin{equation*}
[n; a_0, a_1,\dots, a_m] = \{\omega\in\mathcal{A}^\mathbb{Z}\,|\, \omega_{n
+ i} = a_i,\, 0 \leq i \leq m\}.
\end{equation*}
The topology is metrizable and the metric is 
\begin{equation}  \label{eq:metric}
\mathrm{d}(\omega, \omega) = 0\, \text{for any}\, \omega\in\mathcal{A}^%
\mathbb{Z},
\end{equation}
\begin{equation*}
\mathrm{d}(\omega, \tilde\omega) = e^{-N(\omega,\tilde\omega)}\, \text{for
any}\, \omega, \tilde\omega\in\mathcal{A}^\mathbb{Z},\,
\omega\neq\tilde\omega,
\end{equation*}
where 
\begin{equation*}
N(\omega,\tilde\omega) = \max\,\{M\geq 0\, |\, \omega_n =
\tilde\omega_n\quad \text{for all}\,\, |n|\leq M\}.
\end{equation*}
Let $A=\{A_{ij}\}_{i,j=1}^k$ be a $k\times k$ matrix {such that} $%
A_{ij}\in\{ 0, 1\}$ for all $1\leq i, j \leq k$. Introduce the two-sided
shift of finite type as 
\begin{equation*}
\Omega_A = \{(\omega_n)_{n\in\mathbb{Z}}\in \mathcal{A}^\mathbb{Z}\, |\,
A_{\omega_n\omega_{n + 1}} = 1\quad \text{for all}\,\, n\in\mathbb{Z}\}.
\end{equation*}
The one-sided shift of finite type is defined in a similar way, just by
replacing $\mathbb{Z}$ by $\mathbb{Z}_+$.

The left shift map $T: \Omega_A \rightarrow \Omega_A$ is defined by $%
(T\omega)_n = \omega_{n + 1}$. The \textit{subshift of finite type} is a
restriction of $T$ to a closed \textit{shift-invariant} subspace $\Omega$,
namely $(\Omega, T)$ is a subshift of finite type if $\Omega\subset\Omega_A$
such that $T^n\Omega\subset\Omega$ for any $n\in\mathbb{Z}$.

Let $P=\{P_{ij}\}_{i,j=1}^k$ be a stochastic $k\times k$ matrix, i.e., $%
P_{ij} \geq 0$ for any $1 \leq i, j \leq k$ and $\sum_{i = 1}^k P_{ij} = 1$ for any $1 \leq j \leq k$.
We say that $P$ is compatible with the above matrix $A$ if $P_{ij} > 0
\Leftrightarrow A_{ij} = 1$. Assume that $P$ is irreducible, namely that for
all $i, j \in\{1, \dots, k\} = \mathcal{A}$ there exists an integer $n \geq
0 $ such that $(P^n)_{ij} > 0$. Since $P$ is an irreducible stochastic
matrix, the Perron-Frobenius Theorem states that there exists a unique
maximal eigenvalue of $P$, $\mathbf{\lambda} = 1$, and the rest of the
eigenvalues satisfy $|\lambda_j| < 1$. Let $\mathbf{p} = (p_1, \dots, p_k)$
be the eigenvector corresponding to the maximal eigenvalue $\lambda = 1$
satisfying $p_i > 0$ for all $1 \leq i \leq k$ normalized so that $\sum_{i =
1}^k p_i = 1$, such that $\mathbf{p}P = \mathbf{p}$. Given vector $\mathbf{p}
$, we can define a probability measure $\mathbb{P }= \mathbf{P}_P$ on $%
\Omega_A$ by 
\begin{equation}  \label{eq:markov-meas}
\mathbf{P}_P ([0; a_0, \dots, a_n]) = p_{a_0}P_{a_0a_1}P_{a_1a_2}\cdots
P_{a_{n - 1}a_n}
\end{equation}
on cylinder sets. By the Kolmogorov Extension Theorem, this uniquely defines
a measure on the whole $\sigma$-algebra. It is easy to check that the
measure $\mathbf{P}_P$ on $\Omega_A$ is $T$-invariant by checking that $%
\mathbf{P}_P$ and $T_*\mathbf{P}_P$ agree on cylinder sets where $T_*\mathbb{%
P}_P$ is the pushforward measure, namely $T_*\mathbf{P}_P(B) = \mathbf{P}%
_P(T^{-1}B)$ for any measurable $B\in\mathcal{A}^\mathbb{Z}$.

This measure is called a Markov measure, and it is well-known that the
topological support of $\mathbf{P}_P$ is a subshift of finite type $\Omega_A$
with the adjacency matrix $A$ where $A_{ij} = 1$ if and only if $P_{ij} > 0$%
. Moreover, the measure $\mathbf{P}_P$ is $T$-ergodic if and only if the
matrix $P$ is irreducible, and $T$ has a fixed point if and only if $P_{ii}
> 0$ for some $1 \leq i \leq k$, which implies that $P$ is aperiodic.

Let $(\Omega_A, T)$ be a subshift of finite type defined above equipped with
the $\sigma$-algebra, and let $\mathbf{P}$ be a probability measure on $%
\Omega_A$ that is ergodic with respect to the shift $T$. {We need the
following definition appearing in \cite{ADZ1, ADZ}:}

\begin{defin}
A positive measure $\nu$ possesses the bounded distortion property if there
exists a constant $C\geq 1$ such that for all cylinders $[n; a_0, \dots,
a_j] $ and $[m; b_0, \dots, b_l]$ in $\Omega_A$ where $m > n + j$ and $[n;
a_0, \dots, a_j] \cap [m; b_0, \dots, b_l] \neq \emptyset$, we have 
\begin{equation}  \label{eq:bds}
C^{-1} \leq \frac{\nu([n; a_0, \dots, a_j] \cap [m; b_0, \dots, b_l])}{%
\nu([n; a_0, \dots, a_j])\nu([m; b_0, \dots, b_l])} \leq C.
\end{equation}
\end{defin}

It follows from \cite[Lemma~3.4]{ADZ1} that the measure $\mathbf{P}_P$
defined by $\eqref{eq:markov-meas}$ has the bounded distortion property.


\subsubsection{Assumptions and definitions}


Here we formulate additional definitions and results from \cite{ADZ1, ADZ}
that we will need below.

The main object of our study in this section is the one-dimensional discrete
Schr\"odinger operators $H_\omega$ \eqref{eq:schrodinger1}, 
where the potential $V_\omega(n)$ is generated by the subshift of finite
type $(\Omega_A, T)$, namely $V_\omega(n) = v(T^n\omega), \omega\in\Omega_A$%
, where $v: \Omega_A \rightarrow \mathbb{R}$ is a function from one of the
following two classes (see \cite{ADZ1, ADZ}).

\begin{defin}
\label{def:LC-and-SH} $(i)$ A function $v: \Omega_A \rightarrow \mathbb{R}$
is said to be locally constant if there exists an integer $n_0 \geq 0$ such
that for each $\omega\in\Omega_A$, $v(\omega)$ depends only on the cylinder
set $[-n_0; \omega_{-n_0}, \dots \omega_{n_0}]$. Denote by $LC$ the set of
all locally constant functions $v:\Omega_A \rightarrow \mathbb{R}$.

$(ii)$ A function $v:\Omega_A \rightarrow \mathbb{R}$ is said to be $\alpha$%
-H\"older continuous for $0 < \alpha \leq 1$ if 
\begin{equation*}
\sup_{\omega\neq\tilde\omega}\frac{|v(\omega) - v(\tilde\omega)|}{\mathrm{d}%
(\omega, \tilde\omega)^\alpha} < \infty,
\end{equation*}
where $\mathrm{d}(\cdot, \cdot)$ is the metric on $\mathcal{A}^\mathbb{Z}$
defined by $\eqref{eq:metric}$. Denote by $C^\alpha(\Omega_A, \mathbb{R})$, $%
0 < \alpha \leq 1$, the space of real-valued $\alpha$-H\"older continuous
functions. The function $v\in C^\alpha(\Omega_A, \mathbb{R})$ is said to be
globally fiber bunched if there exists $\tau_0 > 0$ such that $\|v\|_\infty
< \tau_0$. Denote the set of all globally fiber bunched functions by $%
\mathrm{SH}$.
\end{defin}

Let 
\begin{equation*}  
A^\lambda(\omega)= \left( {%
\begin{array}{cc}
\lambda - v(\omega) & -1 \\ 
1 & 0 \\ 
\end{array}
} \right)
\end{equation*}
be the one-step transfer matrix corresponding to the Schr\"odinger operator $%
H_\omega$ of \eqref{eq:schrodinger1} with the potential generated by a
subshift of finite type such that $v\in \mathrm{LC}\cup\mathrm{SH}$ is
non-constant. For any $0\leq k \leq l$ let 
\begin{equation}  \label{eq:phi-k-l}
\Phi_{k, l}^\lambda(\omega) = A^\lambda(T^{l - 1}\omega)\cdots
A^\lambda(T^k\omega),
\end{equation}
be the corresponding $(l - k)$-step transfer matrix, where $T$ is given in %
\eqref{eq:ergodic-potential-1}. Denote by 
\begin{equation}  \label{eq:lyap-exp}
\gamma(\lambda) = \lim_{n\to\infty}\frac{1}{n}\mathbf{E}\{ \log \|\Phi_{0,
n}^\lambda(\omega)\| \ \}= \inf_{n \geq 1}\frac{1}{n} \log \|\Phi_{0,
n}^\lambda(\omega)\|,
\end{equation}
the Lyapunov exponent, corresponding to the operator $H_\omega$, where $%
\mathbf{E}\{\cdot\}$ is the expectation with respect to the $T$-ergodic
measure $\mathbf{P}$ associated with the subshift of finite type.

\smallskip

\noindent We have \cite{ADZ1, ADZ}:


\begin{defin}
\label{def:ple-uld} 
We say that

$(i)$ $A^\lambda$ has uniformly positive Lyapunov exponent $($PLE$)$ on $%
I\subset\mathbb{R}$ if 
\begin{equation}  \label{eq:ple}
\inf_{\lambda\in I}\gamma(\lambda) > 0.
\end{equation}
$(ii)$ $A^\lambda$ has uniform large deviation type estimate $($ULD$)$ on $%
I\subset\mathbb{R}$ if for every $\epsilon > 0$, there exist constants $%
C_{v, \epsilon}, c_{v, \epsilon} > 0$, depending only on $v$ and $\epsilon$,
such that for all $\lambda\in I$ and $n \geq 1$ 
\begin{equation}  \label{eq:uld}
\mathbf{P}\left\{\omega\in\mathbb{T}\, : \left| n^{-1} \log\|\Phi_{0,
n}^\lambda(\omega)\| - \gamma(\lambda)\right| > \epsilon\right\} < C_{v,
\epsilon}e^{-c_{v, \epsilon}n}.
\end{equation}
\end{defin}

Following the notation of \cite{ADZ1, ADZ}, we denote by $\mathcal{Z}_v =
\{\lambda\in \mathbb{R}: \gamma(\lambda) = 0\}$. Assume that $(\Omega_A, T)$
is a subshift of finite type and $\mathbf{P}$ is a $T$-ergodic measure that
has bounded distortion property $\eqref{eq:bds}$. Suppose that $T$ has a
fixed point on $\Omega_A$, and $v\in \mathrm{LC}\cup\mathrm{SH}$ is
non-constant. Then it follows from \cite[Theorem~1.3]{ADZ1} that $\mathcal{Z}%
_v$ is finite and there exists a set $\mathcal{F}_v \supset \mathcal{Z}_v$
such that for any compact interval $J$ and any $\eta > 0$, we have PLE $%
\eqref{eq:ple}$ on 
\begin{equation}  \label{eq:J-set}
J_\eta = J\setminus B(\mathcal{F}_v, \eta),
\end{equation}
where $B(\mathcal{F}_v, \eta)$ denotes the open $\eta$-neighborhood of $%
\mathcal{F}_v$. The set $J_\eta$ consists of a finite number of connected
compact intervals. Moreover, from \cite[Theorem~2.10]{ADZ} we conclude that
under the above conditions there exists a connected compact interval $J$
such that $\sigma(H_\omega)\subset J$, where $\sigma(H_\omega)$ is the
spectrum of the operator $H_\omega$, such that $A^\lambda$ satisfies ULD $%
\eqref{eq:uld}$ on $J_\eta$ for all $\eta > 0$.


\subsection{Proof of Theorems~\protect\ref{subshit-ee} and \protect\ref{th:subshift-ee}}

We start with several necessary assertions.

{
First, we prove the following large deviation type estimate for the Schr\"odinger
operators with potentials generated by the subshift of finite type under the
assumptions of Theorems~\ref{subshit-ee} and \ref{th:subshift-ee}.}


\begin{lemma}
\label{lem:prob-lde} For any $\lambda\in\mathbb{R}$ and any $\epsilon > 0$
we have 
\begin{align*}
\mathbf{P} \{\exists\, k, l,\, -n \leq k \leq l \leq n\, :\, | \log
\|\Phi_{k, l}^\lambda(\omega)\| - (l - k)\gamma(\lambda)| \geq \epsilon\,
n\} \leq C_{\epsilon, v} e^{-c_{\epsilon, v}\, n},
\end{align*}
where $C_{\epsilon, v} < \infty, c_{\epsilon, v} > 0$, and $\gamma(\cdot)$
is the corresponding Lyapunov exponent.
\end{lemma}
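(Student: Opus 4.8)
The plan is to obtain Lemma~\ref{lem:prob-lde} from the uniform large deviation estimate \eqref{eq:uld} of Definition~\ref{def:ple-uld}, together with the shift-invariance of $\mathbf{P}$, a crude deterministic bound on short transfer-matrix products, and a union bound over the $O(n^2)$ sub-windows of $[-n,n]$. I take $\lambda$ in the set $I$ on which \eqref{eq:uld} is available, which is a finite union of compact intervals (cf.\ \eqref{eq:J-set}); for $\lambda$ outside the spectrum $\sigma(H_\omega)$ the Schr\"odinger cocycle $A^\lambda$ is uniformly hyperbolic, so $|\log\|\Phi^\lambda_{0,m}(\omega)\|-m\gamma(\lambda)|$ is bounded uniformly in $\omega$ and $m$ and the claim is immediate there.

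First I would reduce to a single window. From \eqref{eq:phi-k-l} one reads off $\Phi^\lambda_{k,l}(\omega)=\Phi^\lambda_{0,l-k}(T^k\omega)$, and since $\mathbf{P}$ is $T$-invariant the law of $\log\|\Phi^\lambda_{k,l}(\omega)\|$ coincides with that of $\log\|\Phi^\lambda_{0,l-k}(\omega)\|$; hence \eqref{eq:uld}, applied at the energy $\lambda$ with window length $l-k$, gives for every $\epsilon'>0$
\[
\mathbf{P}\big\{\big|\log\|\Phi^\lambda_{k,l}(\omega)\|-(l-k)\gamma(\lambda)\big|>\epsilon'(l-k)\big\}\le C_{v,\epsilon'}\,e^{-c_{v,\epsilon'}(l-k)}.
\]
Moreover, with $C_I:=2+\sup_{\lambda\in I}|\lambda|+\|v\|_\infty<\infty$, the entries of $A^\lambda(\omega)$ give $1\le\|A^\lambda(\omega)\|\le C_I$ for all $\omega$, whence by submultiplicativity and \eqref{eq:lyap-exp} the deterministic bounds $0\le\log\|\Phi^\lambda_{k,l}(\omega)\|\le(l-k)\log C_I$ and $0\le\gamma(\lambda)\le\log C_I$.

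Next I would split the sub-windows and apply a union bound. Fix $\epsilon>0$ and set $\delta=\epsilon/(2\log C_I)$. If $l-k\le\delta n$ the deterministic bounds give $|\log\|\Phi^\lambda_{k,l}(\omega)\|-(l-k)\gamma(\lambda)|\le(l-k)\log C_I\le\epsilon n$ for \emph{every} $\omega$, so a short window never contributes to the event in the lemma. If instead $\delta n<l-k\le 2n$, then $\epsilon n\ge\tfrac{\epsilon}{2}(l-k)$, so the bad event for that window is contained in $\{\,|\log\|\Phi^\lambda_{k,l}(\omega)\|-(l-k)\gamma(\lambda)|\ge\tfrac{\epsilon}{2}(l-k)\,\}$, of probability at most $C_{v,\epsilon/2}e^{-c_{v,\epsilon/2}(l-k)}\le C_{v,\epsilon/2}e^{-c_{v,\epsilon/2}\delta n}$ by the displayed inequality (with $\epsilon'=\epsilon/2$) and $l-k\ge\delta n$. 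Since there are at most $(2n+1)^2$ pairs $(k,l)$ with $-n\le k\le l\le n$, summing over them and absorbing the polynomial prefactor into the exponential rate (with, say, $c_{\epsilon,v}=\tfrac12 c_{v,\epsilon/2}\delta$ and $C_{\epsilon,v}=C_{v,\epsilon/2}\sup_{n\ge1}(2n+1)^2e^{-c_{v,\epsilon/2}\delta n/2}<\infty$) yields
\[
\mathbf{P}\big\{\exists\,k,l,\ -n\le k\le l\le n:\ |\log\|\Phi^\lambda_{k,l}(\omega)\|-(l-k)\gamma(\lambda)|\ge\epsilon n\big\}\le C_{\epsilon,v}\,e^{-c_{\epsilon,v}n},
\]
which is the assertion.

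I do not expect a serious obstacle here; the one point requiring care is the mismatch of scales: \eqref{eq:uld} controls deviations on the scale of the window length $l-k$, whereas the lemma asks for control on the scale of the ambient box size $n$, uniformly over all $\approx(2n+1)^2$ sub-windows. The split above is exactly what bridges this --- short windows ($l-k\le\delta n$) are eliminated deterministically by the trivial bound $\|\Phi^\lambda_{k,l}\|\le C_I^{\,l-k}$, and on the surviving windows $l-k$ is comparable to $n$, so \eqref{eq:uld} with a halved $\epsilon$ produces an exponential rate genuinely in $n$ that survives the merely polynomial union bound. One should also keep in mind that the constants depend on $\sup_{\lambda\in I}|\lambda|$ through $C_I$, which is harmless since the lemma is invoked only for $\lambda$ in a compact set.
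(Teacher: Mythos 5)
Your proof is correct and follows essentially the same route as the paper's: a deterministic bound on $\|A^\lambda\|$ disposes of the short windows $l-k\le\delta n$, and the uniform large-deviation estimate combined with a polynomial union bound over the $O(n^2)$ remaining windows gives the exponential rate in $n$. The only cosmetic differences are that you derive the long-window probability bound explicitly from \eqref{eq:uld} via shift-invariance of $\mathbf{P}$ and the rescaling $\epsilon n\ge\tfrac{\epsilon}{2}(l-k)$, where the paper invokes \cite[Theorem~2.10]{ADZ} directly, and that you are somewhat more careful than the paper about restricting $\lambda$ to the set where \eqref{eq:uld} is available (and about including $\sup_{\lambda\in I}|\lambda|$ in the constant bounding $\|A^\lambda\|$).
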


\begin{proof}
Let us choose $0 < \delta < 1$ such that 
\begin{equation*}
\delta \max (\log (2 + \|v\|_\infty), \gamma(\lambda)) \equiv \delta \max
(\log C_v, \gamma(\lambda)) \leq \frac{\epsilon}{2}.
\end{equation*}
We consider two cases: 
\begin{equation*}
(1)\, |l - k| \leq \delta n\,\quad \text{and}\quad (2)\, |l - k| \geq \delta
n.
\end{equation*}

First case: since for any $j$ we have $\|A^\lambda(T^j\omega)\| \leq C_v$,
we conclude that $\| \Phi_{k, l}^\lambda(\omega)\| \leq C_v^{|l - k|}$.
Therefore, since $|l - k|\leq \delta n$, we obtain 
\begin{equation}  \label{eq:small-case}
\log \| \Phi_{k, l}^\lambda(\omega)\| \leq |l - k| \log C_v \leq \delta n
\log C_v \leq \frac{\epsilon n}{2},
\end{equation}
where the last inequality follows from the choice of $\delta$.

Second case: it is proved in \cite[Theorem 2.10]{ADZ} that for $|l - k| \geq
\delta n$ we have 
\begin{equation*}  
\mathbf{P }\left\{ \omega\in\mathbb{T}\, :\, | \log \|\Phi_{k,
l}^\lambda(\omega)\| - (l - k)\gamma(\lambda)| > \frac{\epsilon\, n}{2}
\right\} \leq C_1 e^{-c_1\, n},
\end{equation*}
where $C_1 < \infty, c_1 > 0$ may depend only on $v$ and $\epsilon$, in
other words $A^\lambda$ has ULD on $I\subset\mathbb{R}$. By Borel-Cantelli
type argument we conclude that 
\begin{equation*}  
\begin{split}
&\mathbf{P}\left\{\exists\, k, l,\, -n \leq k \leq l \leq n,\, |l - k| \geq
\delta n :\, | \log \|\Phi_{k, l}^\lambda(\omega)\| - (l -
k)\gamma(\lambda)| \leq \frac{\epsilon\, n}{2} \right\} \\
& \geq 1 - \sum_{|l - k| \geq \delta n} C_1e^{-c_1\, n} \geq 1 - C_1\, 5n^2
e^{-c_1 n} \geq 1 - C_2 e^{-c_2 n},
\end{split}%
\end{equation*}
where $5n^2$ is the combinatorial factor bounding the number of ways to
choose $k, l$ from $[-n, n]\cap\mathbb{Z}$.

On the other hand, if $|l - k| \leq \delta n$, $\eqref{eq:small-case}$
implies 
\begin{equation*}
| \log \|\Phi_{k, l}^\lambda(\omega)\| - (l - k)\gamma(\lambda)| \leq \frac{%
\epsilon n}{2} + |l - k|\gamma(\lambda) \leq \frac{\epsilon n}{2} + \delta\,
n\,\gamma(\lambda) \leq \frac{\epsilon n}{2} + \frac{\epsilon n}{2} =
\epsilon n,
\end{equation*}
where the last inequality follows from the choice of $\delta$.
\end{proof}

For a given operator $H_\omega$ of the form $\eqref{eq:schrodinger1}$ it is
well-known that for any $a, b\in \mathbb{Z}$ the $|b - a|$-transfer matrix $%
\Phi_{a, b}^\lambda(\omega)$ of $\eqref{eq:phi-k-l}$ satisfies 
\begin{equation*}  
\Phi_{a, b}^\lambda(\omega) = \left( {%
\begin{array}{cc}
\mathrm{det} (H_\omega^{[a, b]} - \lambda) & \mathrm{det} (H_\omega^{[a + 1,
b]} - \lambda) \\ 
\mathrm{det} (H_\omega^{[a, b - 1]} - \lambda) & \mathrm{det} (H_\omega^{[a
+ 1, b - 1]} - \lambda) \\ 
\end{array}
} \right),
\end{equation*}
where $H_\omega^{[a, b]}$ is the restriction of the operator $H_\omega$ to $%
[a, b]\cap \mathbb{Z}$. Using Cramer's rule, we obtain the following
identity for its resolvent operator 
\begin{equation}  \label{eq:green-det}
G_{[a, b]}(k, l) = \frac{ \mathrm{det} (H_\omega^{[a, k - 1]} - \lambda) 
\mathrm{det} (H_\omega^{[l + 1, b]} - \lambda)}{ \mathrm{det} (H_\omega^{[a,
b]} - \lambda)},
\end{equation}
for any $a, b, k, l \in\mathbb{Z},\, a\leq k\leq l \leq b$.

Given a $(2n+1)\times(2n + 1)$ matrix $M$, we denote by 
\begin{equation}  \label{eq:res}
\mathrm{Res} (M, \epsilon) = \{ \lambda\in\mathbb{R}\, :\, \mathrm{dist}%
(\lambda, \sigma(M)) \leq \epsilon \},
\end{equation}
where $\sigma(M)$ denotes the spectrum of $M$, the set of spectral
parameters ``close" to the spectrum of $M$.

\begin{lemma}
\label{lem:lde} There exist $n_0 > 0$ and $c, \eta > 0$ such that for any $%
n\geq n_0$ 
\begin{equation}  \label{eq:lde}
\begin{split}
\mathbf{P}\, \{\exists i, j\in\{0, 1\}\, :\, \lambda\in&\mathrm{Bad}
(H_\omega^{[-n + i, n - j]}, e^{-cn})\cup \\
& \mathrm{Res} (H_\omega^{[-n + i, n - j]}, e^{-\frac{c}{10}n})\} \leq
e^{-\eta n},
\end{split}%
\end{equation}
where $\mathrm{Bad} (\cdot, \epsilon)$ is given by $\eqref{eq:bad}$ and $%
\mathrm{Res}(\cdot, \epsilon)$ is given by $\eqref{eq:res}$.
\end{lemma}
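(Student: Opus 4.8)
The plan is to fix an energy $\lambda$ in the set $I$ on which we have uniformly positive Lyapunov exponent $\gamma_0:=\inf_{E\in I}\gamma(E)>0$ and the uniform large deviation estimate \eqref{eq:uld}, and to bound the event in question by a union over the four windows $\Lambda_{ij}:=[-n+i,n-j]\cap\mathbb{Z}$, $i,j\in\{0,1\}$. One works on the complement of the exceptional event of Lemma~\ref{lem:prob-lde}, applied with a fixed small $\epsilon\ll\gamma_0$; this complement has probability at least $1-C_{\epsilon,v}e^{-c_{\epsilon,v}n}$, and on it every transfer matrix product over a subinterval $[k,l]\subseteq[-n,n]$ obeys $e^{(l-k)\gamma(\lambda)-\epsilon n}\le\|\Phi^\lambda_{k,l}(\omega)\|\le e^{(l-k)\gamma(\lambda)+\epsilon n}$. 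Since the quantities $\det(H_\omega^{[a,b]}-\lambda)$ are the entries of such a transfer matrix product, this already gives $|\det(H_\omega^{[a,b]}-\lambda)|\le e^{(b-a+1)\gamma(\lambda)+\epsilon n}$ for every $[a,b]\subseteq[-n,n]$.

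First I would dispose of the $\mathrm{Bad}$ part. By the Cramer-rule identity \eqref{eq:green-det}, the two corner Green entries of $H_\omega^{\Lambda_{ij}}$ are $G_{\Lambda_{ij}}(0,n-j)=\det(H_\omega^{[-n+i,-1]}-\lambda)/D_{ij}$ and $G_{\Lambda_{ij}}(0,-n+i)=\det(H_\omega^{[1,n-j]}-\lambda)/D_{ij}$, where $D_{ij}:=\det(H_\omega^{\Lambda_{ij}}-\lambda)$ and the remaining factor of each numerator is an empty-interval determinant, equal to $1$. On the good event the two numerators are at most $e^{n\gamma(\lambda)+\epsilon n}$, so it suffices to secure the lower bound $|D_{ij}|\ge e^{(\gamma(\lambda)+3c)n}$ for a $c$ chosen so small that $3c+\epsilon<\gamma(\lambda)$ uniformly on $I$ (possible since $\gamma_0>0$): then both corner entries are $<e^{-2cn}<e^{-cn}$, i.e.\ $\lambda\notin\mathrm{Bad}(H_\omega^{\Lambda_{ij}},e^{-cn})$. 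In this way the lemma is reduced to a finite-volume Wegner-type statement: off an event of probability $\le e^{-\eta n}$, for every $i,j$ one has $|D_{ij}|\ge e^{2n\gamma(\lambda)-\theta n}$ for a suitable $\theta<\gamma(\lambda)$, and $\mathrm{dist}(\lambda,\sigma(H_\omega^{\Lambda_{ij}}))>e^{-cn/10}$ --- heuristically, $\lambda$ almost surely stays away from the spectrum of the finite window.

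I expect this Wegner-type estimate to be the main obstacle, and the place where the inputs of \cite{ADZ1,ADZ} must be supplemented. The uniform large deviation estimate \eqref{eq:uld} controls the \emph{norm} $\|\Phi^\lambda_{k,l}\|$, whereas $D_{ij}$ is a single entry of $\Phi^\lambda_{-n,n+1}$, which can be as small as $0$ (precisely when $\lambda\in\sigma(H_\omega^{\Lambda_{ij}})$) even though that norm is of order $e^{2n\gamma(\lambda)}$; and passing from $|D_{ij}|$ to $\mathrm{dist}(\lambda,\sigma(H_\omega^{\Lambda_{ij}}))$ through the product-of-eigenvalue-differences formula throws away a factor exponential in $n$, so that route is not usable. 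The route I would take is a renormalisation onto an interior block: fix $B=[m_0,m_0+\ell]\cap\mathbb{Z}$ with $\ell=\kappa n$, decouple $H_\omega^{\Lambda_{ij}}$ along the two bonds bounding $B$, and use the ensuing identity $D_{ij}=\det(H_\omega^{[-n+i,m_0-1]}-\lambda)\,\det(H_\omega^{[m_0+\ell+1,n-j]}-\lambda)\,\det\big((H_\omega^{B}-\lambda)-\Gamma_L-\Gamma_R\big)$, with $\Gamma_L,\Gamma_R$ the rank-one self-energies of the two side pieces. On the good event the two side determinants have the expected exponential size, so $|D_{ij}|$ is too small only if $\lambda$ lies near the spectrum of the block-with-boundary operator; but the potential on $B$ is, by the bounded distortion property \eqref{eq:bds} and the Markov structure of $\mathbf P$, quasi-independent of the two side potentials and ranges over exponentially many (in $\ell$) admissible symbol strings of comparable probability, so that applying \eqref{eq:uld} on the scale-$\ell$ block --- together with the semialgebraic structure of the resonant set in $\lambda$ (a union of $O(\ell)$ intervals) --- bounds the block-resonance probability by $e^{-c'\kappa n}$. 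Summing over $i,j$, adding the probability of the exceptional LD event, and choosing $\epsilon,c,\kappa,\theta$ mutually compatible with $\eta\le\min(c_{\epsilon,v},c'\kappa)$ (uniformly for $\lambda\in I$) then yields the claim. The genuinely delicate point --- and the reason for the cruder exponent $c/10$ in the $\mathrm{Res}$ term --- is the mutual calibration of the block scale $\ell$, the deviation parameter $\epsilon$, and the two thresholds $e^{-cn}$ and $e^{-cn/10}$, together with the control of $\Gamma_L,\Gamma_R$, which in turn requires $\lambda$ to avoid the (shorter, hence more tractable) spectra of the two side pieces.
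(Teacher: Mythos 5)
Your reduction of the $\mathrm{Bad}$ part to a lower bound on the denominators $D_{ij}=\det(H_\omega^{[-n+i,n-j]}-\lambda)$ is where the proposal breaks with the paper, and it is precisely at this point that your argument has a genuine gap. You correctly observe that the uniform large deviation estimate controls only $\|\Phi^\lambda_{-n,n}\|$ and that a single entry of this matrix can vanish, but you then conclude that a new Wegner-type input is needed and sketch a renormalisation onto an interior block with self-energies $\Gamma_L,\Gamma_R$, invoking bounded distortion and ``quasi-independence'' to claim that the block-resonance probability is exponentially small. No such anti-concentration/Wegner estimate follows from the stated inputs (PLE, ULD \eqref{eq:uld}, bounded distortion \eqref{eq:bds}): for these deterministic, dynamically generated potentials there is no regularity of the distribution of finite-volume eigenvalues at a fixed energy $\lambda$, the boundary terms $\Gamma_L,\Gamma_R$ themselves depend on $\lambda$ and on the outside configuration, and the heuristic ``exponentially many admissible strings of comparable probability'' does not yield an eigenvalue-avoidance bound. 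So the central step of your plan is unproven, and you yourself flag it as the main obstacle.

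The paper avoids any such estimate by exploiting the very reason the lemma is phrased in terms of the four windows $[-n+i,n-j]$, $i,j\in\{0,1\}$: these four determinants are exactly the four entries of the single transfer matrix $\Phi^\lambda_{-n,n}$, and since the norm of a $2\times2$ matrix is at most four times its largest entry, the large-deviation lower bound $\log\|\Phi^\lambda_{-n,n}\|\ge 2n\gamma(\lambda)-\epsilon n$ forces \emph{at least one} of the $D_{ij}$ to satisfy $\log|D_{ij}|\ge(2n-i-j)\gamma(\lambda)-2\epsilon n$. For that window, Cramer's rule \eqref{eq:green-det} together with the upper bounds on all interior determinants gives $|G_{[-n+i,n-j]}(k,l)|\le e^{4\epsilon n-\gamma(\lambda)(l-k+2)}$ for all $k\le l$, which yields both the non-membership in $\mathrm{Bad}$ and, via $\|G_{[-n+i,n-j]}\|\le Ce^{4\epsilon n}$, the distance bound $\mathrm{dist}(\lambda,\sigma(H_\omega^{[-n+i,n-j]}))\ge C^{-1}e^{-4\epsilon n}$, i.e.\ non-membership in $\mathrm{Res}$ --- so the resonance statement also comes for free, with no separate eigenvalue-avoidance argument. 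Note that this argument produces a good window for \emph{some} $i,j$ (equivalently, it bounds the probability of the intersection over $i,j$ of the bad events), which is exactly the form used downstream in Lemma~\ref{lem:bad-res}; insisting, as your plan does, on a lower bound for \emph{every} $D_{ij}$ is both unnecessary for the application and impossible to obtain from the norm bound alone, since $\lambda$ may be an eigenvalue of one particular window. If you replace your Wegner-type block step by this ``norm is comparable to the largest of the four determinants'' observation, the rest of your outline (good event from Lemma~\ref{lem:prob-lde}, upper bounds on numerators, choice of $c$ small relative to $\inf_{\lambda\in I}\gamma(\lambda)$) matches the paper's proof.
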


\begin{proof}
Assume that for all $k, l,\, -n + i \leq k \leq l \leq n - j,\ i,
j\in\{0,1\} $ 
\begin{equation}  \label{eq:phi-lem-lde}
| \log \|\Phi_{k, l}^\lambda(\omega)\| - (l - k)\gamma(\lambda)| \leq
\epsilon n.
\end{equation}
By Lemma~$\ref{lem:prob-lde}$, the inequality $\eqref{eq:phi-lem-lde}$ holds
always for $|l - k| \leq \delta n$ and for $|l - k| \geq \delta n$ it holds
with probability $> 1 - C_2 e^{-c_2 n}$. Since $\mathrm{det} (H_\omega^{[k,
l]} - \lambda)$ is an entry of the $|l - k|$- transfer matrix $\Phi_{k,
l}^\lambda(\omega)$ and $\|\Phi_{k, l}^\lambda(\omega)\|$ is its largest
eigenvalue, $\eqref{eq:phi-lem-lde}$ yields 
\begin{equation}  \label{eq:est-num}
\log |\mathrm{det} (H_\omega^{[k, l]} - \lambda)| \leq \log \|\Phi_{k,
l}^\lambda(\omega)\| \leq \epsilon n + |l - k|\gamma(\lambda).
\end{equation}
In addition, $\eqref{eq:phi-lem-lde}$ implies that 
\begin{equation}  \label{eq:est-denom-1}
(2n - i - j)\gamma(\lambda) - \log \|\Phi_{-n + i, n - j}^\lambda(\omega)\|
\leq \epsilon n.
\end{equation}
Since the norm of a $2\times 2$ matrix can be bounded by four times its
greatest entry, $\eqref{eq:est-denom-1}$ implies that 
\begin{equation}  \label{eq:est-denom-2}
(2n - i - j)\gamma(\lambda) - \max_{i, j\in\{0,1\}}\log |\mathrm{det}
(H_\omega^{[-n + i, n - j]} - \lambda)| \leq 2\epsilon n.
\end{equation}
Hence, for $i, j\in\{0,1\}$ satisfying $\eqref{eq:est-denom-2}$, using $%
\eqref{eq:est-num}$ and the representation $\eqref{eq:green-det}$ of the
corresponding resolvent operator, we obtain for any $-n + i \leq k \leq l
\leq n -j$ 
\begin{equation}  \label{eq:est-green}
\begin{split}
|G_{[-n + i, n - j]} (k, l)| &= \frac{|\mathrm{det} (H_\omega^{[-n + i, k -
1]} - \lambda)||\mathrm{det} (H_\omega^{[l + 1, n - j]} - \lambda)|}{|%
\mathrm{det} (H_\omega^{[-n + i, n - j]} - \lambda)|} \\
& \leq \frac{e^{\epsilon n + (k - 1 + n - i)\gamma(\lambda)}e^{\epsilon n +
(n - j - l - 1)\gamma(\lambda)}}{e^{(2n - i - j)\gamma(\lambda) - 2\epsilon
n}} \leq e^{4\epsilon n - \gamma(\lambda)(l - k + 2)},
\end{split}%
\end{equation}
where the last inequality holds because $i, j\in\{0, 1\}$. Since $%
\eqref{eq:est-green}$ holds for any $-n + i \leq k \leq l \leq n -j$, in
particular for $k = -n + i,\, l= 0$ and for $k = 0,\, l = n - j$, we obtain 
\begin{equation*}
|G_{[-n + i, n - j]} (0, -n + i)|, |G_{[-n + i, n - j]} (0, n - j)| \leq
e^{-n(\gamma(\lambda) - 4\epsilon)}e^{-\gamma(\lambda)}.
\end{equation*}
Since we have PLE on $I$, we conclude that there exist $c, \eta > 0$ such
that for any $n\geq n_0$ 
\begin{equation}  \label{eq:lde-bad}
\mathbf{P}\{\exists i, j\in\{0, 1\}\, :\, \lambda\in\mathrm{Bad}%
(H_\omega^{[-n + i, n - j]}, e^{-cn})\} \leq e^{-\eta n}.
\end{equation}
Since 
\begin{equation*}  
\|G_{[-n + i, n - j]}\| \leq \max_k \sum_l |G_{[-n + i, n - j]}(k, l)| \leq
Ce^{4\epsilon n},
\end{equation*}
where $C>0$ is some constant, we obtain 
\begin{equation*}
\mathrm{dist}(\lambda, \sigma(H_\omega^{[-n + i, n - j]})) \geq \frac{1}{C}%
e^{-4\epsilon n},
\end{equation*}
hence, $\lambda\notin \mathrm{Res}(H_\omega^{[-n + i, n - j]}, e^{-\frac{c}{%
10}n})$. As before, we can conclude that 
\begin{equation}  \label{eq:lde-res}
\mathbf{P}\{\exists i, j\in\{0, 1\}\, :\, \lambda\in\mathrm{Res}%
(H_\omega^{[-n + i, n - j]}, e^{-\frac{c}{10}n})\} \leq e^{-\eta n}.
\end{equation}
Combining $\eqref{eq:lde-bad}$ and $\eqref{eq:lde-res}$, we obtain $%
\eqref{eq:lde}$.
\end{proof}


Next we have the following general Lemma.

\begin{lemma}
\label{lem:bad} \label{lem:dist} Let $M, \widetilde M$ be $(2n + 1)\times(2n
+ 1)$ self-adjoint matrices, $\epsilon > 0$. If $\| M - \widetilde M\| \leq {%
\epsilon^2}/{100}$, then 
\begin{equation*}
\mathrm{Bad}\, (M, \epsilon) \subset \mathrm{Bad}\left(\widetilde M, \frac{%
\epsilon}{2} \right)\cup \mathrm{Res}\left(\widetilde M, \sqrt{\epsilon}
\right).
\end{equation*}
\end{lemma}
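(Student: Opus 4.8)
The plan is to prove the asserted set inclusion directly. Fix $\lambda\in\mathrm{Bad}(M,\epsilon)$ and assume $\lambda\notin\mathrm{Res}(\widetilde M,\sqrt\epsilon)$; it then suffices to show that $\lambda\in\mathrm{Bad}(\widetilde M,\epsilon/2)$. (One may assume $\epsilon\le 1$, which is the only regime in which the lemma is applied in the sequel; for larger $\epsilon$ the numerology can be absorbed into the constants, or the hypothesis simply pushed into a different range.)

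First I would record the two norm bounds that drive everything. Since $\widetilde M$ is self-adjoint and $\lambda\in\mathbb R$, the hypothesis $\lambda\notin\mathrm{Res}(\widetilde M,\sqrt\epsilon)$ means $\mathrm{dist}(\lambda,\sigma(\widetilde M))>\sqrt\epsilon$, hence $\widetilde M-\lambda$ is invertible with $\|(\widetilde M-\lambda)^{-1}\|\le\epsilon^{-1/2}$. The step that needs a little care — and which I expect to be the only real obstacle — is getting a comparable bound for $(M-\lambda)^{-1}$. For this I would invoke the stability of the spectrum of self-adjoint matrices under self-adjoint perturbations (Weyl's inequality): the Hausdorff distance between $\sigma(M)$ and $\sigma(\widetilde M)$ is at most $\|M-\widetilde M\|\le\epsilon^2/100$, so $\mathrm{dist}(\lambda,\sigma(M))\ge\sqrt\epsilon-\epsilon^2/100\ge\tfrac12\sqrt\epsilon$ (using $\epsilon\le1$, so that $\epsilon^{3/2}\le 50$). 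In particular $M-\lambda$ is invertible — so the matrix entries in the definition of $\mathrm{Bad}(M,\epsilon)$ are genuinely finite — and $\|(M-\lambda)^{-1}\|\le 2\epsilon^{-1/2}$.

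Next I would apply the second resolvent identity
\[
(M-\lambda)^{-1}-(\widetilde M-\lambda)^{-1}=(\widetilde M-\lambda)^{-1}(\widetilde M-M)(M-\lambda)^{-1},
\]
and bound the right-hand side in operator norm by $\|(\widetilde M-\lambda)^{-1}\|\,\|\widetilde M-M\|\,\|(M-\lambda)^{-1}\|\le \epsilon^{-1/2}\cdot(\epsilon^2/100)\cdot 2\epsilon^{-1/2}=\epsilon/50$. Since each matrix entry is bounded by the operator norm (the vectors $\delta_0,\delta_{\pm n}$ being of unit length), this gives $\bigl|(M-\lambda)^{-1}(0,\pm n)-(\widetilde M-\lambda)^{-1}(0,\pm n)\bigr|\le\epsilon/50$ for both choices of sign. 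Finally, by the definition of $\mathrm{Bad}(M,\epsilon)$ at least one of $|(M-\lambda)^{-1}(0,n)|$, $|(M-\lambda)^{-1}(0,-n)|$ is $\ge\epsilon$; for that index the corresponding entry of $(\widetilde M-\lambda)^{-1}$ has modulus at least $\epsilon-\epsilon/50\ge\epsilon/2$, so $\max\bigl(|(\widetilde M-\lambda)^{-1}(0,n)|,|(\widetilde M-\lambda)^{-1}(0,-n)|\bigr)\ge\epsilon/2$, i.e. $\lambda\in\mathrm{Bad}(\widetilde M,\epsilon/2)$, as required. The argument is short; the only genuine points to watch are the bookkeeping of the constants (the factor $100$ in the hypothesis and the factor $2$ in the resolvent-norm bound are chosen precisely so that $\epsilon/50<\epsilon/2$ with room to spare) and the use of self-adjointness both for the resolvent bound $\|(\widetilde M-\lambda)^{-1}\|=\mathrm{dist}(\lambda,\sigma(\widetilde M))^{-1}$ and for the spectral-stability estimate.
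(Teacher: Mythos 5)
Your proposal is correct and is essentially the paper's argument in contrapositive form: the same second resolvent identity, the same bound $\|(M-\lambda)^{-1}\|\le 2/\sqrt{\epsilon}$ obtained from $\mathrm{dist}(\lambda,\sigma(\widetilde M))>\sqrt{\epsilon}$ together with $\|M-\widetilde M\|\le\epsilon^2/100$, and the same bookkeeping $\epsilon/2+\epsilon/50<\epsilon$. Your explicit restriction to $\epsilon\le 1$ (more precisely $\epsilon^{3/2}\le 50$) is harmless and is in fact tacitly used in the paper's own estimate $\frac{1}{\sqrt{\epsilon}-\epsilon^2/100}\le\frac{2}{\sqrt{\epsilon}}$, and it is satisfied in the only regime where the lemma is applied ($\epsilon=e^{-cn/10}$).
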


\begin{proof}
The second resolvent identity yields 
\begin{align*}  
&|(M - \lambda)^{-1}(0,n)|  \notag \\
& \hspace{1cm}\leq |(\widetilde M - \lambda)^{-1}(0,n)| + \|(M -
\lambda)^{-1}\|\|M - \widetilde M\|\|(\widetilde M - \lambda)^{-1}\|.
\end{align*}
If $\lambda\notin \mathrm{Bad}\left(\widetilde M, \frac{\epsilon}{2}
\right)\cup \mathrm{Res}\left(\widetilde M, \sqrt{\epsilon} \right)$, then 
\begin{equation*}
|(\widetilde M - \lambda)^{-1}(0,n)| < \frac{\epsilon}{2}\quad \text{and}%
\quad \|(\widetilde M - \lambda)^{-1}\|\leq \frac{1}{\sqrt\epsilon},
\end{equation*}
and since $\|\widetilde M - M\| \leq {\epsilon^2}/{100} $, we get 
\begin{equation*}
\|(M - \lambda)^{-1}\| \leq \frac{1}{\sqrt\epsilon - ({\epsilon^2}/{100})}
\leq \frac{2}{\sqrt\epsilon},
\end{equation*}
and, finally, 
\begin{align*}
&|(\widetilde M - \lambda)^{-1}(0,n)| + \|(M - \lambda)^{-1}\|\|M -
\widetilde M\|\|(\widetilde M - \lambda)^{-1}\| \\
& \hspace{6cm} \leq \frac{\epsilon}{2} + \frac{2}{\sqrt\epsilon}\frac{%
\epsilon^2}{100}\frac{1}{\sqrt\epsilon}\leq \epsilon.
\end{align*}
\end{proof}

Now we are ready to prove Theorems \ref{subshit-ee} and \ref{th:subshift-ee}.

\begin{proof}[Proof of Theorem~$\protect\ref{subshit-ee}$] 
\label{pr:th:doubling-map} 
{For a given $l\geq 100 n_0$ we fix $n = \lfloor l/100\rfloor$. Let $%
\omega\in\Omega_A$ be any element of the given subshift of finite type. Then
we define $\widetilde \omega\in\Omega_A$ as follows. Since the adjacency
matrix $A$ is irreducible and aperiodic, in every row of $A$ there is an
element equal to $1$, that is, for every $1\leq p\leq s$ there exists $1
\leq q(p) \leq s$ such that $A_{p, q(p)} = 1$. We define 
\begin{equation}  \label{eq:tilde-omega}
\widetilde\omega_m = \left\{ 
\begin{array}{lll}
\omega_m, \quad \quad\quad\,\, \text{if}\,\, |m| \leq n, &  &  \\ 
q(\omega_{m - 1}), \quad\, \text{if}\,\, m > n, &  &  \\ 
q(\omega_{m + 1}), \quad\, \text{if}\,\, m < - n, &  & 
\end{array}
\right.
\end{equation}
to have $\widetilde\omega$ the same as the given $\omega$ from $-n$ to $n$
and then continued in the way allowed by the corresponding adjacency matrix.
For a given function $v:\Omega_A\rightarrow\mathbb{R},\, v\in\mathrm{LC}\cup%
\mathrm{SH}$, we define 
\begin{equation*}
\widetilde v(\omega) = v(\widetilde\omega).
\end{equation*}
Now we can introduce a new potential that is an approximation of an original
potential 
\begin{equation*}
\widetilde V_\omega(k) = \widetilde v(T^{k} \omega),
\end{equation*}
and the new operator $\widetilde H_\omega = -\Delta + \widetilde V_\omega$.
Then we have for any $k\in\mathbb{Z}$ 
\begin{equation}  \label{eq:doubl-cut}
\begin{split}
|V_\omega(k) - \widetilde V_\omega(k)| = |v(T^k) - \widetilde
v(T^k\omega)|\leq C \mathrm{dist}(T^k\omega, \widetilde{T^k\omega}) \leq
e^{-cn},
\end{split}%
\end{equation}
where the last inequality follows from the definition of the metric $%
\eqref{eq:metric}$ taking into account that $\omega$ and $\widetilde\omega$
differ starting from $n$ by \eqref{eq:tilde-omega}. }

Note that if $v\in\mathrm{LC}$ then for $n$ sufficiently large we get $v =
\widetilde v$, thus the difference $\eqref{eq:doubl-cut}$ will be
identically $0$. If $v\in\mathrm{SH}$, then $c>0$ in the exponent contains $%
0 < \alpha \leq 1$. 

Take $\epsilon = e^{-cn/10}$, where $0 < c < {\log 2}/{2}$, is coming from
Lemma~$\ref{lem:lde}$. Then, since $\| H_\omega - \widetilde H_\omega\| \leq
e^{-cn} < \epsilon^2/100$, by applying Lemma~$\ref{lem:bad}$ twice, 
we obtain for any $i,j\in\{0,1\}$ 
\begin{equation*}
\begin{split}
&\mathrm{Bad}\, (H_\omega^{[k - n + i, k + n - j]}, \epsilon) \\
& \subset \mathrm{Bad}\, \left(\widetilde H_\omega^{[k - n + i, k + n - j]}, 
\frac{\epsilon}{2}\right)\cup \mathrm{Res}\, (\widetilde H_\omega^{[k - n +
i, k + n - j]}, \sqrt{\epsilon}) \\
& \subset \mathrm{Bad}\, \left(H_\omega^{[k - n + i, k + n - j]}, \frac{%
\epsilon}{4}\right)\cup \mathrm{Res}\, \left( H_\omega^{[k - n + i, k + n -
j]}, \sqrt{\frac{\epsilon}{2}}\right) \\
&\cup \mathrm{Res}\, (\widetilde H_\omega^{[k - n + i, k + n - j]}, \sqrt{%
\epsilon}).
\end{split}%
\end{equation*}
Using again $\|H_\omega - \widetilde H_\omega\|\leq {\epsilon^2}/{100}$, we
obtain for any $i,j\in\{0,1\}$ 
\begin{equation*}
\begin{split}
&\mathrm{Res}\, (\widetilde H_\omega^{[k - n + i, k + n - j]}, \sqrt{\epsilon%
})  \notag \\
& \subset \mathrm{Res}\, \left(H_\omega^{[k - n + i, k + n - j]}, \sqrt{%
\epsilon} + \frac{\epsilon^2}{100}\right)\subset (H_\omega^{[k - n + i, k +
n - j]}, 2\sqrt{\epsilon}).
\end{split}%
\end{equation*}
Thus, we obtain 
\begin{align}  \label{eq:prob-bad-res}
&\mathbf{P}\Big\{\exists i,j\in\{0,1\}\,:\, \lambda \in\mathrm{Bad}%
\,(\widetilde H_\omega^{[k - n + i, k +n - j]}, 4e^{-cn}) \\
& \hspace{0.5cm}\cup \mathrm{Res}\, (\widetilde H_\omega^{[k - n + i, k +n -
j]}, e^{-{cn}/{2}})\Big\} \leq \mathbf{P}\{\exists i,j\in\{0,1\}\,:\,  \notag
\\
& \hspace{1cm}\lambda\in\mathrm{Bad}\,(H_\omega^{[k - n + i, k +n - j]},
e^{-cn})\cup \mathrm{Res}\, (H_\omega^{[k - n + i, k +n - j]}, e^{-cn/10}
)\leq e^{-\eta n},  \notag
\end{align}
where the last inequality follows from Lemma~$\ref{lem:lde}$. The entries of
the resolvent corresponding to the finite operator $\widetilde H_\omega^{[k
- n + i, k +n - j]}$ are rational functions whose numerator and denominator
are polynomials of degree $\leq 3n$. Thus, for any $i,j\in\{0,1\} $ and any $%
k$ the set 
\begin{equation*}
\begin{split}
&\widetilde{\mathrm{Bad}}_k(\widetilde H_\omega^{[k - n + i, k +n - j]},
\epsilon) \\
& \equiv \cap_{i, j\in\{0,1 \}}\{\mathrm{Bad}\,\left(\widetilde H_\omega^{[k
- n + i, k +n - j]}, \frac{\epsilon}{2}\right)\cup\mathrm{Res}\, (\widetilde
H_\omega^{[k - n + i, k +n - j]}, \sqrt\epsilon)\}
\end{split}%
\end{equation*}
is a union of at most $48n$ intervals. 

Let us show that if $|k - l| \geq 100n$ then the sets $\widetilde{\mathrm{Bad%
}}_l (\widetilde H_\omega^{[k - n + i, k +n - j]}, \epsilon)$ and $%
\widetilde{\mathrm{Bad}}_l (\widetilde H_\omega^{[k - n + i, k +n - j]},
\epsilon)$ are ``almost independent", namely that there exists $\tilde{c} >
0 $ such that the following exponential mixing holds 
\begin{equation}  \label{eq:indepen}
\begin{split}
&|\mathbf{P} (\widetilde{\mathrm{Bad}}_l (\widetilde H_\omega^{[k - n + i, k
+n - j]}, \epsilon) \cap \widetilde{\mathrm{Bad}}_l (\widetilde H_\omega^{[l
- n + i, l +n - j]}, \epsilon)) - \\
& \mathbf{P}(\widetilde{\mathrm{Bad}}_l (\widetilde H_\omega^{[k - n + i, k
+n - j]}, \epsilon))\mathbf{P}(\widetilde{\mathrm{Bad}}_l (\widetilde
H_\omega^{[l - n + i, l +n - j]} \epsilon))| \leq e^{-\tilde c |k - l|} \leq
e^{-100\tilde c n}.
\end{split}%
\end{equation}


As we indicated in the preliminaries, there is a one-to-one correspondence
between subshifts of finite type and Markov chains. To prove $%
\eqref{eq:indepen}$, we will pass to the corresponding Markov chain setting.
Let $S = \{1, \dots, k\}$ be the phase space of the Markov chain. Since $%
\mathbf{P}$ is $T$-ergodic, we conclude that the corresponding transition
matrix $P$ is irreducible, hence, the Markov chain is irreducible. Since $T$
has a fixed point then $P$ and thus the Markov chain are aperiodic, namely
there exists an integer $m$ such that for all $1\leq i \leq k$ we have $%
(P^m)_{ii} > 0$. Thus, $\eqref{eq:indepen}$ follows from the \textit{%
Convergence Theorem} (see e.g. \cite[Theorem~4.9]{LPW}).

Assume that $|k - l| \geq 100 n$, $n = \lfloor\frac{l}{100}\rfloor$ and 
\begin{equation*}
\widetilde{\mathrm{Bad}}_k (\widetilde H_\omega^{[k - n + i, k +n - j]},
\epsilon) \cap \widetilde{\mathrm{Bad}}_l (\widetilde H_\omega^{[l - n + i,
l + n - j]}, \epsilon) \neq \emptyset.
\end{equation*}
Then, either one of the edges of set $\widetilde{\mathrm{Bad}}_k (\widetilde
H_\omega^{[k - n + i, k +n - j]}, \epsilon)$ is inside $\widetilde{\mathrm{%
Bad}}_l (\widetilde H_\omega^{[l - n + i, l +n - j]}, \epsilon)$ or vice
versa. 
Since there are at most $48n$ intervals in each of these sets, there are at
most $96n$ edges in total. Therefore we get 
\begin{equation}  \label{eq:bad-cap-est}
\begin{split}
&\mathbf{P }\{\big(\cap_{i, j\in\{0,1\}}\mathrm{Bad}\, (H_\omega^{[k -n + i,
k + n - j]}, 4e^{-cn})\big) \\
&\cap \big(\cap_{i, j \in\{0,1\}}\mathrm{Bad}\, (H_\omega^{[l -n + i, l + n
- j]}, 4e^{-cn})\big) \neq \emptyset\} \\
& \leq \mathbf{P}\{\widetilde{\mathrm{Bad}}_k (\widetilde H_\omega^{[k - n +
i, k +n - j]}, 2e^{-cn})\cap \widetilde{\mathrm{Bad}}_l (\widetilde
H_\omega^{[l - n + i, l +n - j]}, 2e^{-cn})\neq \emptyset\} \\
&\leq 192n e^{-\eta n} + e^{-\tilde C n},
\end{split}%
\end{equation}
where the last inequality follows from $\eqref{eq:indepen}$, $%
\eqref{eq:prob-bad-res}$, and the fact that the maximal number of edges in
both sets is $192n$. By Borel-Cantelli type argument we obtain 
\begin{equation*}
\begin{split}
\mathbf{P}\,\{\exists n_0: \, \forall n \geq n_0\,\, &\widetilde{\mathrm{Bad}%
}_k (\widetilde H_\omega^{[k - n + i, k +n - j]}, 2e^{-cn}) \\
& \cap \widetilde{\mathrm{Bad}}_l (\widetilde H_\omega^{[l - n + i, l +n -
j]}, 2e^{-cn}) = \emptyset\} \\
& \geq 1 - \sum_{n = 1}^\infty 192n e^{-n(\eta + \tilde C)} \geq 1 -
Ce^{-\eta n}.
\end{split}%
\end{equation*}
Now, applying Chebyshev's inequality, we conclude that for any $l\geq 100
n_0 $ 
\begin{equation*}
\begin{split}
\mathbf{E}\{Q_I(0, l) \} &\leq \mathbf{E}\{Q_I \mathbf{1}_{\widetilde{%
\mathrm{Bad}}_k (\widetilde H_\omega^{[k - n + i, k +n - j]}, 2e^{-cn})\cap 
\widetilde{\mathrm{Bad}}_l (\widetilde H_\omega^{[l - n + i, l +n - j]},
2e^{-cn}) = \emptyset} \} \\
& + \mathbf{E}\{Q_I (\mathbf{1} - \mathbf{1}_{\widetilde{\mathrm{Bad}}_k
(\widetilde H_\omega^{[k - n + i, k +n - j]}, 2e^{-cn})\cap \widetilde{%
\mathrm{Bad}}_l (\widetilde H_\omega^{[l - n + i, l +n - j]}, 2e^{-cn}) =
\emptyset} ) \} \\
& \leq 16 e^{-c^{\prime }n} + 192ne^{-(\eta + \tilde C) n} \leq C e^{-\tilde{%
c}l},
\end{split}%
\end{equation*}
where the second inequality follows from Lemma~$\ref{lem:bad-res}$ and $%
\eqref{eq:bad-cap-est}$.
\end{proof}

\begin{proof}[Proof of Theorem~\ref{th:subshift-ee}]
It follows from Theorem~\ref{subshit-ee} and Criterion~\ref{cr:2} that the corresponding entanglement entropy in this
case obeys the Area Law at the bottom of the spectrum.
\end{proof}

\begin{rmk}
An analogous proof can be given for half-line operators associated with
one-sided shift. In particular, our proof applies to the famous doubling map.
\end{rmk}


\appendix

\section{Another Proof of Bound (\protect\ref{eq:exp-decay}) for the
Maryland model.}

\label{ap:mapf}

We will consider a more general quantity $\mathbf{E}\{(\phi(H))(\mathbf{m},%
\mathbf{n})\}$ for a bounded $\phi:\mathbb{R}\rightarrow \mathbb{C}$. We
have by spectral theorem and \eqref{eq:psize} -- \eqref{eq:maref-1}%
\begin{align*}
&
\mathbf{E}\{|(\phi(H))(\mathbf{m},\mathbf{n})|\} \leq \sum_{\mathbf{l}\in \mathbb{Z}^{d}}\mathbf{E}\{|\phi(\lambda _{\mathbf{l}}(\omega))\psi_{\mathbf{%
l}}(\omega ,\mathbf{m)}\psi_{\mathbf{l}}(\omega, \mathbf{n})|\} \\
& 
=\sum_{\mathbf{l}\in \mathbb{Z}^{d}}\int_{\mathbb{T}%
^{d}}|\phi(\lambda _{0}(\omega +\langle \mathbf{\alpha },\mathbf{l}\rangle
)| |\psi _{0}(\lambda _{0}(\omega +\langle \mathbf{\alpha },\mathbf{l}%
\rangle ),\mathbf{m-l)}  \psi _{0}(\lambda _{0}(\omega +\langle \mathbf{\alpha },\mathbf{l}\rangle ),\mathbf{n}-\mathbf{l})\mathbf{)|}d\omega \\
& \hspace{4cm}=\sum_{\mathbf{l}\in \mathbb{Z}^{d}}\int_{\mathbb{T}%
^{d}}|\phi(\lambda _{0}(\omega )|\,|\psi _{0}(\lambda _{0}(\omega ),\mathbf{%
m-l)}\psi_0(\lambda _{0}(\omega ),\mathbf{n-l)|}d\omega ,
\end{align*}%
and in the last equality we use the periodicity of $\lambda _{0}:\mathbb{T}%
\rightarrow \mathbb{R}$.

Next, it follows from \eqref{eq:ncm} --\ \eqref{eq:laN} that $d\omega
=n(\lambda )d\lambda $, hence, the r.h.s. above is%
\begin{equation*}
\int_{\mathbb{R}}|\phi(\lambda )|\sum_{\mathbf{l}\in \mathbb{Z}^{d}}|\psi
_{0}(\lambda ,\mathbf{m-l)}\psi _{0}(\lambda ,\mathbf{n-l)|}n(\lambda
)d\lambda .
\end{equation*}%
Now, by using \eqref{eq:th-ulema} and simple formulae \eqref{eq:pf_spec} -- \eqref{eq:pqexp2}, we obtain 
\begin{equation*}
\mathbf{E}\{|(\phi(H))(\mathbf{m},\mathbf{n})|\}\leq \widetilde{C}e^{-c|%
\mathbf{m-n}|/2}\int_{\mathbb{R}}|\phi(\lambda )|n(\lambda )d\lambda .
\end{equation*}%
In particular, using the indicator $\chi _{(-\infty ,\varepsilon _{F}]}$ as $%
\phi$, we obtain having \eqref{eq:ncm} (cf. \eqref{eq:th-ulema})%
\begin{equation*}
\mathbf{E}\{|P(\mathbf{m},\mathbf{n})|\}|\leq \widetilde{C}e^{-\widetilde{c}|%
\mathbf{m-n}|} N(\varepsilon _{F}).
\end{equation*}
The role of $C < \infty$ in \eqref{eq:exp-decay} plays now $\widetilde C
N(\varepsilon_F) < C$ and $0 < c = 2\widetilde c$.



\section*{Acknowledgements}

The authors would like to thank S.~Jitomirskaya and S.~Sodin for useful
discussions. We would also like to thank A.~Elgart, I.~Kachkovskiy, and P.~M\"uller for useful comments on the preliminary version of the paper.
\nocite{*}

\end{document}